\tikzset{
	->, 
	node distance=2cm, 
	initial text=$ $, 
}
\theoremstyle{plain}\newtheorem*{thmA}{Theorem A}
\theoremstyle{plain}\newtheorem*{thmB}{Theorem B}
\theoremstyle{plain}\newtheorem*{thmD}{Theorem C}
\newcommand{\cal}[1]{\mathcal{#1}}
\def\modd#1#2{#1\ \mbox{\rm (mod}\ #2\mbox{\rm )}}
\def\Z{\mathbb{Z}}
\def\R{\mathbb{R}}
\def\N{\mathbb{N}}
\def\Q{\mathbb{Q}}
\theoremstyle{defC}
\newtheorem{factC}[thm]{Fact}
\def\SigmaH{\Sigma_{\#}}
\def \Th {{\rm {FO}}}
\def \Irr {{\rm {\textbf{Irr}}}}
\def\Qu{\Irr_{quad}}
\lstdefinelanguage{pecan}{
	keywords=[1]{forall, exists, max, min, sup, inf, are, is, if, then, match, with, case, end, let, be, in, else, iff},
	keywordstyle=[1]\color{blue}\bfseries,
	keywords=[2]{false, true, sometimes},
	commentstyle=\color{CadetBlue}\textit,
	stringstyle=\color{ForestGreen}, 
	keywordstyle=[2]\color{orange}\bfseries,
	keywords=[3]{assert_prop,Structure,defining,Theorem,Prove,Example,Alias,Restrict,Define,Display,Execute,load,shuffle,import,save_aut,save_aut_img,that,context,end_context,forget,shuffle,shuffle_or,using,of},
	keywordstyle=[3]\color{teal}\bfseries,
	keywords=[4]{@annotation,@postprocess,@no_simplify,@simplify,@simplify_states,@simplify_edges},
	keywordstyle=[4]\color{purple}\bfseries,
	literate=*%
	    {\#}{{{\color{teal}\bfseries\#}}}1
	    {+}{{{\color{red}+~}}}1
	    {*}{{{\color{red}$\cdot$~}}}1
	    {/}{{{\color{red}/~}}}1
	    {-}{{{\color{red}-~}}}1
        {:=}{{{\color{red}:=~}}}1
        {..}{{{\color{red}..~}}}1
        {\{}{{{\color{red}\{}}}1
        {\}}{{{\color{red}\}}}}1
        {|}{{{$\color{red} \lor~$}}}1
        {*}{{{\color{red}*~}}}1
        {:}{{{\color{red}:~}}}1
        {>}{{{\color{red}>~}}}1
        {<}{{{\color{red}<~}}}1
        {<=>}{{{$\color{red}\Leftrightarrow~$}}}1
        {.}{{{\color{red}.~}}}1
        {&}{{{$\color{red} \land~$}}}1
        {!}{{{$\color{red}\lnot~$}}}1
        {!=}{{{$\color{red} \neq$}}}1
        {=}{{{\color{red}=~}}}1
        {exists }{{{$\color{red}\exists$}}}1
        {forall }{{{$\color{red}\forall$}}}1,
    sensitive=false, 
    morecomment=[l]{//}, 
    morecomment=[s]{/*}{*/}, 
    morestring=[b]", 
    showstringspaces=false
}
\newcommand{\pecaninline}[1]{\lstinline[language=pecan,basicstyle=\small\ttfamily,mathescape=true]{#1}}
\begin{document}

\title[Decidability for Sturmian words]{Decidability for Sturmian words}
\author[P. Hieronymi]{Philipp Hieronymi}[a]
\address
{Mathematisches Institut\\
Universität Bonn\\
Endenicher Allee 60\\
D-53115 Bonn\\
Germany}
\email{hieronymi@math.uni-bonn.de}
\urladdr{https://www.math.uni-bonn.de/people/phierony/}

\author[D. Ma]{Dun Ma}[b]
\address{Department of Computer Science and Engineering\\
University of California, San Diego\\
9500 Gilman Drive\\
La Jolla, CA 92093-0404\\
USA
}
\email{d4ma@ucsd.edu}

\author[R. Oei]{Reed Oei}[c]
\address{Department of Mathematics\\University of Illinois at Urbana-Champaign\\1409 West Green Street\\Urbana, IL 61801\\ USA}

\author[L. Schaeffer]{Luke Schaeffer}[d]
\address{
Institute for Quantum Computing\\
University of Waterloo\\
Waterloo, Ontario\\
N2L 3G1\\
Canada
}
\email{lrschaeffer@gmail.com}

\author[C. Schulz]{Chris Schulz}[e,f]
\address{Department of Mathematics\\University of Illinois at Urbana-Champaign\\1409 West Green Street\\Urbana, IL 61801\\ USA}
\address{Department of Pure Mathematics\\
200 University Avenue West\\
Waterloo, Ontario\\
N2L 3G1\\
Canada
}
\email{chris.schulz@uwaterloo.ca}

\author[J. Shallit]{Jeffrey Shallit}[g]
\address{School of Computer Science\\
University of Waterloo\\ 
Waterloo, Ontario\\
N2L 3G1\\Canada
}
\email{shallit@uwaterloo.ca}
\urladdr{https://cs.uwaterloo.ca/\textasciitilde shallit/}

\dedicatory{In Memory of Reed Oei (1999-2022)} 

 \maketitle

\begin{abstract} We show that the first-order theory of Sturmian words over Presburger arithmetic is decidable.
Using a general adder recognizing addition in Ostrowski numeration systems by Baranwal, Schaeffer and Shallit, we prove that the first-order expansions of Presburger arithmetic by a single Sturmian word are uniformly $\omega$-automatic, and then deduce the decidability of the theory of the class of such structures.
Using an implementation of this decision algorithm called Pecan, we automatically reprove classical theorems about Sturmian words in seconds, and are able to obtain new results about antisquares and antipalindromes in characteristic Sturmian words. 
\end{abstract}
\section{Introduction}

It has been known for some time that, for
certain infinite words ${\bf c} = c_0 c_1 c_2 \cdots$ over
a finite alphabet $\Sigma$, the
first-order logical theory ${\rm FO}(\N, <, +, 0, 1, n \mapsto c_n)$
is decidable.   In the case where ${\bf c}$ is a
{\bf $k$-automatic sequence} for $k \geq 2$, this is due
to B\"uchi \cite{Buechi}, although his original proof was flawed.
The correct statement appears, for example, in Bruy\`ere et al.~\cite{BHMV,BHMV-Correction}.
Although the worst-case running time of the decision procedure
is truly
formidable (and non-elementary), it turns out that an implementation
can, in many cases, decide the truth of interesting and
nontrivial first-order statements
about automatic sequences in a reasonable length of time.
Thus, one can easily reprove known results, and obtain new ones,
merely by translating the desired result into the appropriate first-order
statement $\varphi$ and running the decision procedure on $\varphi$.
For an example of the kinds of things that can be proved, see Go\v{c}, Henshall, and Shallit
\cite{GHS}.\newline

\noindent More generally, the same ideas can be used for other kinds of sequences
defined in terms of some numeration system for the natural numbers.
Such a numeration system provides a unique (up to leading zeros) representation
for $n$ as a sum of terms of some other sequence $(s_n)_{n \geq 1}$.
If the sequence ${\bf c} = c_0 c_1 c_2 \cdots$
can be computed by a finite automaton taking the
representation of $n$ as input, and if further, the addition of represented
integers is computable by another finite automaton, then
once again the first-order theory ${\rm FO}(\N, <, +, 0, 1, n \mapsto c_n)$ is decidable.  This is the case, for example, for the
so-called Fibonacci-automatic sequences in Mousavi, Schaeffer, and Shallit
\cite{MSS1}
and the Pell-automatic sequences in Baranwal and Shallit
\cite{BS1}.\newline

\noindent More generally, the same kinds of ideas can handle Sturmian words.
For quadratic numbers, this was first observed by
Hieronymi and Terry \cite{HT}.   In this
paper we extend those results to all Sturmian characteristic words.
Thus, the first-order theory of Sturmian characteristic words is
decidable.  As a result, many classical theorems about Sturmian
words, which previously required intricate proofs, can be proved
automatically by a theorem-prover in a few seconds.  As examples,
in Section 7 we reprove basic results such as the balanced property and
the subword complexity of these words.\newline

\noindent 
Let $\alpha,\rho\in \R$ be such that $\alpha$ is irrational. The \textbf{Sturmian word with slope $\alpha$ and intercept $\rho$} is the infinite $\{0,1\}$-word  ${\bf c_{\alpha,\rho}} = c_{\alpha,\rho}(1)c_{\alpha,\rho}(2)\cdots $ such that for all $n\in \N$
\[
c_{\alpha,\rho}(n) = \lfloor \alpha (n+1) + \rho \rfloor - \lfloor \alpha n + \rho \rfloor - \lfloor \alpha \rfloor.
\]
When $\rho =0$, we call ${\bf c_{\alpha,0}}$ the \textbf{characteristic word} of slope $\alpha$. Sturmian words and their combinatorical properties have been studied extensively. We refer the reader to the survey by Berstel and Séébold \cite[Chapter 2]{zbMATH01737190}. Note that ${\bf c_{\alpha,\rho}}$ can be understood as a function from $\N$ to $\{0,1\}$. Let $\mathcal{L}$ be the signature\footnote{In model theory this is usually called (or identified with) the language of the theory. However, here this conflicts with the convention of calling an arbitrary set of words a language.} of the first-order logical theory ${\rm FO}(\N,<,+,0,1)$ and denote by $\mathcal{L}_c$ the signature obtained by adding a single unary function symbol $c$ to $\mathcal{L}$. Now let $\cal N_{\alpha,\rho}$ be the $\mathcal{L}_c$-structure $(\N,<,+,0,1,n \mapsto c_{\alpha,\rho}(n))$, where  we expand Presburger arithmetic by a Sturmian word interpreted as a unary function. The main result of this paper is the decidability of the theory of the collection of such expansions. Set $\Irr := (0,1)\setminus \Q$. Let $\mathcal{K}_{\rm sturmian} := \{ \mathcal{N}_{\alpha,\rho} \ : \ \alpha \in \Irr, \rho \in \R\}$, and let $\mathcal{K}_{\rm char} := \{ \mathcal{N}_{\alpha,0} \ : \ \alpha \in \Irr\}$.
\begin{thmA}
The first-order logical theories\footnote{Given a signature $\mathcal{L}_0$ and a class $\cal K$ of $\mathcal{L}_0$-structures, the first-order logical theory  of $\cal K$ is defined as the set of all $\mathcal{L}_0$-sentences that are true in all structures in $\cal K$. This theory is denoted by ${\rm FO}(\cal{K})$.} ${\rm FO}(\mathcal{K}_{\rm sturmian})$ and ${\rm FO}(\mathcal{K}_{\rm char})$ are decidable.
\end{thmA}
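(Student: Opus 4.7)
The strategy, as signalled in the abstract, is to exhibit the family $\{\mathcal{N}_{\alpha,\rho}\}$ as a uniformly $\omega$-automatic family and then invoke the known decidability of first-order theories of $\omega$-automatic structures. Concretely, I would fix for each $\alpha \in \Irr$ the Ostrowski-$\alpha$ numeration system, which represents each $n \in \N$ as a finite word whose digit alphabet is controlled by the partial quotients of the continued fraction of $\alpha$; real numbers $\rho \in \R$ admit an analogous (infinite) Ostrowski-$\alpha$ expansion. By the general adder theorem of Baranwal, Schaeffer and Shallit, addition in the Ostrowski-$\alpha$ system is recognized by a single $\omega$-automaton that reads the continued fraction of $\alpha$ in parallel with the two summands, so the Presburger reduct of $\mathcal{N}_{\alpha,\rho}$ is uniformly $\omega$-automatic in $\alpha$.

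The next step is to construct an $\omega$-automaton that, given the continued fraction of $\alpha$, an expansion of $\rho$, and the Ostrowski-$\alpha$ representation of $n$, outputs $c_{\alpha,\rho}(n)$. Since $c_{\alpha,\rho}(n)=\lfloor \alpha(n+1)+\rho\rfloor - \lfloor \alpha n+\rho\rfloor - \lfloor \alpha\rfloor$, it suffices to build automata computing the integer part of $\alpha n + \rho$ from the Ostrowski data; one expects this to reduce to digit-by-digit manipulations on Ostrowski expansions, generalising the constructions of Hieronymi and Terry in the quadratic case. Composing this with the adder and the obvious automata for $<$, $0$, $1$ and equality would yield a uniform $\omega$-automatic presentation of the entire class $\mathcal{K}_{\rm sturmian}$, with the pair $(\alpha,\rho)$ playing the role of an external parameter consumed along with the first-order variables.

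With such a presentation, the decision procedure for $\Th(\mathcal{K}_{\rm sturmian})$ is a standard induction on formula structure: given an $\mathcal{L}_c$-sentence $\varphi$, recursively build an $\omega$-automaton $\mathcal{A}_\varphi$ whose input is an encoding of $(\alpha,\rho)$ and which accepts exactly those parameters for which $\mathcal{N}_{\alpha,\rho} \models \varphi$, using closure of $\omega$-automatic relations under Boolean combinations and projection. Then $\varphi \in \Th(\mathcal{K}_{\rm sturmian})$ iff $\mathcal{A}_\varphi$ accepts every valid encoding of a pair $(\alpha,\rho) \in \Irr \times \R$, which is decidable because universality and emptiness of B\"uchi automata are. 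Restricting the parameter input to $\rho=0$ (a recognisable condition) yields the analogous conclusion for $\mathcal{K}_{\rm char}$.

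The principal obstacle is precisely step two: establishing uniformity in $\alpha$. In the quadratic setting of Hieronymi-Terry the continued fraction of $\alpha$ is eventually periodic and can therefore be absorbed into the state set of an automaton over $\N$, but in the present generality the continued fraction is an arbitrary infinite sequence of positive integers that the automaton must consume online while correctly tracking enough arithmetic information about $\alpha n + \rho$ to emit the right bit of $c_{\alpha,\rho}(n)$. Ensuring that the resulting device has only finitely many states, rather than state growing with the partial quotients, is exactly where the Baranwal-Schaeffer-Shallit adder becomes indispensable, and I expect this to be the heart of the paper's technical work.
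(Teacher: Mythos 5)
Your top-level strategy (uniform $\omega$-automaticity plus closure under first-order operations plus decidability of B\"uchi universality/emptiness) is the same as the paper's, but you organize the reduction differently, and the difference matters at exactly the point you flag as the obstacle. The paper does \emph{not} build a bespoke automaton that outputs the bit $c_{\alpha,\rho}(n)$ from Ostrowski data. Instead it first proves (Theorem B/C via Theorem \ref{b_ralpha_2half}) that the structures $\mathcal{R}_{\alpha}=(\R,<,+,\Z,\alpha\Z)$ form a uniform family of $\omega$-regular structures, and then observes that $\mathcal{R}_{\alpha}$ \emph{first-order defines} the floor function, the singleton $\{\alpha\}$, and hence the set $\{(\rho,\alpha n, c_{\alpha,\rho}(n))\}$, uniformly in $\alpha$. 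An isomorphic copy of $\mathcal{N}_{\alpha,\rho}$ is therefore interpretable in $\mathcal{R}_{\alpha}$, so each $\mathcal{L}_c$-sentence $\varphi$ translates to an $\mathcal{L}_{m,a}$-formula $\psi(x)$ with $\varphi\in\Th(\mathcal{K}_{\rm sturmian})$ iff $\forall x\,\psi(x)\in\Th(\mathcal{K})$ and $\varphi\in\Th(\mathcal{K}_{\rm char})$ iff $\psi(0)\in\Th(\mathcal{K})$; the automaton for $\varphi$ then falls out of the general closure fact for uniform families (Fact \ref{fact:uniformregulardef}) rather than from a hand-built transducer for $\lfloor\alpha n+\rho\rfloor$. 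This detour is precisely what discharges the obligation you leave open with ``one expects this to reduce to digit-by-digit manipulations.''

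Two points in your write-up are genuinely incomplete rather than merely reorganized. First, the Baranwal--Schaeffer--Shallit adder only recognizes addition of \emph{natural numbers} in Ostrowski representation; to handle the real intercept $\rho$ (and, in the paper's setting, real addition in $\mathcal{R}_{\alpha}$) one must extend it to addition modulo $1$ on $I_{\alpha}$ for infinite Ostrowski expansions, which the paper does by a K\"onig's-lemma limit argument (Fact \ref{limitlanguage} and Lemma \ref{oplus_regular}), and then correct the carries to recover genuine addition on $[-\alpha,\infty)$ together with the predicates $\N$ and $\alpha\N$ (Lemmas \ref{lem:ostcarry}--\ref{lem:ostsmz} and \ref{b_ralpha_half}). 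Your proposal asserts that $\rho$ ``admits an analogous (infinite) Ostrowski-$\alpha$ expansion'' but never says how addition involving such expansions is recognized. Second, a minor bookkeeping point: trailing zeros make $\#$-binary codings of a given $\alpha$ non-unique, so ``accepts every valid encoding'' must be read as containment of one $\omega$-regular language in another, and existential projection over $\alpha$ requires the zero-closure construction (Lemma \ref{zero_closure}); this is routine but cannot be skipped. With those two gaps filled, your route would work and would be essentially a linearization of the paper's argument.
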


\noindent So far, decidability was only known for individual ${\rm FO}(\mathcal N_{\alpha,\rho})$, and only for very particular $\alpha$. By \cite{HT} the logical theory ${\rm FO}(\mathcal N_{\alpha,0})$ is decidable when $\alpha$ is a quadratic irrational\footnote{A real number is \textbf{quadratic} if it is the root of a quadratic equation with integer coefficients.}. Moreover, if the continued fraction of $\alpha$ is not computable, it can be seen rather easily that  ${\rm FO}(\mathcal N_{\alpha,0})$ is undecidable. \newline		

\noindent Theorem A is rather powerful, as it allows to automatically decide combinatorial statements about all Sturmian words. Consider the $\mathcal{L}_c$-sentence $\varphi$
\[
\forall p \ (p>0) \rightarrow \Big(\forall i \ \exists j \ j > i \wedge c(j) \neq c(j+p)\Big). 
\]
We observe that $\mathcal N_{\alpha,\rho} \models \varphi$ if and only if ${\bf c_{\alpha,\rho}}$ is not eventually periodic. Thus the decision procedure from Theorem A allows us to check that no Sturmian word is eventually periodic. Of course, it is well-known that no Sturmian word is eventually periodic, but this example indicates potential applications of Theorem A. We outline some of these in Section \ref{section:pecan}.\newline

\noindent We not only prove Theorem A, but instead establish a vastly more general theorem of which Theorem A is an immediate corollary. To state this general result, let $\mathcal L_m$ be the signature of ${\rm FO}(\R,<,+,\Z)$; that is, the signature of 
${\rm FO}(\R,<,+)$ together with a unary predicate for $\Z$. Let $\mathcal L_{m,a}$ be the extension of $\mathcal L_m$ by another unary predicate. For $\alpha \in \R_{>0}$, we let $\mathcal{R}_\alpha$ denote $\mathcal L_{m,a}$-structure $(\mathbb{R}, <, +, \mathbb{Z}, \alpha \mathbb{Z})$. When $\alpha \in \Q$, it has long been known that ${\rm FO}(\mathcal R_{\alpha})$ is decidable (arguably due to Skolem \cite{skolem}). Recently this result was extended to quadratic numbers.

\begin{fact}[{Hieronymi \cite[Theorem A]{H}}] \label{fact:quad} Let $\alpha$ be a quadratic irrational. Then ${\rm FO}(\cal R_{\alpha})$ is decidable.
\end{fact}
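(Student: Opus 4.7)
The plan is to prove decidability by exhibiting a presentation of $\mathcal{R}_\alpha$ by B\"uchi automata on $\omega$-words over a finite alphabet, and then to appeal to the general theorem that first-order theories of $\omega$-automatic structures are decidable. The natural framework is the Ostrowski numeration system associated to the continued fraction expansion of $\alpha$. Because $\alpha$ is a quadratic irrational, Lagrange's theorem tells us that its continued fraction $\alpha = [a_0; a_1, a_2, \dots]$ is eventually periodic; in particular the partial quotients $a_i$ are bounded, so the digits of the Ostrowski representation lie in a finite alphabet---exactly what a finite automaton needs in order to read them.

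First I would set up the representation. Each real number $x$ in a fundamental interval is written as $x = \sum_{i \ge 0} d_i \beta_i$, where $\beta_i = q_i \alpha - p_i$ are the convergent errors, and the digits $d_i$ satisfy the standard admissibility conditions $0 \le d_i \le a_{i+1}$ together with the rule forbidding certain consecutive maximal digits. The encoding is extended to all of $\R$ by recording sign and integer part separately. The set of admissible digit sequences is $\omega$-regular because the sequence $(a_i)$ is eventually periodic.

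The central and hardest step is to show that the graph of addition $\{ (x,y,z) \in \R^3 : x+y = z \}$ is recognized synchronously by a B\"uchi automaton over the product alphabet. This requires implementing digit-wise addition together with a carry mechanism tailored to the three-term recurrence $\beta_{i+1} = -a_{i+1} \beta_i - \beta_{i-1}$ linking adjacent place values. The transducer's state space must track a bounded amount of carry information and the output must remain admissible; eventual periodicity of the partial quotients is what keeps the state space finite. Once addition is available, the predicate $\Z$ becomes the set of reals whose admissible expansion has only finitely many nonzero digits beyond the integer-part boundary, and $\alpha\Z$ reduces, via $\alpha \cdot 1 \in \alpha\Z$ together with closure under addition, to essentially the same condition up to an index shift. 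Both are $\omega$-regular.

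With the representation, addition, $\Z$, and $\alpha\Z$ all realized by B\"uchi automata, $\mathcal{R}_\alpha$ is $\omega$-automatic, so its first-order theory is decidable by closure of $\omega$-regular languages under Boolean operations and projection, together with decidability of B\"uchi emptiness. The main obstacle is unambiguously the addition automaton: one has to handle an infinite carry that can in principle propagate in both directions through an admissible digit sequence while respecting the admissibility constraints, and only the boundedness coming from periodicity of the $a_i$ prevents the construction from blowing up.
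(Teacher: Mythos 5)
Your overall architecture is the right one and is essentially that of the cited source \cite{H}, which is where this fact actually comes from: represent $\mathcal{R}_\alpha$ in the $\alpha$-Ostrowski numeration system, use Lagrange's theorem (eventually periodic continued fraction, hence bounded partial quotients and a finite digit alphabet), build a B\"uchi adder, and conclude decidability from $\omega$-automaticity. The present paper does not reprove this pointwise; it derives the fact as a corollary of the uniform construction (Theorem~\ref{b_ralpha_2half} together with Lemma~\ref{continued_fraction_quadratic_reg}, or Fact~\ref{fact:parameter} applied to the eventually periodic code word of $\alpha$), where the continued fraction is itself part of the automaton's input in $\#$-binary coding. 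That is only a difference of packaging: the uniform version buys Theorems~B and~C, while your fixed-$\alpha$ version suffices for this fact.

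There is, however, a genuine inaccuracy in your treatment of the two predicates, and it sits in one of the two nontrivial steps. If the integer part is recorded separately, then $\Z$ is the set whose fractional Ostrowski expansion is \emph{identically zero} (since $0\in I_\alpha$), not the set with ``finitely many nonzero digits''; the latter condition characterizes the fractional parts of $\alpha\N$, i.e.\ membership in $\Z+\alpha\N$. More seriously, $\alpha\Z$ is not obtained from $\Z$ ``up to an index shift'': to cut $\alpha\N$ out of $\Z+\alpha\N$ one must couple the integer part to the fractional part via the identity $\alpha\sum_k b_{k+1}q_k-\sum_k b_{k+1}\beta_k=\sum_k b_{k+1}p_k\in\Z$ (Fact~\ref{z_o_correspondence}). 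This is exactly why the proof encodes the integer part of a real by a finite Ostrowski word $s$ of value $g_v(s)=\alpha Z_v(s)-O_v(s)$ and recognizes $\alpha\N$ essentially as the diagonal $\{(s,s)\}$ together with bounded carry corrections (the sets $U_1,U_2$ in the proof of Lemma~\ref{b_ralpha_half}). As written, your automaton for $\alpha\Z$ would accept all of $\Z+\alpha\Z$. This is repairable, but it is a missing idea rather than a routine detail. (Minor: the convergent recurrence gives $\beta_{i+1}=a_{i+1}\beta_i+\beta_{i-1}$, with the alternation of sign coming from $\beta_{-1}=-1$, not $\beta_{i+1}=-a_{i+1}\beta_i-\beta_{i-1}$.)
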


\noindent See also Hieronymi, Nguyen and Pak \cite{HNP} for a computational complexity analysis of this decision procedure. The proof of Fact 1.1 establishes that if $\alpha$ is quadratic, then $\cal R_{\alpha}$ is an $\omega$-automatic structure; that is, it can be represented by B\"uchi automata. Since every $\omega$-automatic structure has a decidable first-order theory, so does $\cal R_{\alpha}$. See Khoussainov and Minnes \cite{KMinnes} for a survey on $\omega$-automatic structures.  
The key insight needed to prove $\omega$-automaticity of $\cal R_{\alpha}$ is that addition in the Ostrowski-numeration system based on $\alpha$ is recognizable by a B\"uchi automaton when $\alpha$ is quadratic. See Section \ref{section:prelim} for a definition of Ostrowski numeration systems.\newline

\noindent As observed in \cite{H}, there are  examples of non-quadratic irrationals $\alpha$ such that $\cal R_{\alpha}$ has an undecidable theory and hence is not $\omega$-automatic. However, in this paper we show that the common theory of the $\cal R_{\alpha}$ is decidable. Let $\mathcal{K}$ denote the class of $\mathcal{L}_{m,a}$-structures $
\{ \mathcal R_{\alpha} \ : \ \alpha \in \Irr\}$.

\begin{thmB}
The theory ${\rm FO}(\mathcal{K})$ is decidable.
\end{thmB}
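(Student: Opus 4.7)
The plan is to show that the class $\mathcal{K}$ is \emph{uniformly $\omega$-automatic}: there is a family of Büchi automata, operating on tuples of $\omega$-words with one distinguished track reserved for the continued fraction expansion of $\alpha$, that simultaneously represents all the structures $\mathcal{R}_\alpha$ as $\alpha$ ranges over $\Irr$. Once this is established, Theorem B follows by a standard compilation: for any $\mathcal{L}_{m,a}$-sentence $\varphi$, one can build a Büchi automaton $\mathcal{A}_\varphi$ whose language is exactly the set of continued fractions of those $\alpha$ for which $\mathcal{R}_\alpha \models \varphi$. Then $\varphi \in \mathrm{FO}(\mathcal{K})$ iff $\mathcal{A}_\varphi$ accepts every $\omega$-word that codes the continued fraction of some $\alpha \in \Irr$. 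Since the set of valid continued fractions is itself Büchi-recognizable, this reduces to a universality problem for Büchi automata over a recognizable domain, which is decidable.

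To set this up, I would first fix a uniform representation of reals by their two-sided Ostrowski-$\alpha$ expansions (see Section \ref{section:prelim}); the admissibility of such a digit sequence is governed by the continued fraction $(a_0, a_1, a_2, \ldots)$ of $\alpha$ and is readily Büchi-recognizable uniformly in that track. Next I would assemble, in parallel with the continued fraction track, a Büchi automaton for each atomic ingredient of the signature $\mathcal{L}_{m,a}$: comparison $<$ via positional inspection of Ostrowski digits; addition $+$ via the general Ostrowski adder of Baranwal, Schaeffer, and Shallit, which is the essential new tool and the whole reason one can go beyond quadratic $\alpha$; the unary predicate $\mathbb{Z}$ by checking that the fractional-part digits normalize to zero; and the unary predicate $\alpha\mathbb{Z}$ by exploiting the fact that multiplication by $\alpha$ corresponds to a simple shift of Ostrowski digits. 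Closure of $\omega$-automatic relations under Boolean connectives and projection then extends this to arbitrary $\mathcal{L}_{m,a}$-formulas.

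The principal obstacle is uniformity in $\alpha$. In Fact \ref{fact:quad}, quadratic irrationals have eventually periodic continued fractions, so the automata for $+$, $<$, $\mathbb{Z}$, and $\alpha\mathbb{Z}$ can be precompiled once and for all using finitely many states that encode the periodic tail. For general $\alpha \in \Irr$, the continued fraction digits $a_i$ can be unbounded and entirely unstructured, so any automaton must read them on the fly while still maintaining bounded memory for the arithmetic it performs. Ensuring this for addition is precisely the content of the Baranwal-Schaeffer-Shallit construction; once this piece is in hand, the automata for the remaining atomic pieces should follow by analogous synchronous bookkeeping against the continued-fraction track. The passage from uniform $\omega$-automaticity of $\mathcal{K}$ to decidability of $\mathrm{FO}(\mathcal{K})$ is then a formal consequence of the standard theory of $\omega$-automatic structures together with the decidability of Büchi universality.
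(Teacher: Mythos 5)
Your proposal is correct and takes essentially the same route as the paper: establish that $(\mathcal{R}_\alpha)_{\alpha\in\Irr}$ is a uniform family of $\omega$-regular structures over a continued-fraction track, with the Baranwal--Schaeffer--Shallit adder as the key uniform ingredient, and then reduce ${\rm FO}(\mathcal{K})$ to a decidable universality/inclusion question for B\"uchi automata over the $\omega$-regular set of valid continued-fraction codings. The one imprecision is in the treatment of $\alpha\mathbb{Z}$: this is recognized not via a digit shift but via the fact that the integer Ostrowski digits of $X$ coincide with the real Ostrowski digits of $\alpha X$ reduced into $I_\alpha$ (Fact~\ref{z_o_correspondence}), together with some carry bookkeeping, though this does not affect the architecture of the argument.
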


\noindent Indeed, we will even prove a substantial generalization of Theorem B. For each $\mathcal L_{m,a}$-sentence $\varphi$, we set
$
M_{\varphi} := \{ \alpha \in \Irr \ : \ \mathcal{R}_{\alpha}\models \varphi \}$.
Let $\Qu$ be the set of all quadratic irrational real numbers in $\Irr$.
Define 
\[
\mathcal{M}:=(\Irr,<,(M_{\varphi})_{\varphi},(\mathnormal{q})_{\mathnormal{q}\in \Qu})
\]
to be the expansion of the dense linear order $(\Irr,<)$ by
 predicates for $M_{\varphi}$ for each $\mathcal{L}_{m,a}$-sentence $\varphi$, and constant symbols for each quadratic irrational real number in $\Irr$.

\begin{thmD}
The theory ${\rm FO}(\cal M)$ is decidable.
\end{thmD}

\noindent Observe that Fact 1.1 and Theorem B follow immediately from Theorem C. 
\noindent We outline how Theorem B implies Theorem A. Note that for every irrational $\alpha$, the structure $\mathcal R_{\alpha}$ defines the usual floor function $\lfloor \cdot \rfloor : \R \to \Z$, the singleton $\{\alpha\}$ and the successor function on $\alpha\Z$. Hence $\mathcal R_{\alpha}$ also defines the set $\{ (\rho,\alpha n,c_{\alpha,\rho}(n)) \ : \ \rho \in \R, n \in \N\}$. From the definability of $\{\alpha\}$, we have that the function from $\alpha \N$ to $\{0,\alpha\}$ given by $\alpha n \mapsto \alpha c_{\alpha,\rho}(n)$ is definable in $\mathcal{R}_{\alpha}$. Thus the $\mathcal{L}_c$-structure $(\alpha \N,<,+,0,\alpha,\alpha n \mapsto \alpha c_{\alpha,\rho}(n))$ can be defined in $\mathcal{R}_{\alpha}$, and this definition is uniform in $\alpha$. Since the former structure is $\mathcal{L}_c$-isomorphic to $\mathcal{N}_{\alpha,\rho}$, we have that for every $\mathcal{L}_c$-sentence $\varphi$ there is an $\mathcal{L}_{m,a}$-formula $\psi(x)$ such that
\begin{itemize}
    \item $
\varphi \in \Th(\mathcal{K}_{\rm sturmian}) \hbox{ if and only if } \forall x \ \psi(x) \in \Th(\mathcal{K})
$ and
   \item $
\varphi \in \Th(\mathcal{K}_{\rm char}) \hbox{ if and only if } \psi(0) \in \Th(\mathcal{K}).
$
\end{itemize}
Even Theorem C is not the most general result we prove. Its statement is more technical and we postpone it until Section \ref{section:decidability}. However, we
want to point out that we can add predicates for interesting subsets of $\Irr$ to $\cal M$ without changing the decidability of the theory. Examples of such subsets are the set of all $\alpha\in \Irr$ such that the terms in the continued fraction expansion of $\alpha$ are powers of 2, or the set of all $\alpha\in \Irr$ such that the terms in the continued fraction expansion of $\alpha$ are not in some fixed finite set. This means we can not only automatically prove theorems about all characteristic Sturmian words, but also prove theorems about all characteristic Sturmian words whose slope is one of these sets. There is a limit to this technique. 
 If we add a predicate for the set of all $\alpha\in \Irr$ such that the terms of continued fraction expansion of $\alpha$ are bounded, or add a predicate for the set of elements in $\Irr$ whose continued fractions has strictly increasing terms, then our method is unable to conclude whether the resulting structure has a decidable theory. See Section \ref{section:decidability} for a more precise statement about what kind of predicates can be added.\newline

\noindent The proof of Theorem C follows closely the proof from \cite{H} of the $\omega$-automaticity of $\cal R_{\alpha}$ for fixed quadratic $\alpha$. Here we show that the construction of the B\"uchi automata needed to represent $\cal R_{\alpha}$ is actually uniform in $\alpha$. See Abu Zaid, Grädel and Reinhardt \cite{AZGR} for a systematic study of uniformly automatic classes of structures. Deducing Theorem C from this result is then rather straightforward. The key ingredient to establish the $\omega$-automaticity of  $\cal R_{\alpha}$ is an automaton that can perform addition in Ostrowski-numeration systems. By \cite{HT} there is an automaton that recognizes the addition relation for $\alpha$-Ostrowski numeration systems for fixed quadratic $\alpha$. So for a fixed quadratic number, there exists a $3$-input automaton that accepts the $\alpha$-Ostrowski representations of all triples of natural numbers $x,y,z$ with $x+y=z$. In order to prove Theorem C, we need a uniform version of such an adder. This general adder is described in Baranwal, Schaeffer, and Shallit \cite{BSS}. There a 4-input automaton is constructed that accepts 4-tuples consisting of an encoding of a real number $\alpha$ and three $\alpha$-Ostrowski representations of natural numbers $x,y,z$ with $x+y=z$. See Section \ref{section:addition} for details. \newline

\noindent As mentioned above, an implementation of the decision algorithm provided by Theorem A can be used to study Sturmian words. We created a software program called Pecan \cite{pecan-repo} that includes such an implementation. Pecan is inspired by Walnut~\cite{walnut} by Mousavi, an automated theorem-prover for deciding properties of automatic words. The main difference is that Walnut is based on finite automata, while Pecan uses B\"uchi automata. In our setting it is more convenient to work with B\"uchi automata instead of finite automata, since the infinite families of words we want to consider---like Sturmian words---are indexed by real numbers. Section \ref{section:pecan} provides more information about Pecan and contains further examples how Pecan is used to prove statements about Sturmian words. Pecan's implementation is discussed in more detail in \cite{Pecan}.\newline

\noindent This is an extended version of the paper \cite{SturmianCSL} presented at CSL 2022.


\subsection*{Acknowledgments} Part of this work was done in the research project ``Building a theorem-prover'' at the Illinois Geometry Lab in Spring 2020. P.H. and C.S. were partially supported by NSF grant DMS-1654725. P.H. was partially supported by the Hausdorff Center for Mathematics at the University of Bonn. We thank Mary Angelica Gramcko-Tursi and Sven Manthe for carefully reading a draft of this paper.

\section{Preliminaries}
\label{section:prelim}

Throughout, $i,j,k,\ell,m,n$ are used for natural numbers.
Let $X,Y$ be two sets and $Z\subseteq X\times Y$. For $x\in X$, we let $Z_x$ denote the set $\{y \in Y \ : \ (x,y) \in Z\}.$ Similarly, given a function $f : X\times Y \to W$ and $x\in X$, we write $f_x$ for the function $f_x : Y \to W$ that maps $y\in Y$ to $f(x,y)$.\newline

\noindent Given a (possibly infinite word) $w$ over an alphabet $\Sigma$, we write $w_i$ for the $i$-th letter of $w$, and $w|_n$ for $w_1\cdots w_n$. We write $|w|$ for the length of $w$. We let $\Sigma^{\omega}$ denote the set of infinite words over $\Sigma$. If $\Sigma$ is totally ordered by $\prec$, we let $\prec_{\rm lex}$ denote the corresponding lexicographic order on $\Sigma^{\omega}$. Letting $u, v \in \Sigma^\omega$, we also write $u \prec_{\rm colex} v$ if there is a maximal $i$ such that $u_i \neq v_i$, and $u_i < v_i$ for this $i$. Note that while $\prec_{\rm lex}$ is a total order on $\Sigma^\omega$, the order $\prec_{\rm colex}$ is only a partial order. However, for a given $\sigma \in \Sigma$, the order $\prec_{\rm colex}$ is a total order on the set of all words $v \in \Sigma^\omega$ such that $v_j$ is eventually equal to $\sigma$.\newline

\noindent We will also need to apply $\prec_{\rm lex}$ and $\prec_{\rm colex}$ to finite sequences $u, v$ of the same length. We do this by choosing a $\sigma \in \Sigma$ (the choice does not matter) and stating that $u \prec_{\rm lex} v$ iff $u\sigma^\omega \prec_{\rm lex} v\sigma^\omega$, and similarly for $\prec_{\rm colex}$.\newline

\noindent A \textbf{B\"uchi automaton (over an alphabet $\Sigma$)} is a quintuple $\mathcal{A} = (Q,\Sigma,\Delta,I,F)$ where $Q$ is a finite set of states, $\Sigma$ is a finite alphabet, $\Delta\subseteq Q \times \Sigma \times Q$ is a transition relation, $I\subseteq Q$ is a set of initial states, and $F \subseteq Q$ is a set of accept states. \newline

\noindent Let $\mathcal{A} = (Q,\Sigma,\Delta,I,F)$ be a B\"uchi automaton. Let $\sigma\in \Sigma^{\omega}$. A \textbf{run of $\sigma$ from $p$} is an infinite sequence $s$ of states in $Q$ such that $s_0 = p$, $(s_n,\sigma_n,s_{n+1})\in \Delta$ for all $n<|\sigma|$. If $p \in I$, we say $s$ is a \textbf{run of $\sigma$}. Then $\sigma$ is \textbf{accepted by $\cal A$} if there is a run $s_0s_1\cdots$ of $\sigma$ such that $\{n:s_n \in F\}$ is infinite. We call this run an accepting run. We let $L(\cal A)$ be the set of words accepted by $\cal A$. \newline

\noindent If for every state $s$ in $\mathcal{A}$ there is a run of some string from an initial state through $s$ to an accept state, where $s$ is not the last state in the run, then we say $\mathcal{A}$ is \textbf{trim.} Every B\"uchi automaton has an equivalent trim automaton, which may be obtained simply by removing (possibly iteratively) every state failing this condition. There are other types of $\omega$-automata with different acceptance conditions, but in this paper we only consider B\"uchi automata.\newline 

\noindent Let $\Sigma$ be a finite alphabet. We say a subset $X\subseteq \Sigma^{\omega}$ is \textbf{$\omega$-regular} if it is recognized by some B\"uchi automaton. Let $u_1,\dots,u_n \in \Sigma^{\omega}$. We define the \textbf{convolution}  $c(u_1,\dots,u_n)$ of $u_1,\dots, u_n$ as the element of $(\Sigma^n)^{\omega}$ whose value at position $i$ is the $n$-tuple consisting of the values of $u_1,\dots, u_n$ at position $i$. We say that $X \subseteq (\Sigma^{\omega})^n$ is \textbf{$\omega$-regular} if $c(X)$ is $\omega$-regular.

\begin{fact}\label{fact:omegaclosure}
The collection of $\omega$-regular sets is closed under union, intersection, complementation and projection.
\end{fact}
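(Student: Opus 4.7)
The plan is to handle each of the four closure operations in turn, reducing projection and union to trivial constructions, intersection to a standard product with a scheduling flag, and complementation—the one genuinely hard case—to B\"uchi's original theorem via a Ramsey-theoretic analysis.

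For union, given two B\"uchi automata $\mathcal{A}_i = (Q_i,\Sigma,\Delta_i,I_i,F_i)$ with disjoint state sets, I would form their disjoint union $(Q_1\cup Q_2,\Sigma,\Delta_1\cup \Delta_2,I_1\cup I_2,F_1\cup F_2)$; any run commits to one component after its first transition, so the accepted language is $L(\mathcal{A}_1)\cup L(\mathcal{A}_2)$. For projection from $(\Sigma\times\Gamma)^\omega$ to $\Sigma^\omega$, I would replace each transition $(p,(\sigma,\tau),q)$ by $(p,\sigma,q)$; nondeterminism supplies the forgotten $\tau$-component, so the new automaton accepts exactly the projection of the original language. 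For intersection, the naive product automaton fails because one would need $F_1\times F_2$ to be visited infinitely often, whereas the semantically correct requirement is that $F_1$ and $F_2$ each be visited infinitely often, not necessarily simultaneously. The standard remedy is to work on $Q_1\times Q_2\times\{1,2\}$ with a flag that flips from $1$ to $2$ upon visiting $F_1$ and from $2$ back to $1$ upon visiting $F_2$; taking $F_1\times Q_2\times\{1\}$ as the accept set forces both flips to occur infinitely often, and a routine verification shows the accepted language is $L(\mathcal{A}_1)\cap L(\mathcal{A}_2)$.

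The main obstacle is complementation, which is famously nontrivial: the powerset construction used for automata on finite words breaks completely, and a nondeterministic B\"uchi automaton cannot in general be determinized into a B\"uchi automaton. I would therefore appeal to B\"uchi's original argument. Given $\mathcal{A} = (Q,\Sigma,\Delta,I,F)$, define a congruence on $\Sigma^*$ by $u\sim v$ iff, for every pair $(p,q)\in Q^2$, a run from $p$ to $q$ labeled by $u$ exists precisely when such a run labeled by $v$ exists, and the same equivalence holds for runs passing through $F$. This congruence has finite index, so there are only finitely many classes $[u],[v]$. A Ramsey argument shows that every $w\in \Sigma^\omega$ admits a factorization $w = u\,v_1\,v_2\cdots$ with all $v_i$ in a common class $[v]$, and the question of whether $w\in L(\mathcal{A})$ depends only on the pair of classes $([u],[v])$. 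One then exhibits $\Sigma^\omega\setminus L(\mathcal{A})$ as a finite union of sets of the shape $[u]\cdot [v]^\omega$, each of which is easily seen to be $\omega$-regular; closing under the already-established union yields a B\"uchi automaton for the complement. In practice I would simply cite B\"uchi's theorem rather than reproduce the Ramsey-theoretic bookkeeping.
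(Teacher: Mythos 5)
Your proposal is correct and follows exactly the standard arguments that the paper relies on: the paper itself gives no proof of this fact, attributing complementation to B\"uchi and referring the reader to Khoussainov--Nerode, where one finds precisely the disjoint-union, label-erasure, flagged-product, and Ramsey-theoretic congruence constructions you describe. Nothing is missing, and your closing remark that in practice one would simply cite B\"uchi's theorem is exactly what the paper does.
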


\noindent Closure under complementation is due to B\"uchi \cite{Buechi}. We refer the reader to Khoussainov and Nerode \cite{aut_theory} for more information and a proof of Fact \ref{fact:omegaclosure}. As consequence of Fact \ref{fact:omegaclosure}, we have that for every $\omega$-regular subset $W\subseteq (\Sigma^{\omega})^{m+n}$ the set
\[
\{ s \in (\Sigma^{\omega})^{m} \ : \ \forall t \in (\Sigma^{\omega})^{n} \ (s,t) \in W\}  
\]
is also $\omega$-regular. \newline

\noindent The proof of Theorem \ref{oplus_fin_regular} will utilize a few other related types of automaton. A \textbf{finite automaton} has the same internal structure as a B\"uchi automaton i.e. is also a quintuple $\mathcal{A} = (Q, \Sigma, \Delta, I, F)$ with the same restrictions, but it takes a finite word $\sigma \in \Sigma^*$ as input. In the case of a finite automaton, runs are finite sequences instead of infinite sequences but otherwise follow the same rule on transitions. We say that \textbf{$\sigma$ is accepted by $\mathcal{A}$} in this case if there is a run of $\sigma$ such that $s_{|\sigma|} \in F$. \newline

\noindent We will also refer to \textbf{general} finite and B\"uchi automata. These are the same as finite and B\"uchi automata, respectively, but where $\Sigma$ is no longer required to be a finite alphabet. Note that $Q$ is still finite in these cases; therefore $\Delta$, viewed as a directed multigraph on $Q$, still has finitely many vertices but may have infinitely many arrows between the same pair of vertices. General finite and B\"uchi automata are not often considered, as they do not have the same computability properties\footnote{To see why, consider e.g. a generalized B\"uchi automaton recognizing words over $\N$ consisting of a single initial state $q_0$ and a single final state $q_1$ such that there is a noncomputable set $S\subseteq \N$ with
$
\Delta = \{ (q_0,s,q_1) \ : \ s \in S\}.$}
, but they may sometimes be converted into ``equivalent'' finite and B\"uchi automata, as we will see in Section \ref{section:addition}.

\subsection{\texorpdfstring{$\omega$}{Omega}-regular structures}

\noindent Let $\mathcal{U} =(U;R_1,\dots,R_m)$ be a structure, where $U$ is a non-empty set and $R_1,\dots, R_m$ are relations on $U$. We say $\cal U$ is \textbf{$\omega$-regular} if its domain and its relations are $\omega$-regular.\newline

\noindent B\"uchi's theorem \cite{Buechi} on the decidability of the monadic second-order theory of one successor immediately gives the following well-known fact.

\begin{fact} Let $\mathcal{U}$ be an $\omega$-regular structure. Then the theory ${\rm FO}(\mathcal{U})$ is decidable.
\end{fact}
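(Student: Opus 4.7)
The plan is to prove decidability by the standard automata-theoretic translation of first-order logic, proceeding by induction on the complexity of formulas. For each $\mathcal L$-formula $\varphi(x_1,\dots,x_n)$ in the signature of $\cal U$, I would construct (effectively, from the B\"uchi automata witnessing $\omega$-regularity of $U$ and of $R_1,\dots,R_m$) a B\"uchi automaton $\mathcal A_\varphi$ over the alphabet $\Sigma^n$, where $U\subseteq\Sigma^\omega$, such that $\mathcal A_\varphi$ accepts $c(u_1,\dots,u_n)$ if and only if $u_1,\dots,u_n\in U$ and $\mathcal U\models\varphi(u_1,\dots,u_n)$. For a sentence $\varphi$, this reduces the question $\mathcal U\models\varphi$ to non-emptiness of $L(\mathcal A_\varphi)$, which is decidable: one need only search the transition graph for a state in $F$ that is both reachable from $I$ and lies on a cycle.

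For the base case, every atomic formula $R_i(x_{j_1},\dots,x_{j_k})$ corresponds by hypothesis to an $\omega$-regular relation, and permutation or duplication of coordinates preserves $\omega$-regularity via Fact \ref{fact:omegaclosure}; the equality relation $\{(u,u):u\in U\}$ is also $\omega$-regular, as its convolution is recognized by a two-state B\"uchi automaton that only accepts inputs whose two tracks agree at every position, intersected with the domain constraint. For the inductive step, the Boolean connectives $\wedge$, $\vee$, $\neg$ are realized by intersection, union, and complementation of the corresponding $\omega$-regular sets, while the existential quantifier $\exists x_i$ is realized by projecting out the $i$-th track. All four operations are guaranteed by Fact \ref{fact:omegaclosure} and are effective on B\"uchi automata.

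The main subtlety, and the step I would be most careful about, is the bookkeeping of the domain $U$ under negation and quantification. A naive complementation of $\mathcal A_\varphi$ yields an automaton recognizing the complement inside all of $(\Sigma^\omega)^n$, which includes convolutions of tuples lying outside $U^n$ on which $\varphi$ has no truth value. To maintain the invariant $L(\mathcal A_\varphi)\subseteq c(U^n)$, I would intersect with the $\omega$-regular set $c(U^n)$ after every complementation and after every projection step, so that $\exists x_i\,\varphi$ is interpreted as ``there exists $x_i\in U$'' rather than ``there exists $x_i\in\Sigma^\omega$.'' With this invariant in place, the induction goes through and produces, for any input sentence $\varphi$, an automaton whose emptiness is equivalent to $\mathcal U\not\models\varphi$, yielding the desired decision procedure for ${\rm FO}(\mathcal U)$.
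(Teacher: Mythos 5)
Your proof is correct and is exactly the standard argument the paper has in mind: the paper states this as a well-known consequence of B\"uchi's theorem without writing out a proof, and the induction you give (atomic relations and equality are $\omega$-regular, Boolean connectives and $\exists$ via Fact \ref{fact:omegaclosure}, then decidable emptiness of B\"uchi automata) is precisely the argument the paper itself spells out for the uniform version in Fact \ref{fact:uniformregulardef}. Your care in relativizing complementation and quantification to the domain $U$ is the right bookkeeping and does not change the approach.
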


\noindent In this paper, we will consider families of $\omega$-regular structures that are uniform in the following sense.
Fix $m \in \N$ and a map $\operatorname{ar} : \{1,\dots ,m\} \to \N$. 
Let $Z$ be a set and for $z\in Z$ let $\mathcal{U}_z$ be a structure $(U_z;R_{1,z},\dots,R_{m,z})$ such that $R_{i,z} \subseteq U_z^{ar(i)}$.
We say that $(\mathcal{U}_z)_{z\in Z}$ is a \textbf{uniform family of $\omega$-regular structures} if
\begin{itemize}
    \item $\{ (z,y) \ : \ y \in U_z\}$ is $\omega$-regular,
    \item $\{ (z,y_1,\dots,y_{ar(i)}) \ : \ (y_1,\dots,y_{ar(i)}) \in R_{i,z}\}$ is $\omega$-regular for each $i\in\{1,\dots, m\}$.
\end{itemize}
We refer the reader to \cite{AZGR} for an in-depth analysis of uniformity in automatic structure.\newline

\noindent From B\"uchi's theorem, we immediately obtain the following.

\begin{fact}\label{fact:uniformregulardef}
Let $(\mathcal{U}_z)_{z\in Z}$ be a uniform family of $\omega$-regular structures, and let $\varphi$ be a formula in the signature of these structures. Then the set 
\[
\{ (z,u) \ : \ z\in Z, u \in U_z, \ \mathcal{U}_z \models \varphi( u )\}
\]
is $\omega$-regular, and, the automaton recognizing this set can be effectively computed given $\varphi$. Moreover, the theory ${\rm FO}(
\{ \mathcal{U}_z \ : \ z \in Z\})
$
is decidable.
\end{fact}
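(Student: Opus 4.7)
The plan is to proceed by induction on the structure of $\varphi$, carrying the parameter $z$ as an extra free variable throughout, and then deduce decidability from effective emptiness for Büchi automata.

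For the base case, an atomic formula is either $R_i(y_1,\dots,y_{\operatorname{ar}(i)})$ or an equation $y_j = y_k$. In the first case, the set $\{(z, y_1, \dots, y_{\operatorname{ar}(i)}) : (y_1,\dots,y_{\operatorname{ar}(i)}) \in R_{i,z}\}$ is $\omega$-regular by the defining hypothesis of a uniform family. In the second case, the diagonal $\{(z, y, y) : y \in U_z\}$ is obtained from the $\omega$-regular set $\{(z, y) : y \in U_z\}$ by duplicating a component, which is $\omega$-regular (one verifies this by a trivial product automaton construction; equivalently, it is cut out of $(\Sigma^\omega)^3$ by the $\omega$-regular equality relation).

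For the inductive step, Boolean combinations are handled directly by Fact \ref{fact:omegaclosure}: the set defined by $\neg\psi$ is the complement (inside the domain $\{(z, u_1, \dots, u_k) : u_1, \dots, u_k \in U_z\}$, which is $\omega$-regular by intersecting $k$ copies of the uniform domain), and the sets defined by $\psi_1 \wedge \psi_2$ or $\psi_1 \vee \psi_2$ come from intersection and union. For the quantifier $\exists y \, \psi(z, y, \vec{u})$, I would first form by induction the $\omega$-regular set
\[
S_\psi := \{(z, y, \vec{u}) : y \in U_z,\ \vec{u} \in U_z^k,\ \mathcal{U}_z \models \psi(y, \vec{u})\},
\]
intersect it with the $\omega$-regular relation asserting $y \in U_z$ to make sure we quantify only over the domain of $\mathcal{U}_z$, and then project away the $y$-coordinate using closure under projection in Fact \ref{fact:omegaclosure}. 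The universal quantifier is then $\neg \exists y\, \neg \psi$. Crucially, each of these closure operations is effective: intersection and projection via standard product and subset-style constructions, and complementation via Büchi's effective complementation theorem. Hence the automaton recognizing the defined set can be computed from $\varphi$ by recursion on its parse tree.

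For the moreover clause, given a sentence $\varphi$, the construction above yields an effectively computable Büchi automaton $\mathcal{A}_\varphi$ recognizing $M_\varphi := \{z \in Z : \mathcal{U}_z \models \varphi\}$, where $Z$ here means the projection of the uniform domain onto its $z$-coordinate, itself $\omega$-regular and effectively computable. Then $\varphi \in \operatorname{FO}(\{\mathcal{U}_z : z \in Z\})$ iff $Z \setminus M_\varphi = \emptyset$; we build an automaton for $Z \setminus M_\varphi$ using complementation and intersection, and test it for language emptiness, which for Büchi automata reduces to a decidable reachability-plus-cycle question on the transition graph. The main technical obstacle to flag is that this pipeline leans entirely on \emph{effective} Büchi complementation: this is nontrivial (the state-space blow-up is exponential and the standard proof goes through Safra-style or rank-based constructions), but it is classical, and without it neither the negation step in the induction nor the decidability conclusion would go through.
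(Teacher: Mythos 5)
Your proof is correct and follows essentially the same route as the paper's: induction on the structure of $\varphi$, with the atomic case supplied by the definition of a uniform family together with the $\omega$-regularity of equality, and the inductive step (Boolean connectives and quantifiers) handled by the effective closure properties of Fact \ref{fact:omegaclosure}, followed by a Büchi emptiness test for decidability. The paper states this in two sentences; you have merely filled in the standard details (relativizing quantifiers to the domain, effectivity of complementation), all of which are accurate.
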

\begin{proof} 
When $\varphi$ is an atomic formula, the statement follows immediately from the definition of a uniform family of $\omega$-regular structures and the $\omega$-regularity of equality. By Fact \ref{fact:omegaclosure}, the statement holds for all formulas. 
\end{proof}

\noindent Let $w \in \Sigma^{\omega}$. The \textbf{acceptance problem for $w$} is the following decision problem:
\[
\text{Given a B\"uchi automaton $\mathcal{A}$ over $\Sigma$, is $w$ accepted by $\mathcal{A}$?}
\]
For examples of non-$\omega$-regular words with a decidable acceptance problem, see Elgot and Rabin \cite{ER}, Semenov \cite{Semenov83} or Carton and Thomas \cite{CartonThomas}. We obtain the following well-known corollary of Fact \ref{fact:uniformregulardef}.

\begin{fact}\label{fact:parameter}
Let $(\mathcal{U}_z)_{z\in Z}$ be a uniform family of $\omega$-regular structures, and let $w \in Z$ be such that the acceptance problem for $w$ is decidable. Then the theory
${\rm FO}(\mathcal{U}_w)
$
is decidable.
\end{fact}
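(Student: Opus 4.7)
The plan is to reduce the decision problem for $\mathrm{FO}(\mathcal{U}_w)$ to a uniform sequence of acceptance problems for $w$. Given an $\mathcal{L}$-sentence $\varphi$ (where $\mathcal{L}$ is the signature of the family), I want to produce, effectively from $\varphi$, a B\"uchi automaton $\mathcal{A}_\varphi$ over the alphabet of $Z$ with the property that $\mathcal{A}_\varphi$ accepts a parameter $z \in Z$ if and only if $\mathcal{U}_z \models \varphi$. Once I have this, deciding $\mathcal{U}_w \models \varphi$ amounts to deciding whether $w$ is accepted by $\mathcal{A}_\varphi$, which by hypothesis is decidable.

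The construction of $\mathcal{A}_\varphi$ is provided for free by Fact~\ref{fact:uniformregulardef}. Applying that fact to the sentence $\varphi$ (viewed as a formula with no free object variables, only the parameter variable $z$ ranging over $Z$) yields that the set
\[
S_\varphi := \{ z \in Z \ : \ \mathcal{U}_z \models \varphi \}
\]
is $\omega$-regular, and a B\"uchi automaton $\mathcal{A}_\varphi$ recognizing $S_\varphi$ can be effectively computed from $\varphi$. In particular, the map $\varphi \mapsto \mathcal{A}_\varphi$ is computable.

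With $\mathcal{A}_\varphi$ in hand, the decision procedure for $\mathrm{FO}(\mathcal{U}_w)$ proceeds as follows: on input $\varphi$, first compute $\mathcal{A}_\varphi$ using Fact~\ref{fact:uniformregulardef}, and then invoke the assumed decision procedure for the acceptance problem of $w$ to determine whether $w \in L(\mathcal{A}_\varphi)$. By construction this is equivalent to $\mathcal{U}_w \models \varphi$, so the procedure correctly decides $\mathrm{FO}(\mathcal{U}_w)$.

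There is essentially no obstacle: the heavy lifting has already been done in Fact~\ref{fact:uniformregulardef}, which packages the B\"uchi closure properties (Fact~\ref{fact:omegaclosure}) and the uniformity of the family into a single effective construction. The only point worth emphasizing is that $w$ itself need not be $\omega$-regular and $\mathcal{U}_w$ need not be an $\omega$-regular structure; decidability of $\mathrm{FO}(\mathcal{U}_w)$ is salvaged purely from the decidability of the acceptance problem for $w$ against arbitrary B\"uchi automata, together with the uniform $\omega$-regularity of the family $(\mathcal{U}_z)_{z \in Z}$.
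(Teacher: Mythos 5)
Your proposal is correct and matches the paper's intended argument: the paper presents this fact as an immediate corollary of Fact~\ref{fact:uniformregulardef} without further proof, and your reduction (effectively compute $\mathcal{A}_\varphi$ recognizing $\{z : \mathcal{U}_z \models \varphi\}$, then decide $w \in L(\mathcal{A}_\varphi)$ via the acceptance problem for $w$) is exactly that corollary spelled out.
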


\subsection{Binary representations}

For $k\in \N_{>1}$ and $b=b_0b_{1}b_2\cdots b_n \in \{0,1,\dots,k-1\}^{*}$, we define $[b]_k := \sum_{i=0}^n b_i k^i$. For $N\in \N$ we say $b\in \{0,1\}^{*}$ is a \textbf{binary representation} of $N$ if $[b]_2 = N$.\newline

\noindent Throughout this paper, we will often consider infinite words over the (infinite) alphabet $\{0,1\}^{*}$. Let $[\cdot]_2 : (\{0,1\}^{*})^{\omega} \to \N^{\omega}$ be the function that maps $u=u_1u_2\cdots \in (\{0,1\}^*)^{\omega}$ to
\[
[u_1]_2[u_2]_2[u_3]_2\cdots.
\]
We will consider the following  different relations on $(\{0,1\}^{*})^{\omega}$. \newline

\noindent Let $u,v \in (\{0,1\}^{*})^{\omega}$. We write $u <_{\rm lex,2} v$ if $[u]_2$ is lexicographically smaller than $[v]_2$. We write $u <_{\rm colex,2} v$ if there is a maximal $i$ such that $[u_i]_2 \neq [v_i]_2$, and $[u_i]_2 < [v_i]_2$. Note that while $<_{\rm lex,2}$ is a total order on $(\{0,1\}^{*})^{\omega}$, the order $<_{\rm colex,2}$ is only a partial order. However, $<_{\rm colex,2}$ is a total order on the set of all words $v\in (\{0,1\}^{*})^{\omega}$ such that $[v]_j$ is eventually $0$.
\newline

\noindent Let $u=u_1u_2\cdots,v=v_1v_2\cdots \in (\{0,1\}^{*})^{\omega}$. Let $k$ be  minimal such that $[u_k]_2 \neq [v_k]_2$. We write $u<_{\rm alex,2} v$ if either $k$ is even and $[u_k]_2 < [v_k]_2$, or $k$ is odd and $[u_k]_2 > [v_k]_2$; this is the \textit{alternating lexicographic order} on $(\{0,1\}^{*})^{\omega}$.

\subsection{Ostrowski representations} We now introduce Ostrowski representations based on the continued fraction expansions of real numbers. We refer the reader to Allouche and Shallit \cite{auto_seq} and Rockett and Sz\"{u}sz \cite{RS} for more details.  A \textbf{finite continued fraction expansion} $[a_0;a_1,\dots,a_k]$ is an expression of the form
$$\tiny{
a_0 +\cfrac{1}{a_1 + \cfrac{1}{a_2+ \cfrac{1}{\ddots +  \cfrac{1}{a_k}}}}}
$$
For a real number $\alpha$, we say $[a_0;a_1,\dots,a_k,\dots]$ is a \textbf{continued fraction expansion of $\alpha$} if $\alpha=\lim_{k\to \infty}[a_0;a_1,\dots,a_k]$ and $a_0\in \Z$, $a_i\in \N_{>0}$ for $i>0$. In this situation, we write $\alpha = [a_0;a_1,\dots].$ Every irrational number has precisely one continued fraction expansion, so we will usually refer to \textit{the} continued fraction expansion of a number. We recall the following well-known fact about continued fractions.
	\begin{fact}
	\label{continued_fraction_ordering}
		Let $\alpha=[a_0;a_1,\dots],\alpha'=[a_0';a_1',\dots]\in \R$ be irrational. Let $k\in \N$ be minimal such that $a_k \neq a'_k$. Then $\alpha < \alpha'$ if and only if 
		\begin{itemize}
			\item $k$ is even and $a_k < a'_k$, or
			\item $k$ is odd and $a_k > a'_k$.
		\end{itemize}
	\end{fact}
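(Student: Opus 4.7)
The plan is to prove Fact \ref{continued_fraction_ordering} by induction on the index $k$ where the two continued fraction expansions first disagree, using the recursive identity $\alpha = a_0 + 1/[a_1;a_2,\dots]$ to reduce the comparison of two irrationals to the comparison of their one-step tails.

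For the base case $k=0$, both $a_0$ and $a_0'$ are integers, so I would use the basic fact that if $\alpha = [a_0; a_1, a_2, \dots]$ is irrational then $a_0 < \alpha < a_0 + 1$ (the tail $[a_1; a_2, \dots]$ is a positive irrational strictly greater than $1$, so its reciprocal lies in $(0,1)$). From this, $a_0 < a_0'$ forces $\alpha < a_0 + 1 \leq a_0' < \alpha'$, and the converse follows from the same sandwich inequalities. Since $k$ is even in this case, this matches the first clause of the statement.

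For the inductive step with $k \geq 1$, write $\alpha = a_0 + 1/\beta$ and $\alpha' = a_0' + 1/\beta'$ where $\beta := [a_1; a_2, \dots]$ and $\beta' := [a_1'; a_2', \dots]$. Since $a_0 = a_0'$, we have $\alpha < \alpha'$ if and only if $1/\beta < 1/\beta'$, which (as $\beta, \beta' > 0$) holds if and only if $\beta' < \beta$. The irrationals $\beta$ and $\beta'$ first disagree at index $k-1$, so the induction hypothesis tells us that $\beta' < \beta$ iff either $k-1$ is even with $a_k' < a_k$, or $k-1$ is odd with $a_k' > a_k$. The direction of the inequality between $a_k$ and $a_k'$ thus flips relative to what it is for $\alpha$ vs $\alpha'$, and at the same time the parity of the comparison index flips from $k-1$ to $k$: these two flips compose to give exactly the stated equivalence.

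There is essentially no obstacle here; the only subtle point is remembering that passing from $\alpha,\alpha'$ to the tails $\beta,\beta'$ reverses the inequality (because of the reciprocal), and that this reversal is precisely compensated by the parity shift from $k-1$ to $k$. I would state the equivalence cleanly once in the base case and then let the induction do the bookkeeping. No structural lemmas beyond the identity $\alpha = a_0 + 1/\beta$ and the observation $\beta \in (1,\infty)$ are required.
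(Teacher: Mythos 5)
Your proof is correct. Note that the paper itself offers no proof of this statement --- it is quoted as a well-known fact about continued fractions (with the standard references \cite{auto_seq}, \cite{RS} given earlier in that subsection) --- so there is nothing to compare against; your induction on $k$ via the identity $\alpha = a_0 + 1/\beta$ with $\beta = [a_1;a_2,\dots] > 1$ is exactly the standard textbook argument, and both the base case (the sandwich $a_0 < \alpha < a_0+1$) and the inductive step (the reciprocal reversing the inequality while the parity of the disagreement index shifts from $k-1$ to $k$) check out.
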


\noindent For the rest of this subsection, fix a positive irrational real number $\alpha\in (0,1)$ and let $[a_0;a_1,a_2,\dots]$ be the continued fraction expansion of $\alpha$.\newline
\noindent Let $k\geq 1$. A pair $(p_k, q_k)$ is the \textbf{$k$-th convergent of $\alpha$} if $p_k\in \N$, $q_k\in \Z$, $\gcd(p_k,q_k)=1$ and
\[
\frac{p_k}{q_k}  =  [a_0;a_1,\dots,a_k].
\]
Set $p_{-1}:=1,q_{-1}:=0$ and $p_{0}:=a_0,q_{0}:=1$. While formally a pair of integers, in practice we will think of a convergent as the quotient $\frac{p_k}{q_k}$. The convergents satisfy the following equations for $n\geq 1$: 
\begin{align*}
p_{n} = a_{n}p_{n-1} + p_{n-2}, \quad q_{n} &= a_{n}q_{n-1} + q_{n-2}.
\end{align*}

\noindent We now recall a numeration system due to Ostrowski  \cite{Ost}. 

\begin{factC}[{\cite[Ch.~II-\S4]{RS}}]\label{integer_ostrowski}
Let $X \in \N$. Then $X$ can be written uniquely as
\begin{equation}\label{eq:Ost}
X = \sum_{n=0}^{N}b_{n+1}q_{n},
\end{equation}
where $0 \le b_{1}<a_{1}$, $0 \le b_{n+1} \le a_{n+1}$ and $b_{n}=0$ whenever $b_{n+1}=a_{n+1}$.
\end{factC}
	
\noindent 
For $X\in \N$ satisfying \eqref{eq:Ost} we write
\[
X = [b_1b_2\cdots b_Nb_{N+1}]_{\alpha}
\]
and call the word $b_{1}b_{2}\cdots b_{N+1}$ an $\alpha$-Ostrowski representation of $X$. This representation is unique up to trailing zeros. Let $X,Y\in \N$ and let $b_{1}b_{2}\cdots b_{N+1}$ and $c_1c_2\cdots c_{N+1}$ be $\alpha$-Ostrowski representations of $X$ and $Y$ respectively. 
Since Ostrowski representations are obtained by a greedy algorithm, one can see easily that  $X<Y$ if and only if $b_{1}b_{2}\cdots b_{N+1}$  is co-lexicographically smaller than $c_1c_2\cdots c_{N+1}$.\newline

\noindent We now introduce a similar way to represent real numbers, also due to Ostrowski \cite{Ost}. The \textbf{$k$-th difference $\beta_k$ of $\alpha$} is defined as $\beta_k := q_k \alpha - p_k$. We use the following facts about $k$-th differences: for all $n\in \N$
\begin{enumerate}
    \item $\beta_n>0$ if and only if $n$ is even,
    \item $\beta_{0} > -\beta_{1} >  \beta_{2} > -\beta_{3} > \beta_4 >\dots$, and
    \item $-\beta_{n}  =  a_{n+2} \beta_{n+1} + a_{n+4} \beta_{n+3} + a_{n+6} \beta_{n+5} + \dots$ .
\end{enumerate}

\noindent Let $I_{\alpha}$ be the interval $\big[\lfloor \alpha \rfloor - \alpha, 1 + \lfloor \alpha \rfloor - \alpha\big)$. 
	
	\begin{fact}[{cf. \cite[Ch. II.6 Theorem 1]{RS}}]
		\label{real_ostrowski}
		Let $x \in I_{\alpha}$. Then $x$ can be written uniquely as 
		\begin{equation}\label{eq:Ostr}
		\sum_{k=0}^\infty b_{k+1} \beta_k,
		\end{equation}
		where $b_k\in \Z $ with $0\leq b_k \leq a_k$, and $b_{k-1} = 0$ whenever $b_k = a_k$,(in particular, $b_1 \neq a_1$), and $b_k \neq a_k$ for infinitely many odd $k$.
	\end{fact}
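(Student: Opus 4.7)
The plan is to prove this via the standard greedy/uniqueness approach for Ostrowski-type numeration systems, relying on the three facts about $\beta_k$ listed just above the statement (alternating signs, the decreasing chain $\beta_0 > -\beta_1 > \beta_2 > \cdots$, and the identity $-\beta_n = a_{n+2}\beta_{n+1} + a_{n+4}\beta_{n+3} + \cdots$). Fact~\ref{integer_ostrowski} serves as a guiding template, but the real-number case requires an infinite greedy procedure plus a normalization condition.

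For existence, I would construct $b_1,b_2,\ldots$ inductively by a greedy algorithm on remainders. Set $r_0 := x$, and given $r_k$, define $b_{k+1}$ as the unique integer in $\{0,1,\ldots,a_{k+1}\}$ such that $r_{k+1} := r_k - b_{k+1}\beta_k$ lies in a half-open interval of length $|\beta_{k+1}|$ whose endpoints alternate sign with the parity of $k$ (so that $r_{k+1}$ is to $\beta_{k+1}$ what $r_k$ was to $\beta_k$). The base case works because $I_\alpha$ has length $1$, matching the total mass $a_1|\beta_0| + |\beta_1|$ allocated at the initial step. Since $|\beta_k|\to 0$ by the decreasing chain, the remainders tend to $0$, and the series $\sum b_{k+1}\beta_k$ converges to $x$. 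The side condition ``$b_{k-1}=0$ whenever $b_k = a_k$'' should fall out of the greedy choice: if $b_k = a_k$, then $r_{k-1}$ sits at the extreme end of its admissible interval, which, combined with $\beta_{k-2} = a_{k-1}\beta_{k-1} + \beta_k$, forces $b_{k-1}=0$.

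For uniqueness, I would argue by contradiction. Suppose $(b_k)$ and $(b'_k)$ are two valid sequences representing the same $x$, let $m+1$ be the least index at which they differ, and assume without loss of generality $b_{m+1} > b'_{m+1}$. Then
\[
(b_{m+1}-b'_{m+1})\,\beta_m \;=\; \sum_{k>m}(b'_{k+1}-b_{k+1})\,\beta_k.
\]
Using identity (3), namely $|\beta_m| = a_{m+2}|\beta_{m+1}| + a_{m+4}|\beta_{m+3}| + \cdots$, one can show that equality forces the right-hand side to realize the maximal expansion of $|\beta_m|$: the $b'$-coefficients at the alternating-parity positions must take the values $a_{m+2}, a_{m+4},\ldots$, while the intervening $b'$-coefficients must be $0$. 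This pattern is precisely what the side conditions forbid: the ``$b_{k-1}=0$ when $b_k=a_k$'' rule is compatible with it only if the maximal string continues forever, and the ``$b_k \neq a_k$ for infinitely many odd $k$'' condition then excludes it outright.

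The main obstacle will be the sign bookkeeping around the alternating $\beta_k$ and pinning down the greedy half-intervals so that the structural side conditions emerge automatically rather than having to be patched in. The role of the ``infinitely many odd $k$'' condition is the most delicate point: it is exactly the normalization that kills the ambiguity coming from the identity in fact~(3), in direct analogy with forbidding a tail of all $9$'s in ordinary decimal expansions.
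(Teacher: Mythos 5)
This Fact is not proved in the paper at all: it is quoted from the literature (``cp.~\cite[Ch.~II.6 Theorem 1]{RS}''), so there is no in-paper argument to measure yours against. Judged on its own terms, your architecture is the standard and correct one: greedy existence, uniqueness via the least index of disagreement combined with identity (3), and the ``$b_k\neq a_k$ for infinitely many odd $k$'' condition playing exactly the role of excluding a tail of all $9$'s.

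There is, however, a concrete problem with the greedy step as you state it, and it is more than sign bookkeeping. Your invariant (each remainder $r_{k}$ confined to a half-open interval of length $|\beta_{k}|$, cut into windows of length $|\beta_{k+1}|$ for the $a_{k+1}+1$ digit values) cannot tile: $(a_{k+1}+1)|\beta_{k+1}|\neq|\beta_k|$, since $|\beta_k|=a_{k+2}|\beta_{k+1}|+|\beta_{k+2}|$. The correct invariant is that the tail $T_k:=\sum_{j\ge k}b_{j+1}\beta_j$ of a valid sequence ranges over $[-\beta_k,-\beta_{k-1})$ for $k$ even and $[-\beta_{k-1},-\beta_k)$ for $k$ odd, an interval of length $|\beta_{k-1}|+|\beta_k|$ whose non-attained endpoint is precisely where the ``infinitely many odd $k$'' condition enters. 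With this invariant the windows $b\,\beta_k+\mathrm{range}(T_{k+1})$ for consecutive digits $b$ and $b+1$ \emph{overlap} in an interval of length $|\beta_{k+1}|$, so the digit is not determined by where the remainder lands; in the overlap the tie is broken by a one-step look-ahead: choosing the larger digit forces $b_{k+2}=a_{k+2}$, which by the rule ``$b_{k}=a_{k}\Rightarrow b_{k-1}=0$'' forbids that larger digit. The same point affects uniqueness: bounding the difference $\sum_{k>m}(b'_{k+1}-b_{k+1})\beta_k$ by a ``maximal expansion of $|\beta_m|$'' is insufficient, because $\sum_{k>m}a_{k+1}|\beta_k|$ strictly exceeds $|\beta_m|$; you must bound each of $T_{m+1}$ and $T'_{m+1}$ individually by the interval above, which forces $|b_{m+1}-b'_{m+1}|=1$ and lands one of the two sequences in the overlap region, where the look-ahead yields a contradiction after a single further digit rather than via an infinite maximal string. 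With the invariant corrected, your plan goes through.
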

	
\noindent For $x\in I_{\alpha}$ satisfying \eqref{eq:Ostr} we write
\[
x = [b_1b_2\cdots]_{\alpha}
\]
and call the infinite word $b_1b_2\cdots$ the $\alpha$-Ostrowski representation of $x$. This is closely connected to the integer Ostrowski representation. Note that for every real number there a unique element of $I_{\alpha}$ such that that their difference is an integer. We define $f_{\alpha} : \R \to I_{\alpha}$ to be the function that maps $x$ to $x-u$, where $u$ is the unique integer such that $x-u\in I_{\alpha}.$ 
	\begin{factC}[{\cite[Lemma 3.4]{H}}]
		\label{z_o_correspondence}
		Let $X \in \mathbb{N}$ be such that $\sum_{k=0}^N b_{k+1} q_k$ is the $\alpha$-Ostrowski representation of $X$. Then 
		\[
		f_{\alpha}(\alpha X) = \sum_{k=0}^\infty b_{k+1} \beta_k
		\]
		is the $\alpha$-Ostrowski representation of $f_{\alpha}(\alpha X)$, where $b_{k+1} = 0$ for $k > N$.
	\end{factC}

\noindent Since $\beta_k > 0 $ if and only if $k$ is even, the order of two elements in $I_{\alpha}$ can be determined by the Ostrowski representation as follows.

	\begin{factC}[{\cite[Fact 2.13]{H}}]
		\label{real_ostrowski_order}
		Let $x,y \in I_{\alpha}$ with $x\neq y$ and  let $[b_1b_2\cdots ]_{\alpha}$ and $[c_1c_2\cdots ]_{\alpha}$ be the $\alpha$-Ostrowski representations of $x$ and $y$. Let $k \in \N$ be minimal such that $b_k \neq c_k$. Then $x<y$ if and only if
		 \begin{itemize}[align=left]
		\item[(i)] $b_{k+1} < c_{k+1}$ if $k$ is even;
		\item[(ii)] $b_{k+1} > c_{k+1}$ if $k$ is odd.
		\end{itemize}
	\end{factC}

\section{\texorpdfstring{$\#$}{Hashtag}-binary encoding}

\noindent In this section, we introduce $\#$-binary coding.  A similar encoding has been used in Hodgson \cite{Hodgson82}. Fix the alphabet $\SigmaH := \{0, 1, \#\}$.  Let $H_{\infty}$ denote the set of all infinite $\SigmaH$-words in which $\#$ appears infinitely many times. Clearly $H_{\infty}$ is $\omega$-regular.\newline

\noindent Let $C_\# : (\{0,1\}^*)^{\omega} \to H_{\infty}$ map an infinite word $b=b_1b_2b_3\cdots$ over $\{0,1\}^*$ to the infinite $\SigmaH$-word
\[
\# b_1 \# b_2 \# b_3 \# \cdots.
\]
We note that the map $C_\#$ is a bijection.\newline

\noindent Let $u=u_1u_2u_3\cdots,v=v_1v_2v_3\cdots \in \SigmaH^{\omega}$. We say $u$ and $v$ are \textbf{aligned} if for all $i\in\N$
\[
u_i=\# \text{ if and only if } v_i =\#.
\]
This defines an $\omega$-regular equivalence relation on $\SigmaH^{\omega}$. We denote this equivalence relation by $\sim_{\#}$. We say $(w_1,\dots,w_n)\in (\SigmaH^{\omega})^n$ is \textbf{aligned} if
\[
w_1 \sim_{\#} w_2 \sim_{\#} \dots \sim_{\#} w_n. 
\]
We say a subset $X\subseteq (\SigmaH^{\omega})^n$ is \textbf{aligned} if every $w\in X$ is aligned.\newline

\noindent The following fact follows easily.

\begin{fact}\label{fact:regularorder}
The following sets are $\omega$-regular:
\begin{itemize}
\item
$\{ (u,v) \in H_{\infty}^2 \ : \ u \sim_{\#} v \text{ and } C_{\#}^{-1}(u) <_{\rm lex,2} C_{\#}^{-1}(v)\}$,
\item
$\{ (u,v) \in H_{\infty}^2 \ : \ u \sim_{\#} v \text{ and } C_{\#}^{-1}(u) <_{\rm colex,2} C_{\#}^{-1}(v)\}$,
\item $\{ (u,v) \in H_{\infty}^2 \ : \ u \sim_{\#} v \text{ and } C_{\#}^{-1}(u) <_{\rm alex,2} C_{\#}^{-1}(v)\}$.
\end{itemize}
\end{fact}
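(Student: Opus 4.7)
The plan is to exhibit, for each of the three orderings, a Büchi automaton reading pairs of aligned inputs. Since $\sim_\#$ is itself $\omega$-regular (a simple check that $\#$ occurs at the same positions in both words), by Fact \ref{fact:omegaclosure} it suffices to build Büchi automata recognizing each ordering condition on aligned inputs and then intersect with the alignment language.

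The key observation I will rely on is that, for aligned $u \sim_\# v$, the $k$-th block $u_k$ (the $\{0,1\}$-word occurring between the $k$-th and $(k+1)$-st occurrence of $\#$) has the same length as $v_k$. So the block-level comparison of $[u_k]_2$ with $[v_k]_2$ is recognized by a small finite-state gadget. Since $[b]_2 = \sum_i b_i 2^i$ makes the leftmost letter the least significant bit, comparing two equal-length binary strings amounts to locating the last position inside the block at which they differ, and inspecting the direction there. I will implement this by carrying a register with values in $\{\text{less},\text{equal},\text{greater}\}$, reset at each $\#$ and updated whenever a differing bit is read inside the current block.

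With this gadget in hand, for $<_{\rm lex,2}$ the automaton additionally keeps an ``all previous blocks equal'' flag. Whenever $\#$ is read while this flag holds, the automaton consults the block register and jumps to an accepting self-loop (on ``less''), to a non-accepting absorbing state (on ``greater''), or simply continues after resetting the block register (on ``equal''); Büchi acceptance then amounts to eventually reaching the accepting sink. The $<_{\rm alex,2}$ construction is identical, except the state also tracks the parity of the block index, and the accepting jump fires either on ``less'' at an even-indexed block or on ``greater'' at an odd-indexed block, matching the definition of $<_{\rm alex,2}$. For $<_{\rm colex,2}$ I use nondeterminism: the automaton runs the block gadget, and at some $\#$ guesses that the block just closed is the final block where $u$ and $v$ disagree, requiring its register to be ``less'' at the moment of the guess; it then transitions to a subautomaton that Büchi-accepts exactly when every subsequent block agrees letter-for-letter with its counterpart (fail upon any mismatch inside a block, mark an accepting state upon successfully closing each block at the next $\#$).

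The only real obstacle is bookkeeping — confirming that each gadget uses only finitely many states, that transitions at $\#$-symbols correctly freeze the block register, and that the nondeterministic guess in the $<_{\rm colex,2}$ construction is verified. No automata-theoretic input beyond Fact \ref{fact:omegaclosure} is required.
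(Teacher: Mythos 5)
Your construction is correct and is precisely the standard argument the paper omits (it states only that the fact ``follows easily''): alignment guarantees equal block lengths, a three-valued register computes the block comparison since the least-significant bit comes first, and the lex/alex/colex automata are assembled from that gadget exactly as you describe, with nondeterministic guessing of the last differing block handling the colexicographic case. The only cosmetic point is to also intersect with $H_{\infty}^2$ so that $\#$ is forced to occur infinitely often, but this is immediate from the $\omega$-regularity of $H_{\infty}$.
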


\subsection{\texorpdfstring{$\#$}{Hashtag}-binary coding of continued fractions} 
We now code the continued fraction expansions of real numbers as infinite $\SigmaH$-words.
\begin{defi}
Let $\alpha\in (0, 1)$ be irrational such that $[0;a_1,a_2,\dots]$ is the continued fraction expansion of $\alpha$. Let $u=u_1u_2\cdots\in (\{0,1\}^*)^{\omega}$ such that 
$u_i\in \{0,1\}^*$ is a binary representation of $a_i$ for each $i\in \Z_{\geq 0}$. We say that $C_\#(u)$ is a \textbf{$\#$-binary coding of the continued fraction of $\alpha$}.
\end{defi}

\noindent Let $R$ be the set of elements of $\SigmaH^{\omega}$ of the form $(\#(0|1)^*1(0|1)^*)^\omega$. Obviously, $R$ is $\omega$-regular.

\begin{lem}\label{lem:uniquehash}
Let $w \in R$. Then there is a unique irrational number $\alpha\in [0,1]$ such that $w$ is a 
$\#$-binary coding of the continued fraction of $\alpha$.
\end{lem}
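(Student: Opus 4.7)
The plan is to parse $w$ along its $\#$'s into a sequence of partial quotients and then invoke the standard bijection between infinite sequences of positive integers and irrationals in $(0,1)$ via continued fractions.

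First, I would observe that the positions of $\#$ in any element of $\SigmaH^{\omega}$ are intrinsic to the word, so membership in $R = (\#(0|1)^*1(0|1)^*)^{\omega}$ forces a unique decomposition $w = \#u_1\#u_2\#u_3\cdots$ with $u_i \in \{0,1\}^{*}$ each containing at least one $1$. Setting $a_i := [u_i]_2$ then yields a sequence of positive integers $a_1,a_2,\dots$.

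Next, I would appeal to the standard theory of continued fractions recorded in Section~\ref{section:prelim} (following Rockett–Szüsz): since each $a_i \geq 1$, the limit $\alpha := \lim_{k\to\infty}[0;a_1,\dots,a_k]$ exists and lies in $(0,1)$; because the expansion does not terminate, $\alpha$ is irrational; and by construction $[0;a_1,a_2,\dots]$ is its continued fraction expansion. Setting $u := u_1u_2\cdots$, we have $C_{\#}(u)=w$ and each $u_i$ is a binary representation of $a_i$, so $w$ is a $\#$-binary coding of the continued fraction of $\alpha$, establishing existence.

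For uniqueness, I would argue that if $\alpha' \in [0,1]$ is irrational and $w$ is also a $\#$-binary coding of the continued fraction of $\alpha'$, then there exists $u' \in (\{0,1\}^{*})^{\omega}$ with $C_{\#}(u')=w$ and each $u'_i$ a binary representation of the $i$-th partial quotient $a'_i$ of $\alpha'$. Since $C_{\#}$ is a bijection (noted immediately after its definition), $u'=u$, so $a'_i = [u'_i]_2 = [u_i]_2 = a_i$ for all $i$, and the uniqueness of the continued fraction expansion of an irrational number then gives $\alpha'=\alpha$. I do not anticipate a real obstacle: the lemma essentially records that $R$ is exactly the set of $\#$-binary codings of irrationals in $[0,1]$, and both directions are immediate once $w$ has been parsed into its partial quotients.
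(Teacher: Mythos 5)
Your proposal is correct and follows essentially the same route as the paper: parse $w$ along its $\#$'s into partial quotients $a_i=[w_i]_2\geq 1$, form the irrational $\alpha=[0;a_1,a_2,\dots]$, and derive uniqueness from the uniqueness of binary representations and of continued fraction expansions. Your write-up is merely a more detailed version of the paper's brief argument.
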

\begin{proof}
By the definition of $R$, there is $w_1w_2\cdots \in ((0|1)^*1(0|1)^*)^\omega$ such that 
\[
w = \#w_1\#w_2\# \cdots.
\]
Since $w_i \in (0|1)^*1(0|1)^*$, we have that $w_i$ is 
a $\{0,1\}$-word containing at least one $1$. Let $a_i$ be the natural number that $a_i =[w_i]_2$. Because $w_i$ contains a $1$, we must have $a_i \neq 0$. Thus $w$ is a $\#$-binary coding of the infinite continued fraction of the irrational $\alpha = [0; a_1, a_2, \dots]$. Uniqueness follows directly from the fact that both binary expansions and continued fraction expansions only represent one number.
\end{proof}

\noindent For $w\in R$, let $\alpha(w)$ be the real number given by Lemma \ref{lem:uniquehash}. When $v=(v_1,\dots,v_n) \in R^n$, we write $\alpha(v)$ for $(\alpha(v_1),\dots, \alpha(v_n))$.\newline

\noindent Even though continued fractions are unique, their $\#$-binary codings are not, because binary representations can have trailing zeroes. This ambiguity is required in order to properly recognize relationships between multiple numbers, as one of the numbers involved may require more bits in a coefficient than the other(s). Occasionally we need to ensure that all possible representations of a given tuple of numbers are contained in a set. For this reason, we introduce the zero-closure of subsets of $R^n$. 

\begin{defi} Let $X\subseteq R^n$ be aligned. The \textbf{zero-closure} of $X$ is 
\[
\{u \in R^n \ : \ u \text{ is aligned } \wedge \  \exists v \in X  \  \alpha(u) = \alpha(v)\}.
\]
\end{defi}
\begin{lem}
\label{zero_closure}
    Let $X \subseteq R^n$ be $\omega$-regular and aligned. Then the zero-closure of $X$ is also $\omega$-regular.
\end{lem}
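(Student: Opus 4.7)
The plan is to realize the zero-closure as the projection of an $\omega$-regular subset of $R^n \times R^n$, which then allows invoking the closure properties in Fact~\ref{fact:omegaclosure}. Specifically, define
$$Z := \{(u,v) \in R^n \times R^n : v \in X \text{ and } \alpha(u_i) = \alpha(v_i) \text{ for all } i\},$$
so that the zero-closure of $X$ is exactly the projection $\pi_1(Z)$ onto the first coordinate. Since projection preserves $\omega$-regularity by Fact~\ref{fact:omegaclosure}, it suffices to show that $Z$ is $\omega$-regular.

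The relation $Z$ combines membership of $v$ in the aligned $\omega$-regular set $X$ (itself $\omega$-regular, giving $R^n \times X$ as a trivial factor) with a coordinatewise ``same-value'' condition relating $u_i$ to $v_i$. Since $X$ is aligned, a B\"uchi automaton $\mathcal{A}$ for $X$ may be taken over the smaller alphabet $\Sigma_{\mathrm{al}} := \{0,1\}^n \cup \{\#\}$, in which each $\#$ marks simultaneous segment boundaries across all $n$ coordinates. I would then build a product automaton $\mathcal{B}$ that reads $(u,v)$ synchronously, runs $\mathcal{A}$ on $v$, and verifies pointwise agreement of the segment-value sequences of $u_i$ and $v_i$. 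Within each segment, $u_i$ and $v_i$ differ only by trailing zeros, so the verification can be organized around $v$'s aligned $\#$s as synchronization points, with finite per-coordinate flags recording whether $u_i$ is currently in its \emph{active} phase (its bit must match that of $v_i$) or in its \emph{trailing-zero} phase (its bit must be $0$) relative to the segment $\mathcal{A}$ is currently processing.

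The main obstacle is to keep this simulation finite-state despite the fact that the per-coordinate discrepancy between segment boundaries in $u$ and $v$ may in principle grow. This is the most delicate part of the construction: one must exploit both the alignment of $X$ and the nondeterminism available in choosing the witness $v$ so as to commit to $v$'s segment boundaries in a way that avoids unbounded buffering of pending bits, for example by ensuring that the witness $v$ can be chosen with segment lengths adaptively determined by the current behaviour of $u$. Once the $\omega$-regularity of $Z$ is established, the zero-closure of $X$ follows immediately as the projection $\pi_1(Z)$.
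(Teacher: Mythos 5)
Your reduction of the lemma to the $\omega$-regularity of the relation $Z=\{(u,v) : v\in X,\ \alpha(u_i)=\alpha(v_i)\text{ for all }i\}$ is exactly where the argument breaks down: under the synchronous convolution used to define $\omega$-regularity of relations, $Z$ is not $\omega$-regular in general, so the obstacle you flag at the end is not a delicate technicality to be finessed but a genuine impossibility. Already the binary relation $\{(u,v)\in R^2 : \alpha(u)=\alpha(v)\}$ fails to be $\omega$-regular: take $u$ with blocks $s_1,\dots,s_{N+1}\in\{10,11\}$ written with no padding, and $v$ with the same block values but each block padded with zeros to a large common length $L$. After reading $3(N+1)$ letters of the convolution, a purported $N$-state automaton has consumed all of $u$'s distinguishing blocks while still sitting inside the \emph{first} block of $v$; since there are $2^{N+1}$ possible value sequences but only $N$ states, a prefix-swapping argument yields an accepted pair with $\alpha(u)\neq\alpha(v)$. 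The same unbounded drift between corresponding $\#$'s infects $Z$ in each coordinate.

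Your proposed escape --- choosing the witness $v$ ``adaptively'' so that its segment lengths track those of $u$ --- is incompatible with the projection framework you set up. The set $X$ is fixed and need not be closed under re-padding (that is the entire reason the zero-closure is introduced): $X$ could consist of a single word $v$ whose $k$-th block carries $k$ trailing zeros, in which case the minimally padded $u$ in the zero-closure has exactly one witness, and that witness drifts without bound. Hence no sub-relation $Z'\subseteq Z$ with $\pi_1(Z')$ equal to the zero-closure can avoid containing such pairs. The paper's proof circumvents this by never reading $v$ at all: it transforms the automaton $\mathcal{A}$ for $X$ directly, reading only $u$ and simulating a run of $\mathcal{A}$ on a guessed witness block by block, with one gadget (the $\mu$-states, looping on $(0,\dots,0)$) to lengthen a block of the witness and another (the $\nu$-states, which jump in a single step across any path of $\mathcal{A}$ labelled $(0,\dots,0)^m(\#,\dots,\#)$) to shorten one. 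Because the simulation resynchronizes at every $\#$ of $u$ and the within-block discrepancy is absorbed into a loop or a single transition, no unbounded buffering ever arises. To repair your write-up, replace the relation-and-projection step by such a direct automaton transformation.
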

\begin{proof}
    Let $\mathcal{A}$ be a B\"uchi automaton recognizing $X$. We use $Q$ to denote the set of states of $\cal A$. We create a new automaton $\mathcal{A}'$ that recognizes the zero-closure of $X$, as follows:
    \begin{enumerate}[leftmargin=*,labelindent=16pt,label=\bfseries Step \arabic*.]
    \item[(Step 1)] Start with the automata $\cal A$.
        \item[(Step 2)] For each transition on the $n$-tuple $(\#,\dots,\#)$ from a state $p$ to a state $q$, we add a new state $\mu(p,q)$ that loops to itself on the $n$-tuple $(0,\dots,0)$ and transitions to state $q$ on $(\#,\dots,\#)$. We add a transition from $p$ to $\mu(p,q)$ on $(0,\dots,0)$.
        \item[(Step 3)] For every pair $p, q$ of states of $\cal A$  for which $p$ has a run to $q$ on a word of the form $
        (0,\dots,0)^m (\#,\dots,\#)$ for some $m$,
    we add a transition from state $p$ to a new state $\nu(p,q)$ on $(\#,\dots,\#)$, and for every transition out of state $q$, we create a copy of the transition that starts at state $\nu(p,q)$ instead. If any original run from state $p$ to state $q$ passes through a final state, we make $\nu(p,q)$ a final state.
    \item[(Step 4)] Denote the resulting automaton by $\cal A'$ and its set of states by $Q'$.
    \end{enumerate}
    
\vspace{0.2cm}
    
\noindent    We now show that $L(\cal A')$ is the zero-closure of $X$.  We first show that the zero-closure is contained in $L(\cal A')$. 
    Let $v \in X$ and $w \in R^n$ be such that $w$ is aligned and $\alpha(v) = \alpha(w)$.  Since both $v$ and $w$ are aligned, there are 
 $b=b_1b_2\cdots,c=c_1c_2\cdots  \in ((\{0,1\}^n)^*)^\omega$ such that $C_{\#}(b)=v$ and $C_{\#}(c)=w$. 
 Since $\alpha(v)=\alpha(w)$, we have that $[b_i]_2=[c_i]_2$ for $i \in \N$. Therefore, for each $i\in \N$, the words $b_i$ and $c_i$ only differ by trailing (tuples of) zeroes. Let $s=s_1s_2\cdots \in Q^{\omega}$ be an accepting run of $v$ on $\cal A$. We now transfer this run into an accepting run $s'=s_1's_2'\cdots$ of $w$ on $\mathcal{A}'$. For $i\in \N$, let $y(i)$ be the position of the $i$-th $(\#,\dots,\#)$ in $v$ and let $z(i)$ be the position of the $i$-th $(\#,\dots,\#)$ in $w$. For each $i\in \N$, we define a sequence $s'_{z(i)+1}\cdots s'_{z(i+1)}$ of states of $\cal A'$ as follows: 
 \begin{enumerate}
    \item If $|c_i|=|b_i|$, then $c_i=b_i$. We set
    \[
    s'_{z(i)+1}\cdots s'_{z(i+1)} := s_{y(i)+1}\cdots s_{y(i+1)}.
    \]
     \item If $|c_i|>|b_i|$, then $c_i=b_i(0,\dots,0)^{|c_i|-|b_i|}$. We set
     \begin{align*}
     s'_{z(i)+1}&\cdots s'_{z(i+1)} \\
     &:= s_{y(i)+1}\cdots s_{y(i+1)-1}\underbrace{\mu(s_{y(i+1)-1},s_{y(i+1)}) \cdots \mu(s_{y(i+1)-1},s_{y(i+1)}}_{(|c_i|-|b_i|)\text{-times} }s_{y(i+1)}.
     \end{align*}
    Thus the new run follows the old run up to $s_{y(i+1)-1}$ and then transitions to one of the newly added states in the Step 2. It loops on $(0,\dots,0)$ for $|c_i|-|b_i|-1$-times before moving to $s_{y(i+1)}$.
          \item  If $|c_i|<|b_i|$, then $b_i=c_i(0,\dots,0)^{|b_i|-|c_i|}$. We set
        \[
     s'_{z(i)+1}\cdots s'_{z(i+1)} := s_{y(i)+1}\cdots s_{y(i)+|c_i|}\nu(s_{y(i)+|c_i|},s_{y(i+1)}).  
     \]
     The new run utilizes one of the newly added $(\#,\dots,\#)$ transitions and corresponding states added in Step 3.
 \end{enumerate}
 The reader can now easily check that $s'$ is an accepting run of $w$ on $\cal A'$.\newline
 
 \noindent We now show that $L(\cal A')$ is contained in the zero-closure of $X$. We prove that the only accepting runs on $\mathcal{A}'$ are based on accepting runs on $\mathcal{A}$ with trailing zeroes either added or removed. Let $w=w_1w_2\cdots \in L(\cal A')$, 
 and let $s'=s'_1s'_2\cdots \in Q'^{\omega}$ be an accepting run of $w$ on $\cal A'$. We construct $v\in X$ and a run $s=s_1s_2\cdots\in Q^{\omega}$ of $w_2$ on $\mathcal{A}$ such that $\alpha(v) = \alpha(w)$ and $s$ is an accepting run of $v$. We start by setting
 $
 v := w_1w_2\cdots \hbox{ and } s := s_1's_2'\cdots .
 $
For each $i\in \N$, we replace $w_i$ in $v$ and $s_i'$ in $s$ as follows:
 \begin{enumerate}
     \item If $s_i'\in Q$, then we make no changes to $s_i'$ and $w_i$.
    \item If $s_i'=\mu(p,q)$ for some $p,q\in Q$, we delete the $s_i'$ in $s$ and delete $w_{i}$ in $v$.
     \item If $s_i=\nu(p,q)$ for some $p,q\in Q$, then we replace 
     \begin{itemize}
         \item[(a)] $s_i'$ by a run $t=t_1\cdots t_{n+1}$ of $(0,\dots,0)^n(\#,...,\#)$ from $p$ to $q$, and 
         \item[(b)] $w_i$ by $(0,\dots,0)^n(\#,...,\#)$.
     \end{itemize}  If $\nu(p,q)$ is a final state of $\cal A'$, we choose $t$ such that it passed through a final state of $\cal A$.
  \end{enumerate}
It is clear that the resulting $s$ is in $Q^{\omega}$. The reader can check $s$ is an accepting run of $v$ on $\cal A$ and that $\alpha(v)=\alpha(w)$. Thus $w$ is in the zero-closure of $X$.
   \end{proof}

	\begin{lem}
	\label{continued_fraction_ordering_reg}
	The set 
\[
\{ (w_1,w_2)\in R^2 \ : \ w_1 \sim_\# w_2 \text{ and } \alpha(w_1)<\alpha(w_2) \} 
\]
is $\omega$-regular.
	\end{lem}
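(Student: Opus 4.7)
The plan is to reduce the lemma to the third bullet of Fact \ref{fact:regularorder}, which already establishes that the $<_{\rm alex,2}$-ordering on aligned $\#$-codings is $\omega$-regular, and then to intersect with $R^2$. The main work is checking that the continued-fraction ordering from Fact \ref{continued_fraction_ordering} corresponds exactly to $<_{\rm alex,2}$ under the coding map.

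First, I would unpack the definitions. If $w \in R$, then by definition $w = \# u_1 \# u_2 \# \cdots$ for some $u_1, u_2, \ldots \in (0|1)^*1(0|1)^*$, and by the proof of Lemma \ref{lem:uniquehash} the number $\alpha(w)$ is precisely $[0; a_1, a_2, \ldots]$ where $a_i = [u_i]_2 \geq 1$. In particular, $C_\#^{-1}(w) = u_1 u_2 \cdots$. Now suppose $w_1, w_2 \in R$ are aligned and let $u = C_\#^{-1}(w_1)$, $v = C_\#^{-1}(w_2)$ with corresponding partial-quotient sequences $(a_i)$ and $(a_i')$. Since $\alpha(w_1), \alpha(w_2) \in (0,1)$, we have $a_0 = a_0' = 0$, so any differing partial quotient occurs at some index $k \geq 1$. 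Applying Fact \ref{continued_fraction_ordering} and comparing with the definition of $<_{\rm alex,2}$, the minimal $k$ with $a_k \neq a_k'$ is precisely the minimal $k$ with $[u_k]_2 \neq [v_k]_2$, and the parity conditions in the two orderings match. Hence
\[
\alpha(w_1) < \alpha(w_2) \iff C_\#^{-1}(w_1) <_{\rm alex,2} C_\#^{-1}(w_2).
\]

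Next, I would assemble the conclusion. Since every word in $R$ contains infinitely many $\#$'s, $R \subseteq H_\infty$, and therefore the set in the lemma equals
\[
R^2 \ \cap \ \{ (w_1, w_2) \in H_\infty^2 : w_1 \sim_\# w_2 \text{ and } C_\#^{-1}(w_1) <_{\rm alex,2} C_\#^{-1}(w_2)\}.
\]
The first factor is $\omega$-regular because $R$ is, and the second is $\omega$-regular by the third bullet of Fact \ref{fact:regularorder}. Closure of $\omega$-regular sets under intersection (Fact \ref{fact:omegaclosure}) then yields the claim.

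There is no serious obstacle here; the only thing to watch is the indexing. Both the $<_{\rm alex,2}$ ordering and the continued fraction ordering are parity-sensitive, and it is only the convention that $a_0 = 0$ for numbers in $(0,1)$ that keeps the parities aligned after stripping off the (uncoded) $0$-th partial quotient. Once that bookkeeping is noted, the proof is essentially a one-line reduction.
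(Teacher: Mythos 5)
Your proof is correct and follows exactly the paper's route: reduce the ordering on $\alpha(w_1),\alpha(w_2)$ to $<_{\rm alex,2}$ via Fact \ref{continued_fraction_ordering}, then invoke the third bullet of Fact \ref{fact:regularorder} and closure under intersection. The paper states this in two lines; your additional bookkeeping about the index alignment between $u_i$ and $a_i$ is a correct and harmless elaboration.
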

\begin{proof}
Let $w_1,w_2\in R$ be such that $w_1 \sim_\# w_2$. 	
By Fact \ref{continued_fraction_ordering} we have that  $\alpha(w_1)<\alpha(w_2)$ if only $C_\#^{-1}(w_1) <_{\rm alex,2} C_\#^{-1}(w_2)$. Thus $\omega$-regularity follows from Fact \ref{fact:regularorder}.
\end{proof}

	\begin{lem}
		\label{continued_fraction_quadratic_reg}
		Let $a \in [0, 1)$ be a quadratic irrational. Then 
		\[
		\{ w \in R \ : \ \alpha(w)=a\}
		\]
		 is $\omega$-regular.
	\end{lem}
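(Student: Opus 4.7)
The plan is to combine Lagrange's classical theorem with the zero-closure lemma (Lemma \ref{zero_closure}) that has just been established. Since $a$ is a quadratic irrational, its continued fraction expansion $[0;a_1,a_2,\ldots]$ is eventually periodic: there exist computable integers $N\geq 0$ and $p\geq 1$ such that $a_{i+p}=a_i$ for all $i>N$. This is the only nontrivial external input I need.

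Next, I would fix, for each $i\geq 1$, the shortest binary representation $b_i\in\{0,1\}^*$ of $a_i$ in the convention of the paper (so $b_i$ ends in $1$, since $a_i\geq 1$). Eventual periodicity of $(a_i)_{i\geq 1}$ yields eventual periodicity of $(b_i)_{i\geq 1}$ with the same period $p$, so the word
\[
w_0 \;:=\; \# b_1 \# b_2 \# b_3 \# \cdots \;\in\; \Sigma_{\#}^{\omega}
\]
is ultimately periodic, and in particular $w_0\in R$ with $\alpha(w_0)=a$. Since any ultimately periodic word over a finite alphabet has an $\omega$-regular singleton (a Büchi automaton reads the fixed preamble and then loops through the period, with a final state inside the loop), the set $\{w_0\}\subseteq R$ is $\omega$-regular. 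Being a singleton subset of $R^1$, it is trivially aligned.

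Finally I would observe that the set we want coincides with the zero-closure of $\{w_0\}$: for any $w\in R$ we have $\alpha(w)=a$ if and only if $\alpha(w)=\alpha(w_0)$, which is exactly the defining condition of the zero-closure of $\{w_0\}$. Lemma \ref{zero_closure} then immediately gives $\omega$-regularity, completing the proof. I do not foresee a genuine obstacle here; the only step requiring external theory is invoking Lagrange's theorem, and the rest is bookkeeping to reduce to the zero-closure lemma.
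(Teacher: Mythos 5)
Your argument is correct and follows essentially the same route as the paper: invoke Lagrange's theorem to get an eventually periodic continued fraction, observe that the singleton of the corresponding ultimately periodic $\#$-binary coding is $\omega$-regular, and then apply Lemma~\ref{zero_closure} to capture all codings of $a$. The only difference is that you spell out the bookkeeping (choice of shortest binary blocks, periodicity of the block sequence) that the paper leaves implicit.
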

	\begin{proof}
		The continued fraction expansion of $a$ is eventually periodic (see for example \cite[Theorem 177]{HW}).
        Thus there is an eventually periodic $u\in (\{0,1\}^*)^{\omega}$ such that $C_{\#}(u)$ is a $\#$-binary coding of the continued fraction of $a$. The singleton set containing an eventually periodic string is $\omega$-regular. It remains to expand this set to contain all representations via Lemma~\ref{zero_closure}.
	\end{proof}

\begin{lem}
\label{less_than_half}
The set $\{ w \in R \ : \ \alpha(w) < \frac12\}$ is $\omega$-regular.
\end{lem}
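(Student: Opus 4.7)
The plan is to reduce the inequality $\alpha(w) < 1/2$ to a simple condition on the very first binary block of $w$. The key observation is an elementary fact about continued fractions: for an irrational $\alpha = [0; a_1, a_2, \ldots] \in (0,1)$, we have $\alpha < 1/2$ if and only if $a_1 \geq 2$. Indeed, if $a_1 = 1$, then $\alpha = 1/(1 + \xi)$ with $\xi = [0; a_2, a_3, \ldots] \in (0,1)$, so $\alpha > 1/2$; and if $a_1 \geq 2$, then $\alpha = 1/(a_1 + \xi) \leq 1/(2 + \xi) < 1/2$.

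Next, I translate this into a condition on $w$. If $w = \#w_1\#w_2\#\cdots \in R$ is a $\#$-binary coding of $\alpha = [0; a_1, a_2, \ldots]$, then by definition $[w_1]_2 = a_1$. Hence $\alpha(w) < 1/2$ if and only if $[w_1]_2 \geq 2$, which happens if and only if $w_1$ contains a $1$ at some position other than the zeroth position, i.e., $w_1 \notin 10^*$. (Recall that membership in $R$ already forces $w_1$ to contain at least one $1$, so $a_1 \geq 1$ is automatic.)

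Finally, I build a Büchi automaton for the set: starting from the initial state, it reads the leading $\#$, then tracks (via a small finite-state check) whether the current block, delimited by the next $\#$, is of the form $10^*$ or not, and after that it simply verifies that the remaining suffix lies in $R$. Since the property ``first block is not of the form $10^*$'' is a regular condition on the prefix and $R$ is $\omega$-regular, the intersection is $\omega$-regular by Fact \ref{fact:omegaclosure}. The main obstacle, such as it is, is only the bookkeeping of the first block; no appeal to zero-closure or to the more delicate comparison lemmas is required, because the threshold $1/2$ is captured entirely by the first continued-fraction coefficient.
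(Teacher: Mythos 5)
Your proof is correct and follows essentially the same route as the paper: reduce $\alpha(w)<\tfrac12$ to the condition $a_1>1$ on the first continued-fraction coefficient, observe that (in the least-significant-bit-first convention) this fails exactly when the first block is of the form $10^*$, and conclude by $\omega$-regularity of $R$ minus the regular set $\#10^*(\#(0\cup 1)^*)^\omega$. The paper states the continued-fraction fact without proof; your short verification of it is a harmless addition.
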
	
\begin{proof}
Let $\alpha(w) = [0; a_1, a_2, \dots]$.
It is easy to see that $\alpha(w) < \frac{1}{2}$ if and only if $a_1 > 1$.
Thus we need only check that $a_1 \neq 1$. The set of $w\in R$ for which this true is just $R \setminus Y$, where $Y\subseteq \SigmaH^{\omega}$ is given by the regular expression $\#10^*(\#(0 \cup 1)^*)^\omega$.
\end{proof}

\subsection{\texorpdfstring{$\#$}{Hashtag}-Ostrowski-representations} We now extend the $\#$-binary coding to Ostrowski representations.

\begin{defi}
Let $v,w\in (\SigmaH)^{\omega}$, let $x=x_1x_2x_3\cdots \in \N^{\omega}$ and let $b=b_1b_2b_3 \cdots \in (\{0,1\}^*)^{\omega}$ be such that $w=C_\#(b)$ and $[b_i]_2 = x_i$ for each $i$. 
\begin{itemize}
    \item For $N\in \N$, we say that $w$ is a \textbf{$\#$-$v$-Ostrowski representation of $N$} if $v$ and $w$ are aligned and $x$ is an $\alpha(v)$-Ostrowski representation of $N$.
    \item For $c\in I_{\alpha(v)}$, we say that $w$ is a \textbf{$\#$-$v$-Ostrowski representation of $c$} if $v$ and $w$ are aligned and $x$ is an $\alpha(v)$-Ostrowski representation of $c$.
\end{itemize}
We let $A_{v}$ denote the set of all words $w \in \SigmaH^{\omega}$ such that 
$w$ is a $\#$-$v$-Ostrowski representation of some $c\in I_{\alpha(v)}$, and similarly, 
by $A_{v}^{\rm fin}$ the set of all words $w \in \SigmaH^{\omega}$ such that 
$w$ is a $\#$-$v$-Ostrowski representation of some $N\in \N$.
\end{defi}

\begin{lem}\label{afin_a_regular}
The sets 
\[
A^{\rm fin} :=\{ (v,w) \ : \ v \in R, w \in A_{v}^{\rm fin}\}, \hbox{ and } A :=\{ (v,w) \ : \ v \in R, w \in A_{v}\}.\]
are $\omega$-regular. Moreover, $A^{\rm fin} \subseteq A$.
\end{lem}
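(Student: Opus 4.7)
The plan is to build a Büchi automaton for each of $A$ and $A^{\rm fin}$ directly from the defining conditions on Ostrowski coefficients in Facts \ref{integer_ostrowski} and \ref{real_ostrowski}, together with the regular alignment relation $\sim_\#$ and the regular set $R$. Given aligned $v,w \in \SigmaH^{\omega}$, write $v=\#v_1\#v_2\cdots$ and $w=\#w_1\#w_2\cdots$, and set $a_k:=[v_k]_2$, $b_k:=[w_k]_2$. Since alignment forces $|w_k|=|v_k|$, each comparison $[w_k]_2\le[v_k]_2$ or $[w_k]_2=[v_k]_2$ is a comparison of two equal-length binary strings, which is recognized by a three-valued ``tentative comparison'' state that is overwritten at each new position and is reset at each $\#$.

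First I would construct an $\omega$-regular master set $A^*\subseteq R\times\SigmaH^{\omega}$ enforcing the conditions common to both the real and integer Ostrowski representations: $0\le b_k\le a_k$ for every $k$; $b_{k-1}=0$ whenever $b_k=a_k$, handled by carrying one bit of ``current block equals $v$'' and another of ``current block is all-zero'' and consulting them at each $\#$ transition; and $b_1\ne a_1$, handled by marking the first block with a distinguished initial state. Intersecting with the regular conditions $v\in R$ and $v\sim_\# w$ keeps $A^*$ regular by Fact \ref{fact:omegaclosure}.

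Next, to obtain $A$, I would intersect $A^*$ with the regular set of aligned pairs such that $w_k\ne v_k$ for infinitely many odd $k$: tracking the parity of the block index in the state and placing accept states exactly at odd-indexed $\#$ transitions following a block in which a difference was observed turns this into a Büchi condition. To obtain $A^{\rm fin}$, I would instead intersect $A^*$ with the regular set ``$w_k = 0^{|v_k|}$ for all sufficiently large $k$'', the standard eventually-always property realized by an accepting sink component in which every non-$\#$ letter read is $0$.

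Finally, the inclusion $A^{\rm fin}\subseteq A$ is a set-theoretic check rather than an automaton argument: every $w\in A_v^{\rm fin}$ already satisfies the common blockwise conditions, and because $v\in R$ forces $a_k\ge 1$ for all $k$, an eventually all-zero $w$ has $b_k=0\ne a_k$ for all large $k$, and in particular for infinitely many odd $k$, placing $w$ in $A_v$. The only step that requires real care is organizing the product state so that the one-block lookback required by the implication ``$b_k=a_k\Rightarrow b_{k-1}=0$'' is correctly synchronized with the blockwise equality check across the $\#$ boundary; I do not anticipate a genuine obstacle here, only routine bookkeeping.
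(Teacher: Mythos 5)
Your proposal is correct and follows essentially the same route as the paper: both reduce membership in $A^{\rm fin}$ and $A$ to the blockwise coefficient conditions of Facts \ref{integer_ostrowski} and \ref{real_ostrowski} on aligned pairs (co-lexicographic comparison of equal-length blocks, the one-block lookback for the carry condition, "eventually all zero" for $A^{\rm fin}$, and "$b_k\neq a_k$ for infinitely many odd $k$" for $A$), all of which are $\omega$-regular. Your direct verification of $A^{\rm fin}\subseteq A$ from the coefficient conditions (using $a_k\geq 1$ for $v\in R$) is a harmless variant of the paper's appeal to Fact \ref{z_o_correspondence}.
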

\begin{proof}
The statement that $A^{\rm fin} \subseteq A$, follows immediately from the definitions of $A^{\rm fin}$ and $A$ and Fact \ref{z_o_correspondence}. It is left to establish the $\omega$-regularity of the two sets.\newline

\noindent For $A^{\rm fin}$: Let $B\supseteq A^{\rm fin}$ be the set of all pairs $(v,w)$ such that $v\in R$ and $v\sim_{\#} w.$
Note that $B$ is $\omega$-regular. Let $(v,w)\in B$.
Since  $v$ and $w$ have infinitely many $\#$ symbols and are aligned,
there are unique $a=a_1a_2\cdots, b=b_1b_2\cdots \in (\{0,1\}^*)^{\omega}$ such that $C_\#(a)=v$, $C_\#(b)=w$ and $|a_i|=|b_i|$ for each $i\in \N$.
Then by Fact \ref{integer_ostrowski}, $(v, w) \in A^{\rm fin}$ if and only if
\begin{enumerate}[label=(\alph*)]
        \item $b$ has finitely many $1$ symbols;
        \item $b_1 <_{\rm colex} a_1$;
        \item $b_i \leq_{\rm colex} a_i$ for all $i > 1$;
        \item if $b_i = a_i$, then $b_{i-1} = 0$.
    \end{enumerate}
It is easy to check that all four conditions are $\omega$-regular. \newline

\noindent For $A$: As above, let $(v,w)\in B$.
Since  $v$ and $w$ have infinitely many $\#$ symbols and are aligned,
there are unique $a=a_1a_2\cdots, b=b_1b_2\cdots \in (\{0,1\}^*)^{\omega}$ such that $C_\#(a)=v$, $C_\#(b)=w$ and $|a_i|=|b_i|$ for each $i\in \N$.
 Then by Fact \ref{real_ostrowski}, $(v, w) \in A$ if and only if
    
    \begin{enumerate}[label=(\alph*),start=5]
        \item $b_1 <_{\rm colex} a_1$;
        \item $b_i \leq_{\rm colex} a_i$ for all $i > 1$;
        \item if $b_i = a_i$, then $b_{i-1} = 0$;
        \item $b_i \neq a_i$ for infinitely many odd $i$.
    \end{enumerate}
Again, it is easy to see that all four conditions are $\omega$-regular.
\end{proof}

\begin{defi}
Let $v\in R$. We define $Z_v : A_v^{\rm fin} \to \N$ to be the function that maps $w$ to the natural number whose $\#$-$v$-Ostrowski representation is $w$.\newline
Similarly, we define $O_v:A_v \to I_{\alpha(v)}$ to be the function that maps $w$ to the real number  whose $\#$-$v$-Ostrowski representation is $w$.
\end{defi}

\begin{lem}\label{zv_ov_bijective} Let $v \in R$. Then $Z_v : A_v^{\rm fin} \to \mathbb{N}$ and $O_v : A_v \to I_{\alpha(v)}$ are bijective.
\end{lem}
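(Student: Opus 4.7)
The plan is to reduce both bijectivity statements to the uniqueness clauses of Fact \ref{integer_ostrowski} and Fact \ref{real_ostrowski}, using the observation that for fixed $v \in R$ the $\#$-markers in $v$ pin down the slot lengths of every aligned word, and hence pin down the binary encoding of each Ostrowski digit. Concretely, I would write $v = C_\#(u)$ with $u = u_1 u_2 \cdots$, set $\ell_i := |u_i|$ and $a_i := [u_i]_2$, so that $[0;a_1,a_2,\dots]$ is the continued fraction expansion of $\alpha(v)$. Any $w$ aligned with $v$ factors uniquely as $w = C_\#(b)$ with $|b_i| = \ell_i$, and within a single slot the map $b_i \mapsto [b_i]_2$ is a bijection between $\{0,1\}^{\ell_i}$ and $\{0,1,\ldots,2^{\ell_i}-1\}$. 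Therefore $w$ is recovered uniquely from the digit sequence $([b_i]_2)_i$, since each digit is bounded by $a_i \leq 2^{\ell_i}-1$ and hence fits in $\ell_i$ bits.

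Given this rigidity, injectivity of $Z_v$ is immediate: two words in $A_v^{\rm fin}$ mapping to the same $N \in \N$ must both decode to the unique Ostrowski digit sequence of $N$ supplied by Fact \ref{integer_ostrowski}, and the slotwise binary bijection then forces the two words to coincide. Surjectivity proceeds by construction: given $N \in \N$, I would take its eventually-zero Ostrowski digit sequence $(x_i)$ from Fact \ref{integer_ostrowski} (which satisfies $x_i \leq a_i$), encode each $x_i$ as the unique length-$\ell_i$ binary string representing it, and reassemble these into $w = C_\#(b_1 b_2 \cdots) \in A_v^{\rm fin}$ with $Z_v(w) = N$.

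The argument for $O_v$ is entirely parallel, with Fact \ref{real_ostrowski} replacing Fact \ref{integer_ostrowski}; the only additional point is that the infinitely-often condition $b_k \neq a_k$ for odd $k$ is already built into the definition of $A_v$, so it is preserved automatically in both directions of the bijection. I do not anticipate any substantive obstacle here: the lemma is essentially a bookkeeping reduction to the classical uniqueness of Ostrowski expansions, and the only point that deserves explicit verification is that each Ostrowski digit fits in the $\ell_i$ bits that $v$ allots to it, which is immediate from $x_i \leq a_i = [u_i]_2 \leq 2^{\ell_i} - 1$.
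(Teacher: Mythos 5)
Your proposal is correct and follows essentially the same route as the paper: both reduce injectivity to the uniqueness of Ostrowski representations (Facts \ref{integer_ostrowski} and \ref{real_ostrowski}) plus the fact that alignment with $v$ fixes the length of each binary-encoded coefficient, and both establish surjectivity by observing that each Ostrowski digit satisfies $b_i \leq a_i$ and therefore fits in the number of bits that $v$ allots to that slot.
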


\begin{proof}
We first consider injectivity. By Fact \ref{integer_ostrowski} and Fact \ref{real_ostrowski} a number in $\mathbb{N}$ or in $I_{\alpha(v)}$ only has one $\alpha(v)$-Ostrowski representation. So we only need to explain why such a representation will only have one encoding in $A_v^{\rm fin}$ (respectively $A_v$). This follows from the uniqueness of binary representations up to the length of the representation, and from the fact that the requirement of having the $\#$ symbols aligned with $v$ determines the length of each binary-encoded coefficient.\newline

\noindent For surjectivity we only need to explain why an $\alpha(v)$-Ostrowski representation can always be encoded into a string in $A_v^{\rm fin}$ (respectively $A_v$). It suffices to show that the requirement of having the $\#$ symbols aligned with $v$ will never result in needing to fit the binary encoding of a number into too few symbols, i.e., that it will never result in having to encode a natural number $n$ in binary in fewer than $1 + \lfloor \log_2 n \rfloor$ symbols. Since the function $1 + \lfloor \log_2 n \rfloor$ is monotone increasing, we can encode any natural number below $n$ in $k$ symbols if we can encode $n$ in binary in $k$ symbols. However, by Fact \ref{integer_ostrowski} and Fact \ref{real_ostrowski}, the coefficients in an $\alpha(v)$-Ostrowski representation never exceed the corresponding coefficients in the continued fraction for $\alpha(v)$, i.e., $b_n \leq a_n$.
\qedhere

\end{proof}
\begin{defi} Let $v\in R$. We write $\mathbf{0}_v$ for $Z_v^{-1}(0)$, and $\mathbf{1}_v$ for $Z_v^{-1}(1)$. 
\end{defi}

\begin{lem}\label{0_1_regular}
The relations $\mathbf{0}_* = \{(v,\mathbf{0}_v) \ : \ v \in R\}$ and $\mathbf{1}_* = \{(v,\mathbf{1}_v) \ : \ v \in R\}$ are $\omega$-regular.
\end{lem}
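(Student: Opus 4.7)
The strategy is to give explicit $\omega$-regular descriptions of both relations by reading off the $\alpha(v)$-Ostrowski representations of $0$ and $1$ directly from Fact~\ref{integer_ostrowski}.

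For $\mathbf{0}_*$, the unique $\alpha(v)$-Ostrowski representation of $0$ has every digit $b_i = 0$, so $\mathbf{0}_v$ is precisely the word aligned with $v$ whose non-$\#$ symbols are all $0$. Thus $\mathbf{0}_*$ is the intersection of three $\omega$-regular conditions on $(v,w)$: (i) $v \in R$; (ii) $v \sim_\# w$; and (iii) every non-$\#$ letter of $w$ is $0$. The first condition is $\omega$-regular by the regular expression defining $R$, the second by the definition of $\sim_\#$, and the third by a trivial automaton; the intersection is $\omega$-regular by Fact~\ref{fact:omegaclosure}.

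For $\mathbf{1}_*$, I would read the Ostrowski representation of $1$ off of Fact~\ref{integer_ostrowski}. Since $q_0 = 1$ and $q_1 = a_1$, if $a_1 > 1$ one can take $b_1 = 1$ and $b_i = 0$ for $i \geq 2$; if $a_1 = 1$, the constraint $b_1 < a_1$ forces $b_1 = 0$, so one must instead take $b_2 = 1$ and $b_i = 0$ for $i \geq 3$ (this is consistent with the cascade rule, since if $b_2 = a_2$ occurs one needs $b_1 = 0$, which holds). Recalling that the binary encoding is little-endian---so $1$ encoded in $k$ symbols is $10^{k-1}$ and $0$ encoded in $k$ symbols is $0^k$---one expresses $\mathbf{1}_*$ as the union of two cases. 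Case A consists of pairs $(v,w)$ with $\alpha(v) < 1/2$, $v \sim_\# w$, a $1$ immediately after the first $\#$ of $w$, and $0$ at every other non-$\#$ position. Case B consists of pairs $(v,w)$ with $\alpha(v) \geq 1/2$, $v \sim_\# w$, a $1$ immediately after the second $\#$ of $w$, and $0$ at every other non-$\#$ position. The condition $\alpha(v) < 1/2$ is $\omega$-regular by Lemma~\ref{less_than_half}, and its complement in $R$ is $\omega$-regular by Fact~\ref{fact:omegaclosure}; the remaining shape conditions are plainly recognized by small B\"uchi automata.

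There is no substantive obstacle to this argument; the only point of care is the minor case split for $\mathbf{1}_*$ based on whether the first partial quotient $a_1$ equals $1$, which is dispatched by Lemma~\ref{less_than_half}.
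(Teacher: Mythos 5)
Your proposal is correct and follows essentially the same route as the paper: $\mathbf{0}_v$ is the all-zero word aligned with $v$, and $\mathbf{1}_v$ is determined by the case split $a_1>1$ (representation $10\cdots0$) versus $a_1=1$ (representation $010\cdots0$), each case being recognized by a small B\"uchi automaton. The only cosmetic difference is that you route the case distinction through Lemma~\ref{less_than_half} while the paper checks $a_1$ directly, which amounts to the same test.
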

\begin{proof}
    Recognizing $\mathbf{0}_*$ is trivial, as the Ostrowski representations of $0$ are of the form $0\cdots 0$ for all irrational $\alpha$. Thus $\mathbf{0}_*$ is just the relation $$\{(v,w) \ : v \in R, w \text{ is $v$ with all $1$ bits replaced by $0$ bits}\}.$$ This is clearly $\omega$-regular.\newline 

    \noindent We now consider $\mathbf{1}_*$. Let $\alpha =[0;a_1,a_2,\dots]$ be an irrational number.
    If $a_1 > 1$, the $\alpha$-Ostrowski representations of $1$ are of the form $10\cdots0$. If $a_1 = 1$, the $\alpha$-Ostrowski representations of $1$ are of the form $010\cdots0$. Thus, in order to recognize $\mathbf{1}_*$, we only need to be able to recognize if a number in binary representation is $0$, $1$, or greater than $1$. Of course, this is easily done on a B\"uchi automaton. 
\end{proof}

\begin{lem}\label{lem:Oofone} Let $s\in A_v^{\rm fin}$. Then $\alpha(v) Z_{v}(s) - O_v(s)\in \Z$ and \[
O_v(\mathbf{1}_v) = \begin{cases} \alpha(v) & \hbox{ if } \alpha(v)< \frac{1}{2};\\
\alpha(v) - 1 & \hbox{otherwise.}
\end{cases}
\]
\end{lem}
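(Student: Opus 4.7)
The first claim is essentially a translation of Fact \ref{z_o_correspondence} into the notation of $\#$-encodings. Given $s \in A_v^{\rm fin}$, unpack it as $s = C_\#(b)$ where $b = b_1 b_2 \cdots$, so that $b_1 b_2 \cdots$ (with $b_i$ interpreted via $[\cdot]_2$) is an $\alpha(v)$-Ostrowski representation of $N := Z_v(s)$ in the sense of Fact \ref{integer_ostrowski}. By Lemma \ref{afin_a_regular}, $s \in A_v$ as well, so by Lemma \ref{zv_ov_bijective}, $O_v(s)$ is the unique element of $I_{\alpha(v)}$ whose $\alpha(v)$-Ostrowski representation (in the real sense of Fact \ref{real_ostrowski}) is encoded by the same string $s$. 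Fact \ref{z_o_correspondence} says exactly that $f_{\alpha(v)}(\alpha(v) N)$ has the very same sequence of digits as its $\alpha(v)$-Ostrowski representation. By uniqueness (Fact \ref{real_ostrowski}) this forces $O_v(s) = f_{\alpha(v)}(\alpha(v) Z_v(s))$, and by the definition of $f_{\alpha(v)}$ we immediately get $\alpha(v) Z_v(s) - O_v(s) \in \Z$.

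For the second claim, I apply the first part to $s = \mathbf{1}_v$, for which $Z_v(\mathbf{1}_v) = 1$ by definition. This yields $\alpha(v) - O_v(\mathbf{1}_v) \in \Z$, so $O_v(\mathbf{1}_v) = \alpha(v) - k$ for some integer $k$. Since $O_v(\mathbf{1}_v) \in I_{\alpha(v)}$ and $\alpha(v) \in (0,1)$ implies $\lfloor \alpha(v) \rfloor = 0$, we have $I_{\alpha(v)} = [-\alpha(v), 1 - \alpha(v))$, so $k$ is the unique integer with $-\alpha(v) \le \alpha(v) - k < 1 - \alpha(v)$, equivalently $2\alpha(v) - 1 < k \le 2\alpha(v)$.

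If $\alpha(v) < \tfrac{1}{2}$, then $2\alpha(v) - 1 < 0 < 2\alpha(v) < 1$, so $k = 0$ and $O_v(\mathbf{1}_v) = \alpha(v)$. If $\alpha(v) > \tfrac{1}{2}$ (equality is excluded by irrationality), then $0 < 2\alpha(v) - 1 < 1 < 2\alpha(v) < 2$, so $k = 1$ and $O_v(\mathbf{1}_v) = \alpha(v) - 1$. This matches the stated case split.

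The only point that requires any real care is invoking Fact \ref{z_o_correspondence}: one must verify that the digit sequence carried by $s$ satisfies the hypotheses of both Fact \ref{integer_ostrowski} and Fact \ref{real_ostrowski} simultaneously, which is precisely the content of Lemma \ref{afin_a_regular} together with the fact (from Fact \ref{z_o_correspondence}) that an integer Ostrowski representation, when extended by zeros on the right, satisfies the non-saturation condition on odd indices in Fact \ref{real_ostrowski} automatically --- this is where the hypothesis $s \in A_v^{\rm fin}$ (rather than just a finite digit string) matters.
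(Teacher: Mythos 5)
Your proof is correct and follows essentially the same route as the paper: both parts rest on Fact \ref{z_o_correspondence} giving $O_v(s) = f_{\alpha(v)}(\alpha(v) Z_v(s))$, and the case split for $O_v(\mathbf{1}_v)$ is obtained by locating the unique integer translate of $\alpha(v)$ inside $I_{\alpha(v)} = [-\alpha(v), 1-\alpha(v))$, exactly as the paper does (your phrasing via the integer $k$ with $2\alpha(v)-1 < k \le 2\alpha(v)$ is just a repackaging of the paper's direct membership check). No gaps.
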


\begin{proof}
By Fact \ref{z_o_correspondence}, $O_v(s) = f_{\alpha(v)}(\alpha(v) Z_v(s))$. Thus 
\[
\alpha(v) Z_{v}(s) - O_v(s) = \alpha(v) Z_{v}(s) - f_{\alpha(v)}(\alpha(v) Z_{v}(s)),
\]which is an integer by the definition of $f$. By the definition of $\mathbf{1}_v$ and by Fact \ref{z_o_correspondence}, we know $O_v(\mathbf{1}_v) = f_{\alpha(v)}(\alpha(v))$ is the unique element of $I_{\alpha(v)}$ that differs from $\alpha(v)$ by an integer. If $0 < \alpha(v) < \frac{1}{2}$, then 
\[
-\alpha(v) < \alpha(v) < 1 - \alpha(v).
\]
Thus in this case, $\alpha(v) \in I_{\alpha(v)}$ and $O_v(\mathbf{1}_v) = \alpha(v)$.
When $\frac{1}{2} < \alpha(v) < 1$, then 
\[
-\alpha(v) < \alpha(v) - 1 < 1 - \alpha(v).
\]
Therefore $\alpha(v) - 1 \in I_{\alpha(v)}$ and $O_v(\mathbf{1}_v) = \alpha(v)-1$.
\qedhere
\end{proof}

\begin{lem}\label{lem:order-reg}
The sets 
\begin{align*}
 \prec^{\rm fin} &:= \{(v,s,t) \in \SigmaH^3 \ : \ s,t\in A_v^{\rm fin} \wedge  Z_v(s) < Z_v(t)\},\\
 \prec &:= \{(v,s,t) \in \SigmaH^3 \ : \ s,t \in A_v \wedge O_v(s) < O_v(t)\}
\end{align*}
are $\omega$-regular.
\end{lem}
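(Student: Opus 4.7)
The plan is to handle the two sets separately along the same lines: in each case, intersect the $\omega$-regular membership condition with an $\omega$-regular comparison relation on the $\#$-binary blocks.

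For $\prec^{\rm fin}$, I first observe that the condition ``$s,t \in A_v^{\rm fin}$'' is $\omega$-regular in the triple $(v,s,t)$: apply Lemma \ref{afin_a_regular} to $(v,s)$ and to $(v,t)$ and intersect using Fact \ref{fact:omegaclosure}. Since $s$ and $t$ are then both aligned with $v$, they are aligned with one another, and the $i$-th $\#$-separated blocks of $s$ and $t$ have the same length. The comment following Fact \ref{integer_ostrowski} tells us that, for Ostrowski representations of equal length, $Z_v(s) < Z_v(t)$ if and only if the block sequence of $s$ is co-lexicographically smaller than that of $t$. In the $\#$-binary encoding this is exactly $C_\#^{-1}(s) <_{\rm colex,2} C_\#^{-1}(t)$. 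Membership in $A_v^{\rm fin}$ forces the block sequences to be eventually zero, so the co-lex order is a total order where it is being applied. This relation is $\omega$-regular by Fact \ref{fact:regularorder}, and its intersection with the membership conditions gives $\omega$-regularity of $\prec^{\rm fin}$.

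For $\prec$, the approach is analogous but uses Fact \ref{real_ostrowski_order} in place of the comment after Fact \ref{integer_ostrowski}. Membership ``$s,t \in A_v$'' is $\omega$-regular by Lemma \ref{afin_a_regular} and Fact \ref{fact:omegaclosure}. Given membership, Fact \ref{real_ostrowski_order} reduces $O_v(s) < O_v(t)$ to locating the first block at which $s$ and $t$ differ and then comparing the two binary-encoded values, with the direction of the inequality dictated by the parity of that block position. This is exactly the pattern of $<_{\rm alex,2}$, whose $\omega$-regularity on $\#$-binary codings is supplied directly by Fact \ref{fact:regularorder}. Intersecting with the membership constraints yields $\omega$-regularity of $\prec$.

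The main subtlety is aligning the parity convention in Fact \ref{real_ostrowski_order} with the one used to define $<_{\rm alex,2}$ on $\#$-binary codings, because the $k$-th block in the encoding corresponds to the coefficient $b_k$ of $\beta_{k-1}$ rather than of $\beta_k$. A careful indexing check pins down which parity of block position triggers a ``$<$'' and which triggers a ``$>$'' between $[s_k]_2$ and $[t_k]_2$; if the parity pattern one obtains is the opposite of the one used in the definition of $<_{\rm alex,2}$, the desired comparison is obtained from $<_{\rm alex,2}$ by swapping the roles of $s$ and $t$ before applying it, which remains $\omega$-regular by Fact \ref{fact:omegaclosure}. With this bookkeeping handled, both statements reduce to routine applications of closure of $\omega$-regular sets under intersection.
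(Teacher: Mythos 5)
Your proposal is correct and follows essentially the same route as the paper's proof: reduce $Z_v(s)<Z_v(t)$ to the co-lexicographic comparison of the $\#$-separated blocks and $O_v(s)<O_v(t)$ (via Fact \ref{real_ostrowski_order}) to the alternating-lexicographic comparison, then invoke Fact \ref{fact:regularorder} and closure under intersection with the $\omega$-regular membership conditions from Lemma \ref{afin_a_regular}. Your explicit attention to the index/parity bookkeeping between Fact \ref{real_ostrowski_order} and the definition of $<_{\rm alex,2}$ is a point the paper passes over silently, and your remark that any mismatch is absorbed by swapping the arguments is the right way to resolve it.
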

\begin{proof}
For $\prec^{\rm fin}$, first recall that for $X,Y\in \N$ and $\alpha$ irrational, we have $X<Y$ if and only if the $\alpha$-Ostrowski representation of $X$ is co-lexicographically smaller than the $\alpha$-Ostrowski representation of $Y$. Therefore, we need only recognize co-lexicographic ordering on the list of coefficients, with each coefficient ordered according to binary. This follows immediately from Fact \ref{fact:regularorder}.\newline

\noindent For $\prec$, note that by Fact \ref{real_ostrowski_order} the usual order on real numbers corresponds to the alternating lexicographic ordering on real Ostrowski representations. Therefore, we need only recognize the alternating lexicographic ordering on the list of coefficients, with each coefficient ordered according to binary. This follows immediately from Fact \ref{fact:regularorder}.
\end{proof}

\noindent We consider $\R^n$ as a topological space using the usual order topology. For $X\subseteq \R^n$, we denote its topological closure by $\overline{X}$. This is of course defined using the product of order topologies; i.e. $x \in \overline{X}$ iff every open box containing $x$ also contains an element of $X$.

\begin{cor}\label{cor:closure}
Let $W\subseteq (\SigmaH^{n+1})^*$ $\omega$-regular be such that $$W \subseteq \{ (v,s_1,\dots,s_n) \in (\SigmaH^{n+1})^* \ : \ s_1,\dots, s_n \in A_v\}.$$
Then the following set is also $\omega$-regular:
\[
\overline{W} := \{ (v,s_1,\dots,s_n) \in (\SigmaH^{n+1})^* \ : \ s_1,\dots, s_n \in A_v \wedge (O_v(s_1),\dots, O_v(s_n)) \in \overline{O(W_v)}\}.
\]
\end{cor}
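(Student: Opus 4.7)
My plan is to exhibit a first-order formula $\psi(v,s_1,\ldots,s_n)$, built from the $\omega$-regular predicates $A$ (Lemma \ref{afin_a_regular}), $\prec$ (Lemma \ref{lem:order-reg}), and $W$ itself, that defines $\overline{W}$; closure of $\omega$-regular sets under Boolean operations and projections (Fact \ref{fact:omegaclosure}) will then give the conclusion. The crucial point driving this is that, by Fact \ref{real_ostrowski_order}, the bijection $O_v:A_v\to I_{\alpha(v)}$ is an order-isomorphism from $(A_v,\prec_v)$ to $(I_{\alpha(v)},<)$. Consequently, the product order topology on $(A_v^n,\prec_v)$ transports precisely to the subspace topology on $I_{\alpha(v)}^n\subseteq\mathbb{R}^n$. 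Since a point of $I_{\alpha(v)}^n$ lies in the $\mathbb{R}^n$-closure of $O(W_v)$ if and only if it lies in the subspace closure, and only tuples $s$ with $s_i\in A_v$ are relevant, it suffices to describe the closure in this transported picture.

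In the picture $(A_v,\prec_v)$, there is a minimum element (namely $O_v^{-1}(-\alpha(v))$, the preimage of the left endpoint of $I_{\alpha(v)}$) but no maximum, so basic open neighborhoods of $s_i$ are intervals $\{t:t_i\prec_v t\prec_v t_i'\}$ when $s_i$ is not the minimum, and half-open sets $\{t:s_i\preceq_v t\prec_v t_i'\}$ when it is. The formula $\psi$ reads: $v\in R$; each $s_i\in A_v$; and for every $t_1,\ldots,t_n,t_1',\ldots,t_n'\in A_v$ with $s_i\prec_v t_i'$ for all $i$ and with, for each $i$, either $t_i\prec_v s_i$ or $(t_i=s_i \text{ and } s_i \text{ is the } \prec_v\text{-minimum of } A_v)$, there exists $(u_1,\ldots,u_n)\in W_v$ with $t_i\preceq_v u_i\prec_v t_i'$ for all $i$. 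The predicate ``is the $\prec_v$-minimum of $A_v$'' is first-order in $A$ and $\prec$, so all ingredients of $\psi$ are $\omega$-regular. Correctness is routine: given witness data $(t,t')$, the tuple $u$ produced by $\psi$ lies in the open box parameterized by $(t,t')$; conversely, any open box in $\mathbb{R}^n$ around $(O_v(s_1),\ldots,O_v(s_n))$ can be shrunk, using the density of $O_v(A_v)=I_{\alpha(v)}$, to a box parameterized by some allowed $(t,t')\in A_v^{2n}$.

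The main obstacle, and essentially the only real subtlety, is the boundary case where some $O_v(s_i)$ equals the left endpoint $-\alpha(v)$ of $I_{\alpha(v)}$. There, no element of $A_v$ lies strictly $\prec_v$-below $s_i$, so strict lower bounds $t_i\prec_v s_i$ are not available, and a naive formulation would exclude genuine closure points. The remedy is to allow $t_i=s_i$ precisely when $s_i$ is the $\prec_v$-minimum, recovering the correct half-open basic neighborhood $\{t:s_i\preceq_v t\prec_v t_i'\}$, and to weaken the conclusion to $t_i\preceq_v u_i$ rather than $t_i\prec_v u_i$. Because ``is the $\prec_v$-minimum'' is itself first-order definable from $A$ and $\prec$ and hence $\omega$-regular, this fix stays within the reach of Fact \ref{fact:omegaclosure}, completing the proof.
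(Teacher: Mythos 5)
Your proof is correct and follows essentially the same route as the paper's: characterize membership in $\overline{W}$ by a first-order condition over the $\omega$-regular predicates $\prec$ and $W$, then invoke closure of $\omega$-regular sets under Boolean operations and projections. In fact your treatment of the left endpoint of $I_{\alpha(v)}$ is more careful than the paper's own argument, which quantifies only over $t_i \prec_v s_i$ and hence becomes vacuously true whenever some $O_v(s_i) = -\alpha(v)$ (the $\prec_v$-minimum of $A_v$); your clause allowing $t_i = s_i$ exactly when $s_i$ is that minimum, together with the weakening to $t_i \preceq_v u_i$, repairs this genuine boundary defect.
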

\begin{proof}
    Let $(v,s_1,\dots,s_n) \in (\SigmaH^{n+1})^*$ be such that $s_1,\dots, s_n \in A_v$. Let $X_i = O_v(s_i)$. By the definition of the topological closure, we have that  $(X_1, \dots, X_n) \in \overline{O(W_v)}$ if and only if
    for all $Y_1,\dots Y_n,Z_1, \dots , Z_n\in \R$ with $Y_i<X_i <Z_i$ for $i=1,\dots,n$ there are $X'=(X_1',\dots, X_n')\in O(W_v)$ such that
$Y_i<X_i' <Z_i$ for $i=1,\dots,n$.  Thus by Lemma \ref{lem:order-reg}, $(v,s_1,\dots,s_n) \in \overline{W}$ if and only if for all $t_1,\dots t_n,u_1, \dots , u_n\in A_v$ with $t_i\prec s_i \prec u_i$, there are $s'=(s_1',\dots, s_n')\in W_v$ such that
$t_i\prec s_i' \prec u_i$ for $i=1,\dots,n$. The latter condition is $\omega$-regular by Fact \ref{fact:omegaclosure}.
\end{proof}

\section{Recognizing addition in Ostrowski numeration systems}
\label{section:addition}

\noindent The key to the rest of this paper is a general automaton for recognizing addition of Ostrowski representations uniformly. We will prove the following:

\begin{thm}
		\label{oplus_fin_regular}
		The set 
\[
\oplus^{\rm fin} := \{ (v, s_1, s_2,s_3) \ : \ s_1,s_2,s_3 \in A^{\rm fin}_v \wedge Z_v(s_1) + Z_v(s_2)=Z_v(s_3) \}
\]
is $\omega$-regular.
	\end{thm}

\noindent In order to prove this theorem, we will introduce a method to generate more complex automata for strings in $H_\infty$, from general B\"uchi automata. For the reasons mentioned when general B\"uchi automata were introduced in Section \ref{section:prelim}, we will not use these automata directly. Instead, we will use the $\#$-binary coding to convert the computation to a more familiar setting. Similarly arguments have been made before, in particular in \cite[Section 4]{Hodgson82}.

\begin{defi}
    Let $w=w_1w_2\cdots \in (\N^n)^\omega$. A \textbf{$\#$-binary coding of $w$} is a word $u = u_1 u_2 \dots \in (\SigmaH^n)^{\omega}$ such that
    \[
    C_{\#}(u_{1,i}u_{2,i}\cdots) = w_{1,i}w_{2,i}\cdots,
    \]
    where $u_{j,i}$ and $w_{j,i}$ denote the $i$-th component of $j$-th character of $u$ and $w$.\newline
    Let $X \subseteq (\N^n)^\omega$. The language of \textbf{$\#$-binary coding of $X$} is the set of all $\#$-binary codings of its elements.
\end{defi}


\begin{lem}\label{lem:binary-coding-infinite-aut}
    Let $\cal A=(Q,\N^n,\Delta,I,F)$ be a general B\"uchi automaton over $\N^n$, possibly with infinitely many transitions, such that for every $s_1,s_2 \in Q$ the 
        set 
\[
\{ u \in \{0,1\}^* \ : \ (s_1,[u]_2,s_2) \in \Delta\} 
\]        
is regular. Then the $\#$-binary coding of the language accepted by $\cal A$ is $\omega$-regular.
\end{lem}
\begin{proof}
    \noindent We construct from $\cal A$ a new B\"uchi automaton $\cal A'$ over $(\Sigma_\#)^n$. It is constructed via the following procedure:

    \begin{enumerate}
        \item Copy the states (without their transitions) from $\cal A$ to $\cal A'$. Any final states in $\cal A$ are to remain final in $\cal A'$.
        \item Add an initial state $q_{start}$, and endow it with transitions to every state that was an initial state in $\cal A$ on the character $(\#,\dots,\#)$. These states are no longer initial in $\cal A'$, so that $q_{start}$ is the only initial state.
        \item For every pair $s_1, s_2\in Q$:
        \begin{enumerate}
            \item Let $\cal B$ be a finite automaton recognizing
            \[
\{ u \in \{0,1\}^* \ : \ (s_1,[u]_2,s_2) \in \Delta\}.
\] Add the states and transitions of $\cal B$ to $\cal A'$.
            \item For every initial state $t$ in $\cal B$, whenever $t$ transitions to $t'$ on a character, add a transition from $s_1$ to $t'$ on the same character. Make $t$ no longer an initial state in $\cal A'$.
            \item For every final state $t$ in $\cal B$, add a transition from $t$ to $s_2$ on $(\#,\dots,\#)$. Make $t$ no longer a final state in $\cal A'$.
            \item If the empty word $\epsilon$ was accepted by $\cal B$, then add a transition from $q$ to $r$ on $(\#,\dots,\#)$.
        \end{enumerate}
    \end{enumerate}

    \noindent One can check that the language accepted by $\cal A'$ is the $\#$-binary coding of the language accepted by $\cal A$. Indeed, if a word is accepted by $\cal A'$, it must begin with $\#^n$ and be followed by a sequence of binary codings that correspond to transitions in $\cal A$, delimited by $\#$, and visiting final states of $\cal A$ infinitely often.
\end{proof}

\noindent We will illustrate with an example. Figure \ref{fig:automaton-insertion} demonstrates the process of applying Lemma \ref{lem:binary-coding-infinite-aut} to a simple automaton that accepts any infinite string of natural numbers containing at least one odd number. \newline

\begin{figure}[p]
    \centering
    \begin{tikzpicture}[shorten >=1pt,node distance=5cm,on grid,auto] 
        \node[state,initial,minimum size=5em] (q0)   {$q_0$}; 
        \node[state,accepting,minimum size=5em] (q1) [right=of q0] {$q_1$};
        \path[->] 
            (q0) 
                edge [loop above] node {even} ()
                edge node {odd} (q1)
            (q1)
                edge [loop above] node {any} ()
            ;
    \end{tikzpicture}

    (a)

    \vspace{1cm}
    
    \begin{tikzpicture}[shorten >=1pt,node distance=2cm,on grid,auto] 
        \node[state,initial] (qe0)   {$q_{even,0}$}; 
        \node[state,accepting] (qe1) [right=of qe0] {$q_{even,1}$};
        \node[state,initial] (qo0) [right=3cm of qe1] {$q_{odd,0}$}; 
        \node[state,accepting] (qo1) [right=of qo0] {$q_{odd,1}$};
        \node[state,initial,accepting] (qa0) [right=3cm of qo1] {$q_{any,0}$}; 
        \path[->] 
            (qe0) 
                edge node {0} (qe1)
            (qe1)
                edge [loop above] node {0,1} ()
            (qo0) 
                edge node {1} (qo1)
            (qo1)
                edge [loop above] node {0,1} ()
            (qa0)
                edge [loop above] node {0,1} ()
            ;
    \end{tikzpicture}
    
    (b)

    \vspace{1cm}
    
    \begin{tikzpicture}[shorten >=1pt,node distance=2cm,on grid,auto] 
        \node[state,initial] (qs)   {$q_{start}$}; 
        \node[state,minimum size=5em] (q0) [right=3cm of qs] {$q_0$}; 
        \node[state] (qe0) [above=3cm of q0] {$q_{even,0}$}; 
        \node[state] (qe1) [right=of qe0] {$q_{even,1}$};
        \node[state] (qo0) [right=3cm of q0] {$q_{odd,0}$}; 
        \node[state] (qo1) [right=of qo0] {$q_{odd,1}$};
        \node[state,accepting,minimum size=5em] (q1) [right=3cm of qo1] {$q_1$};
        \node[state] (qa0) [above=3cm of q1] {$q_{any,0}$}; 
        \path[->] 
            (qs)
                edge node {$\#$} (q0)
            (q0)
                edge [bend left=10] node {0} (qe1)
                edge [bend right=45] node {1} (qo1)
            (qe0) 
                edge node {0} (qe1)
            (qe1)
                edge [loop above] node {0,1} ()
                edge [bend left=10] node {$\#$} (q0)
            (qo0) 
                edge node {1} (qo1)
            (qo1)
                edge [loop above] node {0,1} ()
                edge node {$\#$} (q1)
            (q1)
                edge [bend left=10] node {0,1} (qa0)
                edge [loop right] node {$\#$} ()
            (qa0)
                edge [loop above] node {0,1} ()
                edge [bend left=10] node {$\#$} (q1)
            ;
    \end{tikzpicture}
    
    (c)

    \vspace{1.5cm}

    \caption{The procedure of Lemma \ref{lem:binary-coding-infinite-aut}. (a) The original automaton, with transitions for ``any even number,'' ``any odd number,'' and ``any number.'' (b) The finite automata recognizing these sets in binary encoding. (c) The combined automaton produced by Lemma \ref{lem:binary-coding-infinite-aut}.}

    \label{fig:automaton-insertion}
\end{figure}

\noindent We may now give the full proof of Theorem \ref{oplus_fin_regular}.

\begin{proof}[Proof of Theorem \ref{oplus_fin_regular}]
\noindent In \cite[Section 2]{BSS} 
the authors generate a general finite automaton $\mathcal{A}_0$ over the alphabet $\N^4$ such that a finite word  $(d_1,x_1,y_1,z_1)(d_2,x_2,y_2,z_2)\cdots$ $(d_m,x_m,y_m,z_m)\in (\N^4)^*$ is accepted by $\mathcal{A}_0$ if and only if there are $d_{m+1},\ldots  \in \N$ and $x,y,z\in \N$ such that for $\alpha = [0;d_1,d_2,\dots]$
we have 
\begin{align*}
    x&=[x_1x_2\cdots x_m]_\alpha \\
    y&=[y_1y_2\cdots y_m]_\alpha \\
    z&=[z_1z_2\cdots z_m]_\alpha\\
    z&=x+y.
\end{align*}
Let $\mathcal{A}$ be the general Büchi automaton with the same underlying quintuple as $\mathcal A_0$. It follows immediately that
if $(d_1,x_1,y_1,z_1)(d_2,x_2,y_2,z_2)\cdots\in (\N^4)^{\omega}$ is accepted by $\mathcal{A}$ if and only if there is an infinite subset $U\subseteq \N$ such that for all $u\in \N$
\[
[x_1x_2\cdots x_u]_\alpha + [y_1y_2\cdots y_u]_\alpha =[z_1z_2\cdots z_u]_\alpha
\]
Each transition in $\mathcal{A}$ corresponds to a linear equation with constant integer coefficients. As an example, one of the transitions in Figure 3 of \cite{BSS} is given as ``$-d_i + 1$,'' meaning that it represents all cases where, letting $v_{i}, s_{1i}, s_{2i}, s_{3i}$ be the $i$th letter of $v, s_1, s_2, s_3$ respectively, we have $s_{3i} - s_{1i} - s_{2i} = -v_i + 1$. Note that the binary representation of the graph of addition and subtraction, as well as of the constant $1$, are regular. Thus $\mathcal{A}$ satisfies the conditions of Lemma \ref{lem:binary-coding-infinite-aut}. Let $X\subseteq (\SigmaH^4)^{\omega}$ be the $\#$-binary coding of the language accepted by $\mathcal{A}$. By Lemma \ref{lem:binary-coding-infinite-aut}, we know that $X$ is $\omega$-regular. Observe that
\[
\oplus^{\rm fin} = \{ (v, s_1, s_2,s_3)\in X \ : \ s_1,s_2,s_3 \in A^{\rm fin}_v \}
\]
and hence $\omega$-regular.
\qedhere
\end{proof}

\noindent The automaton constructed above has $82$ states\footnote{Schmitthenner \cite{Schmitthenner-thesis} constructs an B\"uchi automaton with just 24 states accepting the same language.}. Using our software Pecan, we can formally check that this automaton recognizes the set in Theorem \ref{oplus_fin_regular}. Following a strategy already used in Mousavi, Schaeffer, and Shallit \cite[Remark 2.1]{MSS1} we check that our adder satisfies the standard inductive definition of addition on the natural numbers; that is, for all $x,y \in \N$
\begin{align*}
    0 + y &= y \\
    s(x) + y &= s(x + y)
\end{align*}
where $x,y\in\N$ and $s(x)$ denotes the successor of $x$ in $\N$. The successor function on $\N$ can be defined using only $<$ as follows:
\[
s(x)=y \hbox{ if and only if } (x<y) \wedge (\forall z \ (z\leq x) \vee (z\geq y)).
\]
Thus in Pecan we define  \pecaninline{bco\_succ(a,x,y)} as
\begin{pecan}
bco_succ(a,x,y) := bco_valid(a,x) & bco_valid(a,y) 
    & bco_leq(x,y) & !bco_eq(x,y)
    & forallz. if bco_valid(a,z) then (bco_leq(z,x) | bco_leq(y,z))
\end{pecan}
where
\begin{itemize}
    \item \pecaninline{bco_eq} recognizes $\{ (x,y) : x = y \}$,
    \item   \pecaninline{bco_leq} recognizes $\{ (x,y) : x \leq_{\rm colex} y\}$, and 
    \item  \pecaninline{bco_valid} recognizes $A_{\rm fin}$.
\end{itemize}
We now confirm that our adder satisfies the above equations using the following Pecan code:
\begin{pecan}
Let x,y,z be ostrowski(a).
Theorem ("Addition base case (0 + y = y).", {
    foralla. forallx,y,z. if bco_zero(x) 
               then (bco_adder(a,x,y,z) iff bco_eq(y,z)) }).
Theorem ("Addition inductive case (s(x) + y = s(x + y)).", {
    foralla. forallx,y,z,u,v. if (bco_succ(a,u,x) & bco_succ(a,v,z)) 
                then (bco_adder(a,x,y,z) iff bco_adder(a,u,y,v)) }).
\end{pecan}
\break 
In the above code  
\begin{itemize}
    \item \pecaninline{bco_adder} recognizes $\oplus^{\rm fin}$,
\item  \pecaninline{bco_zero} recognizes $\mathbf{0}_{*}$, and
    \item \pecaninline{bco_succ} recognizes $\{ (v, x, y) : x,y \in A_v^{\rm fin}, Z_v(x) + 1 = Z_v(y) \}$.
\end{itemize}
Pecan confirms both statements are true. This proves Theorem \ref{oplus_fin_regular} modulo correctness of Pecan and the correctness of the implementations of the automata for \pecaninline{bco_eq},
\pecaninline{bco_leq}, \pecaninline{bco_valid} and \pecaninline{bco_zero}. For more details about Pecan, see Section \ref{section:pecan}.\newline

    \noindent We need the following well-known consequence of K\"onig's Lemma (compare the proof of \cite[Lemma 4.3]{BS2022}).
\begin{fact}
\label{limitlanguage}
    Let $\mathcal{A}$ be a B\"uchi automaton over $\Sigma$ with all states accepting, let $w\in \Sigma^{\omega}$, and let $(u_n)_{n\in \N}$ be a sequence of words in $\Sigma^{\omega}$ such that $u_n|_n=w|_n$ for all $n\in \N$. If $u_n\in L(\mathcal{A})$ for every $n\in \N$, then $w \in L(\mathcal{A})$.
\end{fact}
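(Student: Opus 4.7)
The plan is to apply K\"onig's Lemma to a tree of partial runs on $w$. The hypothesis that every state is accepting is essential, because it converts the B\"uchi acceptance condition (infinitely many visits to $F$) into the much weaker condition that the run merely exist as an infinite sequence of valid transitions; this is what lets the limit of finite approximations be accepting.

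First I would fix, for each $n\in\N$, an accepting run $\rho_n = s_0^{(n)} s_1^{(n)} s_2^{(n)} \cdots$ of $u_n$ starting at some $s_0^{(n)}\in I$. Because $u_n|_n = w|_n$, the prefix $s_0^{(n)} s_1^{(n)} \cdots s_n^{(n)}$ is a sequence of states satisfying $(s_i^{(n)}, w_{i+1}, s_{i+1}^{(n)})\in \Delta$ for all $0\le i< n$, i.e.\ a valid length-$n$ partial run of $w$ starting from an initial state.

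Next I would define the tree $T$ whose nodes at level $k$ are exactly the sequences $t_0 t_1\cdots t_k\in Q^{k+1}$ with $t_0\in I$ and $(t_i, w_{i+1}, t_{i+1})\in \Delta$ for $0\le i<k$, ordered by prefix. Then $T$ is finitely branching since $Q$ is finite (each node has at most $|Q|$ children), and for each level $k$ the prefix of $\rho_k$ of length $k+1$ is a node at level $k$, so $T$ has nodes at arbitrarily large depth. Hence $T$ is an infinite, finitely branching tree, so by K\"onig's Lemma it has an infinite branch $t_0 t_1 t_2\cdots$.

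This branch is by construction a run of $w$ starting at an initial state. Because every state of $\mathcal{A}$ is accepting, the set $\{n:t_n\in F\}$ equals $\N$ and in particular is infinite, so the run is accepting and $w\in L(\mathcal{A})$. No step in this argument is deep; the only subtlety to be careful about is index bookkeeping between $u_n|_n = w|_n$ and the corresponding $(n{+}1)$-state prefix of $\rho_n$, which is a purely notational matter.
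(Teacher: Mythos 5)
Your proof is correct and is precisely the standard K\"onig's Lemma argument the paper has in mind: the paper states this fact without proof, noting only that it is a well-known consequence of K\"onig's Lemma. The sole cosmetic point is that with several initial states your collection of partial runs ordered by prefix is a forest with finitely many roots rather than a tree, which is handled by adjoining a formal root (or by noting that some single initial state begins infinitely many of the prefixes of the $\rho_k$).
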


	\noindent Using this result, we can extend the automaton in Theorem \ref{oplus_fin_regular} to an automaton for addition modulo $1$ on $I_{\alpha}$.
	
	\begin{lem}
		\label{oplus_regular}
	 The set
\[
\oplus:= \{ (v,s_1,s_2,s_3) : s_1,s_2,s_3\in A_v \wedge O_v(s_1) +O_v(s_2)\equiv  \modd{O_v(s_3)}{1}\}
\]
	 is $\omega$-regular. Moreover, $\oplus^{\rm fin} \subseteq \oplus$.
	\end{lem}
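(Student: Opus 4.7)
The plan is to reuse the automaton from Theorem~\ref{oplus_fin_regular}, modifying only the acceptance condition. The BSS adder processes Ostrowski digits from left to right while tracking a bounded carry in its state; for finite inputs, the accept states enforce that no residual carry remains. For mod-$1$ addition of two real Ostrowski representations there is no terminal position, so the same transition structure should compute mod-$1$ addition provided we drop the terminal requirement. Concretely, let $\mathcal{B}$ be a B\"uchi automaton recognizing $\oplus^{\rm fin}$; form $\mathcal{B}'$ by declaring every state of $\mathcal{B}$ accepting, and set $\mathcal{C} := \{(v,s_1,s_2,s_3) : v \in R,\ s_1,s_2,s_3 \in A_v\}$, which is $\omega$-regular by Lemma~\ref{afin_a_regular} and Fact~\ref{fact:omegaclosure}. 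The claim is $\oplus = L(\mathcal{B}') \cap \mathcal{C}$.

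The inclusion $\oplus^{\rm fin} \subseteq \oplus$ is immediate from Fact~\ref{z_o_correspondence}: it supplies integers $n_i$ with $O_v(s_i) = \alpha(v) Z_v(s_i) - n_i$, so the hypothesis $Z_v(s_1) + Z_v(s_2) = Z_v(s_3)$ forces $O_v(s_1) + O_v(s_2) - O_v(s_3) \in \Z$.

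For the main equality, I would argue both inclusions via truncation. For the forward direction, given $(v, s_1, s_2, s_3) \in \oplus$, truncate each $s_i$ after its $n$-th Ostrowski coefficient and then extend by a bounded-length all-zero tail that flushes the adder's carry; this yields $u_n \in \oplus^{\rm fin}$ agreeing with $(v,s_1,s_2,s_3)$ on at least the first $n-C$ positions, for some constant $C$ depending on $\mathcal{B}$. Since every state of $\mathcal{B}'$ accepts and $\oplus^{\rm fin} \subseteq L(\mathcal{B}) \subseteq L(\mathcal{B}')$, Fact~\ref{limitlanguage} (after a shift in indexing) produces an accepting run of $\mathcal{B}'$ on $(v,s_1,s_2,s_3)$. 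Conversely, given $(v,s_1,s_2,s_3) \in L(\mathcal{B}') \cap \mathcal{C}$, truncate an infinite run of $\mathcal{B}'$ at step $n$, extend by a short zero-block reaching a final state of $\mathcal{B}$, and read off a $u_n \in \oplus^{\rm fin}$ matching $(v,s_1,s_2,s_3)$ on the first $n$ positions. The already-established $\oplus^{\rm fin} \subseteq \oplus$ gives $O_v(u_{1,n}) + O_v(u_{2,n}) \equiv O_v(u_{3,n}) \pmod 1$, and since $|\beta_k| \to 0$ we have $O_v(u_{i,n}) \to O_v(s_i)$, so the mod-$1$ relation passes to the limit.

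The main technical obstacle is the bounded-carry property of $\mathcal{B}$: from any reachable state, a short block of all-zero coefficients suffices to reach an accept state, and carries propagate only a bounded number of positions. This is implicit in the BSS construction (the carry terms are controlled by the continued fraction coefficients), but making the truncation-extension arguments rigorous requires a direct inspection of the state space of $\mathcal{B}$. Once established, the two truncation arguments close the proof.
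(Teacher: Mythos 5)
Your overall architecture is the same as the paper's: take a trim B\"uchi automaton $\mathcal{B}$ for $\oplus^{\rm fin}$, make every state accepting to obtain $\mathcal{B}'$, intersect with the $\omega$-regular set of valid triples, and prove both inclusions by approximating infinite representations with finite ones — Fact~\ref{limitlanguage} for one direction, convergence of Ostrowski values (via Fact~\ref{real_ostrowski_order}) for the other. Your argument for $\oplus^{\rm fin}\subseteq\oplus$ and your backward direction (truncate the run, use trimness to extend to an accepted finite word agreeing on the first $n$ blocks, pass to the limit) match the paper's.

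The gap is in the forward direction. Truncating each $s_i$ after its $n$-th coefficient and appending zeros does \emph{not} produce a triple in $\oplus^{\rm fin}$: zero-padding leaves each value $O_v(u_{i,n})$ equal to the truncated sum $\sum_{k<n} b_{k+1}\beta_k$, and truncations of an exact congruence $x+y\equiv z\pmod 1$ need not satisfy that congruence, since the discarded tails of the three series are unrelated. No all-zero tail can repair this, because it does not change the represented values. Moreover, "flushing the adder's carry" presupposes that $\mathcal{B}$ has a run on the given prefix of $(v,s_1,s_2,s_3)$, which is essentially what the forward inclusion is supposed to establish; reasoning from the internal carry states of the BSS automaton at this point is circular unless you separately prove that every such prefix admits a run. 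What is actually needed — and what the paper uses, leaving the verification to the reader — is a purely arithmetic approximation: for each $m$ there exist $s_{m,1},s_{m,2},s_{m,3}\in A_v^{\rm fin}$ whose first $m$ letters agree with those of $s_1,s_2,s_3$ and which satisfy $O_v(s_{m,1})+O_v(s_{m,2})\equiv O_v(s_{m,3})\pmod 1$ \emph{exactly}. This follows from density of the finite-support points, the fact that the set of reals whose representation starts with a fixed prefix is an interval (Fact~\ref{real_ostrowski_order}), and continuity of addition mod $1$, with some care about approaching from the correct side when a value sits at an endpoint of its cylinder. Once these triples are in hand they lie in $\oplus^{\rm fin}\subseteq L(\mathcal{B})\subseteq L(\mathcal{B}')$, and Fact~\ref{limitlanguage} closes the argument with no inspection of the adder's state space at all.
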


\begin{proof}
		First, let $v,s_1, s_2,s_3$ be such that $s_1,s_2,s_3 \in A^{\rm fin}_v$. We claim that on this domain, $(s_1,s_2,s_3) \in \oplus_v$ if and only if $(s_1,s_2,s_3) \in \oplus_v^{\rm fin}$. By Fact~\ref{z_o_correspondence} we know that for all $s\in A_v^{\rm fin}$
		\begin{equation}\label{eq:radd}
		\alpha(v) Z_v(s) - O_v(s)\equiv \modd{0}{1}.
		\end{equation}
		Let $(s_1,s_2,s_3) \in \oplus_v^{\rm fin}$. Then by \eqref{eq:radd}
		\begin{align*}
		    O_v(s_3) &\equiv \modd{\alpha(v) Z_v(s_3)}{1}\\
		    &= \alpha(v) Z_v(s_1) + \alpha(v) Z_v(s_2)\\
		    &\equiv \modd{O_v(s_1) + O_v(s_2)}{1}.
		\end{align*}
Thus $(s_1,s_2,s_3)\in \oplus_v$.\newline

	
\noindent Let $\mathcal{B}^{\rm fin}$ be a B\"uchi automaton recognizing $\oplus^{\rm fin}$. Assume that $\mathcal{B}^{\rm fin}$ is trim. Let $\mathcal{B}'$ be the automaton $\mathcal{B}^{\rm fin}$, but with all states made accepting. Let $S$ be the language accepted by $\mathcal{B}'$. We will show that $S_v \cap A_v^3 = \oplus_v$. Towards that goal, let $(v, s_1, s_2, s_3)\in (\SigmaH^{\omega})^4$ be such that $(s_1, s_2, s_3) \in A_v^3$. It is left to prove that $(s_1, s_2, s_3) \in \oplus_v$ if and if $(s_1,s_2,s_3)\in S_v$.\newline

\noindent Suppose first that $(s_1, s_2, s_3) \in \oplus_v$. Then
\[
O_v(s_{3}) \equiv O_v(s_{1}) + O_v(s_{2})\pmod{1}.
\]
The reader can check using properties of Ostrowski representations that there is a sequence $(s_{m,1},s_{m,2},s_{m,3})_{m\in \N}$ of elements of $ (A_v^{\rm fin})^3$ such that
\begin{enumerate}
    \item $O_v(s_{m,3}) \equiv O_v(s_{m,1}) + O_v(s_{m,2})\pmod{1}$.
    \item $s_{m,i}|_m = s_i|_m$ for $i\in\{1,2,3\}$; i.e., the first $m$ letters of $s_{m,i}$ agree with the first $m$ letters of $s_i$ for $i\in\{1,2,3\}$.
\end{enumerate}
By $(A_v^{\rm fin})^3\cap \oplus_v = \oplus_v^{\rm fin}$ and (1), we know that $(v,s_{m,1},s_{m,2},s_{m,3})$ is accepted by $\mathcal{B}^{\rm fin}$. By Fact \ref{limitlanguage} and (2), we deduce that $\mathcal{B}'$ accepts $(v, s_1, s_2, s_3)$. Thus $(s_1,s_2,s_3)\in S_v$.\newline

\noindent Suppose now that $(s_1, s_2, s_3) \in S_v$. Then $(v,s_1,s_2,s_3)$ is accepted by $\mathcal{B}'$.
For $m\in \N$ and $i\in\{1,2,3\}$,  let $w_{m,i}\in \SigmaH^*$ be such that $w_{m,i}$ is $s_i$ up through the $(m+1)$-st occurrence of $\#$. Thus $w_{m,i}$ represents the first $m$-th Ostrowski coefficients of $O_v(s_i)$. Since $\mathcal{B}^{\rm fin}$ is trim, there exist infinite extensions $s_{m,1}, s_{m,2}, s_{m,3}\in \SigmaH^{\omega}$ of $w_{m,1}, w_{m,2}, w_{m,3}$ such that $\mathcal{B}^{\rm fin}$ accepts $(v, s_{m,1}, s_{m,2}, s_{m,3})$. We now set 
\[
(x_m,y_m,z_m) := (O_v(s_{m,1}),O_v(s_{m,2}),O_v(s_{m,3})), \ (x, y, z) := O_v(s_1, s_2, s_3).
\]
It follows from Fact \ref{real_ostrowski_order} that 
\[
\lim_{m\to \infty}(x_m,y_m,z_m) = (x,y,z).
\]
Because $x_m + y_m \equiv z_m \pmod{1}$ for every $m\in \N$ (by definition of $\mathcal{B}^{\rm fin}$ and $(A_v^{\rm fin})^3\cap \oplus_v = \oplus_v^{\rm fin}$), we have $x + y \equiv z \pmod{1}$. Hence $(s_1, s_2, s_3) \in \oplus_v$.
\end{proof}

\section{The uniform \texorpdfstring{$\omega$-}{omega-}regularity of \texorpdfstring{$\cal R_{\alpha}$}{R-alpha}}
In this section, we turn to the question of the decidability of the logical first-order theory of $\cal R_{\alpha}$. Recall that $\cal R_{\alpha} :=(\mathbb{R},<,+,\Z,\alpha \Z)$ for $\alpha \in \R$. The main result of this section is the following:

\begin{thm}
		\label{b_ralpha_2half}
		There is a uniform family of $\omega$-regular structures $(\mathcal{D}_v )_{v\in R}$
		 such that
		$
		\mathcal{D}_v \simeq \mathcal{R}_{\alpha(v)}
		$ for each $v\in R$.
\end{thm}

\noindent Theorem \ref{b_ralpha_2half} then hinges on the following lemma.

	\begin{lem}
		\label{b_ralpha_half}
		There is a uniform family of $\omega$-regular structures $(\mathcal{C}_a)_{a\in R}$
		 such that for each $a\in R$ 
		\[
		\mathcal{C}_a \simeq ([-\alpha(a), \infty), <, +, \mathbb{N}, \alpha(a) \mathbb{N}).
		\]
	\end{lem}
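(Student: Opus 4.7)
The plan is to represent each real $r \in [-\alpha(a), \infty)$ uniquely as $r = n + x$ with $n \in \mathbb{N}$ and $x \in I_{\alpha(a)} = [-\alpha(a), 1 - \alpha(a))$, and to encode it as an aligned convolution $(u, s) \in A_a^{\rm fin} \times A_a$ with $Z_a(u) = n$ and $O_a(s) = x$. The domain $U_a := \{(u,s) : u \in A_a^{\rm fin}, \ s \in A_a, \ u \sim_\# s\}$ is uniformly $\omega$-regular by Lemma~\ref{afin_a_regular}, and the map $(u, s) \mapsto Z_a(u) + O_a(s)$ is a bijection onto $[-\alpha(a), \infty)$ thanks to Lemma~\ref{zv_ov_bijective} together with the uniqueness of the $n + x$ decomposition.

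For the ordering, I would interpret $<$ lexicographically: $(u_1, s_1)$ encodes a smaller real than $(u_2, s_2)$ iff $Z_a(u_1) < Z_a(u_2)$, or $Z_a(u_1) = Z_a(u_2)$ and $O_a(s_1) < O_a(s_2)$. Both pieces are uniformly $\omega$-regular by Lemma~\ref{lem:order-reg}. Addition would be captured by saying that $(u_1, s_1) + (u_2, s_2) = (u_3, s_3)$ iff there is a carry $c \in \{-1, 0, 1\}$ with $Z_a(u_1) + Z_a(u_2) + c = Z_a(u_3)$, handled by a case split over $c$ with Theorem~\ref{oplus_fin_regular}, and $O_a(s_1) + O_a(s_2) = O_a(s_3) + c$, handled by Lemma~\ref{oplus_regular} together with the $\omega$-regular comparisons of Lemma~\ref{lem:order-reg} used to pin down $c$. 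The $\mathbb{N}$-predicate is just $\{(u, \mathbf{0}_a) : u \in A_a^{\rm fin}\}$, which is uniformly $\omega$-regular by Lemma~\ref{0_1_regular}.

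The main technical hurdle is the $\alpha(a)\mathbb{N}$-predicate. The key is Fact~\ref{z_o_correspondence}: $r = \alpha(a) k$ holds precisely when the same $\#$-word $s$ that realizes the real-Ostrowski representation of $x = f_{\alpha(a)}(\alpha(a)k)$ also lies in $A_a^{\rm fin}$ and realizes the integer-Ostrowski representation of $k$. Thus $(u, s) \in \alpha(a)\mathbb{N}$ iff $s \in A_a^{\rm fin}$ and $Z_a(u) = \alpha(a) Z_a(s) - O_a(s)$. The first half is $\omega$-regular by Lemma~\ref{afin_a_regular}, so it remains to show that
\[
T := \{(a, u, s) : s \in A_a^{\rm fin}, \ Z_a(u) = \alpha(a) Z_a(s) - O_a(s)\}
\]
is uniformly $\omega$-regular. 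Using the identity $\alpha(a) q_i = p_i + \beta_i$ from the preliminaries, if $s$ has Ostrowski coefficients $b_{i+1}$, then the integer $Z_a(u) = \sum b_{i+1} p_i$, where the $p_i$ satisfy the same recurrence $p_i = a_i p_{i-1} + p_{i-2}$ that drives the convergents $q_i$. I would therefore construct $T$ by running a bounded-carry automaton in parallel with the adder of Theorem~\ref{oplus_fin_regular}, reading the coefficients $b_{i+1}$ of $s$ one block at a time and checking that $u$'s Ostrowski coefficients represent the partial sums $\sum b_{j+1} p_j$; the fact that $p_i$ and $q_i$ obey the same recursion means the carry bookkeeping is of the same finite-state type as in~\cite{BSS}.

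I expect verifying that this parallel-adder construction is correct and uniform in $a$ to be the main obstacle: although the equation $Z_a(u) + O_a(s) = \alpha(a) Z_a(s)$ is conceptually clean, matching the $p_i$-indexed sum against the $q_i$-indexed Ostrowski representation of $u$ requires care to bound the intermediate carries independently of $a$. Once this automaton is in hand, closure of $\omega$-regular relations under Boolean operations (Fact~\ref{fact:omegaclosure}) assembles the full uniform family $(\mathcal{C}_a)_{a \in R}$ and confirms the isomorphism $\mathcal{C}_a \simeq ([-\alpha(a), \infty), <, +, \mathbb{N}, \alpha(a)\mathbb{N})$.
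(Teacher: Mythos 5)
Your overall architecture --- encode a real in $[-\alpha(a),\infty)$ as a pair (integer part, element of $I_{\alpha(a)}$), order the pairs lexicographically, and define addition with a carry in $\{-1,0,1\}$ pinned down by order comparisons --- is the same as the paper's, and those parts go through (your carry analysis is exactly the paper's Lemma \ref{lem:ostcarry}). The divergence, and the gap, is in how you encode the integer part. You represent $n\in\N$ by its own $\#$-$a$-Ostrowski representation $u$ (so $n=Z_a(u)$), which forces the $\alpha(a)\N$ predicate to be the relation $T=\{(a,u,s): s\in A_a^{\rm fin},\ Z_a(u)=\alpha(a)Z_a(s)-O_a(s)\}$. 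You correctly identify this as the crux, but you do not prove it is $\omega$-regular; you only sketch a new parallel-adder construction and assert the carries stay bounded ``of the same finite-state type as in \cite{BSS}''. That is precisely the content that would need proof. Note that $\alpha(a)Z_a(s)-O_a(s)=\sum_i b_{i+1}p_i$ is, after an index shift, an Ostrowski-type expansion with respect to the shifted continued fraction $[0;a_2,a_3,\dots]$ (since $p_{i+1}$ is the $i$-th convergent denominator of that fraction), whereas $u$ is an Ostrowski representation with respect to $[0;a_1,a_2,\dots]$; so $T$ is a change-of-numeration-system relation between two different Ostrowski systems, and neither the cited lemmas nor \cite{BSS} gives its uniform regularity. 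Both $Z_a$ and $s\mapsto \alpha(a)Z_a(s)-O_a(s)$ are order-preserving bijections onto $\N$ from their respective domains, but ``match the $k$-th element of one enumeration with the $k$-th element of another'' is not an automatically regular relation, so the claim genuinely requires the automaton you only gesture at.

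The paper avoids this entirely by never using $Z_a(u)$ for the integer coordinate. It sets $B_v=\{s\in A_v^{\rm fin}: s\preceq_v\mathbf{0}_v\}$ and encodes the integer $n$ by the unique $s\in B_v$ with $g_v(s):=\alpha(v)Z_v(s)-O_v(s)=n$, showing that $g_v$ carries $\prec^{\rm fin}$ and a $\mathbf{1}_v$-corrected version of $\oplus^{\rm fin}$ to $(\N,<,+)$. With that encoding, $T_v(s,t)=g_v(s)+O_v(t)$ and the set $\alpha(v)\N$ is essentially the image of the diagonal $\{(s,s)\}$ --- since $T_v(s,s)=\alpha(v)Z_v(s)$ --- extended via Lemmas \ref{lem:f} and \ref{lem:ostsmz} to cover all of $\alpha(v)\N$. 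The only nontrivial automaton in the whole proof therefore remains the adder of Theorem \ref{oplus_fin_regular}. To repair your version you would either have to construct and verify the conversion automaton for $T$, or switch to the paper's $g_v$-encoding of the integer coordinate.
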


\begin{proof}[Proof of Theorem \ref{b_ralpha_2half}]
Let $(\mathcal{C}_a)_{a\in R}$ be  an uniform family of $\omega$-regular structures as given by Lemma \ref{b_ralpha_half}. Within $\mathcal{C}_a$, define the set $L = \{x \in [-\alpha(a), \infty) : x \geq 0\}$, where $0$ is the $<$-least element of $\mathbb{N}$. This is an ordered commutative monoid. Let $L'$ be its Grothendieck group, and let $+', <'$ be the induced abelian group operation and ordering. There is a canonical inclusion map $\iota : L \hookrightarrow L'$. Let $Z' = \iota(\mathbb{N}) \cup -\iota(\mathbb{N})$ and $A' = \iota(\alpha(a) \mathbb{N}) \cup -\iota(\alpha(a) \mathbb{N})$. Observe that $(L', <', +', N', A')$ is an isomorphic copy of $\mathcal{R}_{\alpha(a)}$, defined in $\mathcal{C}_a$ in a manner uniform in $a$. So let $\mathcal{D}_a$ be this structure and conclude that $(\mathcal{D}_a)_a$ is a uniform family of $\omega$-regular structures.
\end{proof}

\noindent The proof of Lemma \ref{b_ralpha_half} itself is a uniform version of the argument given in \cite{H} that also fixes some minor errors of the original proof. By Lemma~\ref{lem:order-reg} and Theorem \ref{oplus_fin_regular}, we already know that 
\[
Z_v : (A_v^{\rm fin}, \prec_v^{\rm fin}, \oplus_v^{\rm fin}) \to (\mathbb{N}, <, +)
\]
is an isomorphism for every $v\in R$. As our eventual goal also requires us to define the set $\alpha \mathbb{N}$, it turns out to be much more natural to instead use the isomorphism 
\[
\alpha(v) Z_v: (A_v^{\rm fin}, \prec_v^{\rm fin}, \oplus_v^{\rm fin})\to (\alpha(v) \mathbb{N}, <, +)
\]
and recover $\N$ (and further $\Z$). We do so by following (and correcting) the argument in \cite{H}.

\begin{lem}\label{lem:ostcarry}
Let $v\in R$, and let $t_1,t_2,t_3 \in A_v$ be such that $t_1 \oplus_v t_2 = t_3$. Then
\begin{align*}
O_v(t_1) + O_v(t_2) = \begin{cases} O_v(t_3)+1 &\mbox{if } \mathbf{0}_v \prec_v t_1 \text{ and } t_3 \prec_v t_2; \\
O_v(t_3) - 1& \mbox{if } t_1 \prec_v \mathbf{0}_v \text{ and  } t_2 \prec_v t_3; \\
O_v(t_3)& \text{otherwise.}\\
\end{cases} 
\end{align*}
\end{lem}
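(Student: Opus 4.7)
Write $\alpha := \alpha(v)$ and set $x := O_v(t_1)$, $y := O_v(t_2)$, $z := O_v(t_3)$. Since $\alpha \in (0,1)$, we have $\lfloor \alpha \rfloor = 0$, so $I_\alpha = [-\alpha,\, 1-\alpha)$ and thus $x,y,z \in [-\alpha,\,1-\alpha)$. The hypothesis $t_1 \oplus_v t_2 = t_3$ means that $x+y \equiv z \pmod{1}$, so $k := x+y-z \in \Z$. The plan is to show that $k \in \{-1,0,1\}$, and then match each of the three values of $k$ with one of the three alternatives of the lemma.

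First I would bound $k$. From $x,y \ge -\alpha$ and $z < 1-\alpha$ we get $k > -2\alpha - (1-\alpha) = -1 -\alpha > -2$. From $x,y < 1-\alpha$ and $z \ge -\alpha$ we get $k < 2(1-\alpha) + \alpha = 2 -\alpha < 2$. Hence $k \in \{-1,0,1\}$.

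Next I would analyze each value of $k$. If $k = 1$, then $x + y = z + 1$; I would argue that $x > 0$, since if $x \le 0$ then $y = z + 1 - x \ge z + 1 \ge 1 - \alpha$, contradicting $y < 1-\alpha$; moreover $y - z = 1 - x > \alpha > 0$, so $z < y$, i.e.\ $\mathbf{0}_v \prec_v t_1$ and $t_3 \prec_v t_2$. Conversely, if $\mathbf{0}_v \prec_v t_1$ and $t_3 \prec_v t_2$, then $x > 0$ and $z < y$, so $k = x + (y-z) > 0$, forcing $k = 1$ by the bounds above. The case $k = -1$ is entirely symmetric: one shows that $k=-1$ forces $x < 0$ and $y < z$ (equivalently $t_1 \prec_v \mathbf{0}_v$ and $t_2 \prec_v t_3$), and conversely those two inequalities give $k = x + (y-z) < 0$, hence $k = -1$.

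Finally, the case $k = 0$ is exactly the "otherwise" clause: if $k = 0$ then $x = z - y$, so $x > 0$ iff $z > y$ and $x < 0$ iff $z < y$, which means that in neither of the two singled-out configurations can $k = 0$ occur; by exclusion the remaining value of $k$ is $0$, giving $O_v(t_1) + O_v(t_2) = O_v(t_3)$ in all other cases. The entire argument is elementary once one notices the integer-valuedness of $k$ and its $\{-1,0,1\}$ range, so the only mild subtlety is keeping track of the half-open endpoints of $I_\alpha$ when ruling out $|k| \ge 2$.
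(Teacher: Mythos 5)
Your proof is correct and follows essentially the same route as the paper's: both arguments rest on the observations that $k := O_v(t_1)+O_v(t_2)-O_v(t_3)$ is an integer and that $I_{\alpha}$ is a half-open interval of length $1$, combined with a sign analysis of $O_v(t_1)$ and $O_v(t_2)-O_v(t_3)$. The only (cosmetic) difference is organizational: the paper verifies in each of the three cases that the claimed value lies in $I_{\alpha}$ and invokes uniqueness of the representative mod $1$, whereas you first pin down $k\in\{-1,0,1\}$ globally and then match each value to its case by a biconditional, which handles the ``otherwise'' clause slightly more cleanly.
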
	
\begin{proof}
For ease of notation, let $\alpha = \alpha(v)$, and set $x_i = O_v(t_i)$ for $i = 1, 2, 3$. By definition of $\oplus_v$, we have that $x_1, x_2, x_3 \in I_{\alpha(v)}$ with $x_1 + x_2 \equiv x_3 \pmod{1}$. Note that $t_i \prec_v t_j$ if and only if $x_i < x_j$.\newline

\noindent We first consider the case that $0 < x_1$ and $x_3 < x_2$. Thus $ x_1 + x_2> 1-\alpha$. 
Note that
\[
-\alpha= 1-\alpha -1 < x_1 + x_2 - 1 < (1 - \alpha) + (1 - \alpha) - 1 = 1 - 2 \alpha < 1 - \alpha.
\]
Thus $x_1 + x_2 - 1\in I_{\alpha}$ and $x_3 = x_1 + x_2 - 1$.\newline

\noindent Now assume that $x_1 < 0$ and $x_2 < x_3$. Then $x_1 + x_2 <-\alpha$, and therefore
\[
1 -\alpha > x_1 + x_2 + 1 \geq (-\alpha) + (-\alpha) + 1 = (1 - \alpha) - \alpha > - \alpha.
\]
Thus $x_1 + x_2 + 1\in I_{\alpha}$ and hence $x_3 = x_1 + x_2 + 1$.\newline

\noindent Finally consider that $0, x_1$ are ordered the same way as $x_2, x_3$. Since $x_1+x_2 \equiv x_3 \pmod{1}$, we know that $|x_1 - 0|$ and $|x_3 - x_2|$ differ by an integer $k$. If $k > 0$, would imply that one of these differences is at least $1$, which is impossible within the interval $I_{\alpha}$. Therefore $x_1 - 0 = x_3 - x_2$ and hence $x_3 = x_1 + x_2$.
\qedhere 
\end{proof}

\noindent For $i\in \N$, set $
\mathbf{i}_{v} := \underbrace{\mathbf{1}_v \oplus \dots \oplus \mathbf{1}_v}_{i \hbox{ times }}.$

\begin{lem}\label{lem:f}
The set $F := \{ (v,s) \in A^{\rm fin} \ : \ Z_v(s)\alpha(v) < 1\}$
is $\omega$-regular, and for each $(v,s) \in F$
\[
O_v(s) =
\begin{cases}
\alpha(v) Z_v(s) & \hbox{if $(\alpha(v)+1) Z_v(s)<1$;}\\
\alpha(v) Z_v(s) -1 & \hbox{otherwise.}
\end{cases}
\]
\end{lem}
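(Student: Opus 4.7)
The plan is to establish the two assertions in sequence: first the $\omega$-regularity of $F$, then the formula relating $O_v(s)$ to $\alpha(v) Z_v(s)$.

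To see that $F$ is $\omega$-regular, I would start from the observation that if $[0; a_1, a_2, \dots]$ is the continued fraction expansion of $\alpha(v)$, then $1/\alpha(v) = a_1 + [0; a_2, a_3, \dots] \in (a_1, a_1+1)$, so $Z_v(s)\alpha(v) < 1$ exactly when $Z_v(s) \le a_1 = q_1$. Because Ostrowski representations are obtained greedily and satisfy $b_1 < a_1$, the natural numbers $N$ with $N \le a_1$ are precisely those whose $\alpha(v)$-Ostrowski representation has the shape $b_1\,0\,0\,\cdots$ with $0 \le b_1 \le a_1-1$, together with the single value $q_1$ itself, represented by $0\,1\,0\,0\,\cdots$. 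In the $\#$-binary coding, each condition becomes a local constraint on the first two $\#$-delimited binary blocks paired with an ``all-zero'' tail from the third block onward, which is plainly recognizable by a Büchi automaton. Intersecting with $A^{\rm fin}$ (which is $\omega$-regular by Lemma \ref{afin_a_regular}) then yields the desired automaton for $F$.

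For the formula, I would invoke Fact \ref{z_o_correspondence}, which tells us that $O_v(s) = f_{\alpha(v)}(\alpha(v) Z_v(s))$, namely, the unique element of $I_{\alpha(v)} = [-\alpha(v), 1-\alpha(v))$ that differs from $\alpha(v) Z_v(s)$ by an integer. Since $(v,s) \in F$ gives $0 \le \alpha(v) Z_v(s) < 1$, only two integer shifts can land in $I_{\alpha(v)}$: either $\alpha(v) Z_v(s) \in [0, 1-\alpha(v)) \subseteq I_{\alpha(v)}$, in which case $O_v(s) = \alpha(v) Z_v(s)$; or $\alpha(v) Z_v(s) \in [1-\alpha(v), 1)$, in which case $\alpha(v) Z_v(s) - 1 \in [-\alpha(v), 0) \subseteq I_{\alpha(v)}$ and $O_v(s) = \alpha(v) Z_v(s) - 1$. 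These two regimes are separated by the inequality $\alpha(v) Z_v(s) < 1 - \alpha(v)$, equivalently $(Z_v(s)+1)\alpha(v) < 1$, matching the dichotomy in the statement.

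The main obstacle is the $\omega$-regularity step, where one must carefully translate the arithmetic characterization $Z_v(s) \le a_1$ into a syntactic property of the $\#$-binary encoding that is uniform in $v$ and compatible with the LSB-first binary convention used throughout the paper. Everything else is a direct unwinding of definitions, relying on the uniqueness of real Ostrowski representations (Fact \ref{real_ostrowski}) and the fact that $f_{\alpha(v)}(\alpha(v)Z_v(s))$ is determined uniquely in $I_{\alpha(v)}$.
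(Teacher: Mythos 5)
Your proof is correct, and for the $\omega$-regularity claim it takes a genuinely different route from the paper. The paper first splits on $\alpha(v)\lessgtr\tfrac12$ via Lemma~\ref{less_than_half}: when $\alpha(v)>\tfrac12$ it observes $F_v=\{\mathbf{0}_v,\mathbf{1}_v\}$, and when $\alpha(v)<\tfrac12$ it shows $F_v=\{s\in A_v^{\rm fin}: s\preceq_v^{\rm fin} w\}$, where $w$ is the $\prec_v^{\rm fin}$-minimal element with $O_v(w)<0$; this description is $\omega$-regular because $\prec^{\rm fin}$, $\prec$ and $\mathbf{0}_*$ already are (Lemmas~\ref{0_1_regular} and \ref{lem:order-reg}), and along the way the chain $O_v(\mathbf{i}_v)=i\alpha(v)$ for $i\le n-1$, $O_v(\mathbf{n}_v)=n\alpha(v)-1$ delivers the displayed formula. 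You instead identify $F_v$ with $\{s: Z_v(s)\le a_1\}$ and give a purely syntactic description of the corresponding Ostrowski digit strings ($b_1 0^\omega$ or $0\,1\,0^\omega$), which avoids both the case split and any appeal to the order automata; this is arguably more elementary, at the cost of having to argue carefully that the digit pattern is uniformly recognizable in the $\#$-binary coding (which it is, since within $A^{\rm fin}$ the constraints reduce to ``all blocks from some fixed position on encode $0$'' and ``the second block encodes $0$ or $1$''). Your derivation of the formula via Fact~\ref{z_o_correspondence} and the interval $I_{\alpha(v)}=[-\alpha(v),1-\alpha(v))$ is essentially the content of the paper's computation, just stated more directly. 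One point of precision: the condition you derive, $\alpha(v)(Z_v(s)+1)<1$, is not literally the condition ``$(\alpha(v)+1)Z_v(s)<1$'' printed in the statement; the printed condition is a typo (it would contradict Lemma~\ref{lem:Oofone} already for $Z_v(s)=1$ and $\alpha(v)<\tfrac12$), and your version is the correct one, so you have silently repaired the statement rather than mismatched it.
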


\begin{proof}
By Lemma \ref{less_than_half}, we can first consider the case that $\alpha(v) > \frac{1}{2}$. In this situation, $F_v$ is just the set $\{\mathbf{0}_v, \mathbf{1}_v\}$, and hence obviously $\omega$-regular.\newline

\noindent Now assume that $\alpha(v) < \frac{1}{2}$. Let $w$ be the $\prec^{\rm fin}_v$-minimal element of $A^{\rm fin}_v$ with $w \prec_v \mathbf{0}_v$.   We will show that 
\[
F_{v} = \{ s \in A^{\rm fin}_v \ : \ s \preceq_v^{\rm fin} w\}.
\]
Then $\omega$-regularity of $F$ follows then immediately.\newline 

\noindent 
Let $n\in \N$ be maximal such that $n\alpha(v)<1$. It is enough to show that $Z_v(w)=n$. 
By Lemma \ref{lem:Oofone}, $O_v(\mathbf{1}_v)=\alpha(v)$. Hence $1\alpha(v),2\alpha(v),\dots,(n-1)\alpha(v) \in I_{\alpha(v)}$, but $n\alpha(v) >1-\alpha(v)$.
Then for $i=1,\dots,n-1$
\[
O_v(\mathbf{i}_v) = i\alpha(v), \ O_v(\mathbf{n}_v) = n\alpha(v) - 1 < 0. 
\]
So $\mathbf{i}_v\succeq \mathbf{0}_v$ for $i=1,\dots, n$, but $\mathbf{n}_v \prec \mathbf{0}_v$. Thus $\mathbf{n}_v = w$ and $Z_v(w)=n$.
\end{proof}

\begin{lem}\label{lem:ostsmz}
Let $v\in R$ and $t\in A_v^{\rm fin}$. Then there is an $s \in F_v$ and $t'\in A_v^{\rm fin}$ such that $t' \preceq_v 0$ and $t = t'\oplus_v s$. In particular,
\[
A_v^{\rm fin} = \{ t \in A_v^{\rm fin} \ : \ t \preceq_v \mathbf{0}_v\} \oplus_v F_v.
\]
\end{lem}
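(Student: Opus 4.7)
The plan is to decompose $N := Z_v(t)$ in $\N$ as $N = (N - L) + L$ for a carefully chosen $L \leq N$, then set $t' := Z_v^{-1}(N - L)$ and $s := Z_v^{-1}(L)$. Writing $\alpha := \alpha(v)$ and letting $n$ be the largest natural number with $n\alpha < 1$ (well-defined by irrationality of $\alpha$), the three conditions to verify are: $s \in F_v$, equivalent to $L \leq n$; $t' \preceq_v \mathbf{0}_v$, equivalent by Fact \ref{z_o_correspondence} to the standard fractional part $\{\alpha(N-L)\}$ lying in $\{0\} \cup [1-\alpha, 1)$; and $t = t' \oplus_v s$, which by $\oplus^{\rm fin} \subseteq \oplus$ (Lemma \ref{oplus_regular}) reduces to $Z_v(t) = Z_v(t') + Z_v(s)$ and so is automatic.

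The choice of $L$ is guided by the observation that incrementing $L$ by $1$ either decreases $\{\alpha(N-L)\}$ by exactly $\alpha$ (when the previous value was at least $\alpha$) or wraps it around into $[1-\alpha, 1)$ (when it was less than $\alpha$). I would take $L$ to be the smallest non-negative integer for which $\{\alpha(N-L)\} \in \{0\} \cup [1-\alpha, 1)$ or $N - L = 0$, whichever occurs first. The bound $L \leq n$ is the one substantive point: if $\{\alpha N\}$ already lies in $\{0\} \cup [1-\alpha, 1)$ we simply take $L = 0$, and otherwise $\{\alpha N\} \in (0, 1-\alpha)$, so after $k$ non-wrap steps the value is $\{\alpha N\} - k\alpha \geq 0$, forcing $k < (1-\alpha)/\alpha < 1/\alpha$. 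Hence a wrap (or termination at $N-L = 0$) must occur within $L \leq n$ steps.

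With this choice the required properties follow directly: $Z_v(s) = L \leq n$ gives $L\alpha < 1$, so $s \in F_v$; by construction $\{\alpha(N-L)\} \in \{0\} \cup [1-\alpha, 1)$, so $O_v(t') \leq 0$ and $t' \preceq_v \mathbf{0}_v$; and $Z_v(t') + Z_v(s) = (N-L) + L = N$, yielding $t = t' \oplus_v s$ via Lemma \ref{oplus_regular}. The ``in particular'' assertion is then a reformulation of the existence of this decomposition. The main obstacle is only the bound $L \leq n$, which reduces to the elementary fractional-part accounting above; the rest is immediate from the definitions of $F_v$, $\preceq_v$, and $\oplus_v$ together with Fact \ref{z_o_correspondence}.
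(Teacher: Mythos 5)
Your proposal is correct and follows essentially the same route as the paper: both decompose $Z_v(t)=Z_v(t')+L$ with $L\le n$ (where $n$ is maximal with $n\alpha<1$), chosen so that the Ostrowski value of $t'$ falls in the non-positive part of $I_{\alpha}$, and both verify the sum on the $Z_v$ side. The paper selects this shift directly as the unique $i\in\{0,\dots,n\}$ with $O_v(t)-i\alpha\in(-\alpha,0]$, whereas you reach the same shift by a greedy descent on the fractional part; the accounting that bounds it by $n$ is the same in both.
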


\begin{proof}
Let $n\in \N$ be maximal such that $n\alpha(v) < 1$. 
Let $t\in A_v^{\rm fin}$. We need to find $s\in A_v^{\rm fin}$ and $u\in F_v$ such that $t=s\oplus_v^{\rm fin} u$. We can easily reduce to the case that $t\succ \mathbf{0}_v$ and $Z_v(t)>n$.\newline

\noindent Let $i\in \{0,\dots,n\}$ be such that $0\geq O_v(t)-i\alpha(v) >  - \alpha(v)$. Then let $s\in A_v^{\rm fin}$ be such that $Z_v(s)=Z_v(t)-i$. Note $t = s \oplus_v^{\rm fin} \mathbf{i}_v$. Thus we only need to show that $s\preceq \mathbf{0}_v$.\newline

\noindent To see this, observe that by Lemma \ref{lem:f}
\[
O_v(s) + \alpha(v) i\equiv O_v(s) + O_v(\mathbf{i}_v) \equiv  O_v(t) \pmod{1}. 
\]
Since $O_v(t)-i\alpha(v)\in I_{\alpha(v)}$, we know that  $O_v(s)=O_v(t)-i\alpha(v) \leq 0$. \\
Therefore $O_v(s)\preceq \mathbf{0}_v$.
\end{proof}

\begin{proof}[Proof of Lemma \ref{b_ralpha_half}]
Define $B\subseteq A^{\rm fin}$ to be 
$
\{(v,s)\in A^{\rm fin} \ : \ s \preceq_v \textbf{0}_v\}$.
Clearly, $B$ is $\omega$-regular. We now define $\prec^B$ and $\oplus^B$ such that for each $v\in R$, 
the structure $(B_v, \prec^B_v, \oplus^B_v)$ is isomorphic to $(\mathbb{N}, <, +)$ under the map $g_v$ defined as $g_v(s) = \alpha(v) Z_v(s) - O_v(s)$.\newline

\noindent We define $\prec^B$ to be the restriction of $\prec^{\rm fin}$ to $B$. That is, for $(v,s_1),(v,s_2)\in B$ we have
\[ 
(v,s_1) \prec^B (v,s_2) \hbox{ if and only if } (v,s_1) \prec^{\rm fin} (v,s_2).
\]
 It is immediate that $\prec^B$ is $\omega$-regular, since both $B$ and $\prec^{\rm fin}$ are $\omega$-regular.\newline

\noindent We define $\oplus^B$ as follows:

\begin{equation*}
(v,s_1) \oplus^B (v,s_2) = \begin{cases} (v,s_1\oplus_v s_2) &\mbox{if } s_1\oplus_v^{\rm fin} s_2 \preceq_v \mathbf{0}_v ;\\
(v,s_1\oplus_v s_2 \oplus_v \mathbf{1}_v) & \mbox{otherwise.} \end{cases}    
\end{equation*}

\noindent We now show that $g_v(s_1\oplus_v^B s_2)=g_v(s_1) + g_v(s_2)$ for every $s_1,s_2\in B_v$. \newline

\begin{table}[t]
    \centering
\begin{tabular}{ c|c }
Name & Definition\\
 \hline
 $A$ & $\{ (v,w) \ : \ v\in R, \hbox{ $w$ is a $\#$-$v$-Ostrowski representation}\}$\\
 $A^{\rm fin}$ & $\{ (v,w) \ : \ v\in R, \hbox{ $w$ is a $\#$-$v$-Ostrowski representation and eventually $0$}\}$\\
 $B$ & $\{(v,s)\in A^{\rm fin} \ : \ s \preceq_v \textbf{0}_v\}$ \\
 $C$ & $\{ (v,s,t) \ : \ (v,s) \in B \wedge (v,t) \in A \}$
\end{tabular}
    \caption{Definitions of sets used in the proof}
    \label{fig:my_label}
\end{table}

\noindent Let $(v,s_1),(v,s_2)\in B$. We first consider the case that
$s_1\oplus_v s_2 \preceq_v \mathbf{0}_v$. By Lemma \ref{lem:ostcarry}, $O_v(s_1\oplus_v s_2)= O_v(s_1) + O_v(s_2)$. Thus 
\begin{align*}
g_v(s_1 \oplus_v^B s_2)  &= g_v(s_1\oplus_v s_2)  \\
&= \alpha(v) Z_v(s_1\oplus_v s_2) - O_v(s_1\oplus_v s_2)\\
&=\alpha Z_v(s_1) + \alpha Z_v(s_2) - O_v(s_1) - O_v(s_2)\\
&=g_v(s_1) + g_v(s_2).
\end{align*}
Now suppose that	$s_1\oplus_v s_2 \succ_v \mathbf{0}_v$. Since $-\alpha(v)\leq O_v(s_1), O_v(s_2)\leq 0$, we get that
\[
1-\alpha(v) >  O_v(s_1) + O_v(s_2) +\alpha(v) \geq -\alpha(v). 
\]
Thus by Lemma \ref{lem:Oofone},
\[
O_v(s_1\oplus_v s_2\oplus_v \mathbf{1}_v) = O_v(s_1) + O_v(s_2) +\alpha(v).
\]
We obtain
\begin{align*}
g_v(s_1 \oplus_v^B s_2)  &= g_v(s_1\oplus_v s_2 \oplus_v \mathbf{1}_v) \\
&= \alpha Z_v(s_1\oplus_v s_2\oplus_v \mathbf{1}_v) - O_v(s_1\oplus_v s_2\oplus_v \mathbf{1}_v)\\
&=\alpha(v)\big(Z_v(s_1) + Z_v(s_2)\big) + \alpha(v) - O_v(s_1) - O_v(s_2) - \alpha(v)\\
&=g_v(s_1) + g_v(s_2).
\end{align*}
Since $s_1\prec_v s_2$ if and only if $Z_v(s_1)<Z_v(s_2)$, we get that $g_v$ is an isomorphism between $(B_v, \prec^B_v, \oplus^B_v)$ and $(\mathbb{N}, <, +)$.\newline

\begin{table}[t]
    \centering
\begin{tabular}{ c|cc }
Map & Domain & Codomain\\
 \hline
 $\alpha$ & $R$ & $\Irr$\\
 $O_v$ & $A_v$ & $I_{\alpha(v)}$\\
 $Z_v$ & $A_v^{fin}$ & $\N$\\
 $g_v:=\alpha(v)Z_v - O_v$ & $B_v$ & $\N$ \\
 $T_v:=g_v + O_v$ & $C_v$ & $[-\alpha(v),\infty)\subseteq \R$
\end{tabular}
    \caption{A list of the maps and their domains and codomains.}
    \label{fig:my_label2}
\end{table}

\noindent Let $C$ be defined by 
\[
\{ (v,s,t) \in (\SigmaH^{\omega})^3\ : \ (v,s) \in B \wedge (v,t) \in A \}.
\]
 Clearly $C$ is $\omega$-regular. Let $T_v : C_v \to [-\alpha(v), \infty) \subseteq \mathbb{R}$ map $(s,t) \mapsto g_v(s) + O_v(t)$.\newline

\noindent Note that $T_v$ is bijective for each $v\in R$, since every real number decomposes uniquely into a sum $n+y$, where $n\in \Z$ and $y\in I_v$. \newline

\noindent We define an ordering $\prec^C_v$ on $C_v$ lexicographically: $(s_1, t_1) \prec^C_v (s_2,t_2)$ if either
\begin{itemize}
    \item $s_1 \prec^B_v s_2$, or
    \item $s_1 = s_2$ and $t_1 \prec_v t_2$.
\end{itemize}
The set
\[
\{(v,s_1,t_1,s_2,t_2) \ : \ (s_1,t_1), (s_2, t_2)\in C_v \wedge (s_1,t_1) \prec_v^C (s_2, t_2)   \}
\]
is $\omega$-regular. We can easily check that 
$(s_1, t_1) \prec^C_v (s_2,t_2)$ if and only if $ T_v(s_1,t_1) < T_v(s_2,t_2)$.

\vspace{0.2cm}

\noindent Let $\mathbf{0}^B$ be $g_v^{-1}(0)$ and $\mathbf{1}^B$ be $g_v^{-1}(1)$. Let $\ominus^B$ be the (partial) inverse of $\oplus^B$. We define $\oplus^C$ for $(s_1,t_1),(s_2,t_2)\in C$ as follows:
\[
(s_1,t_1) \oplus^C_v (s_2,t_2) = \begin{cases} (s_1 \oplus_v^B s_2 \ominus^B \mathbf{1}^B, t_1 \oplus_v t_2) &\mbox{if } t_1 \prec \mathbf{0}_v \wedge t_2 \prec_v t_1 \oplus_v t_2; \\
(s_1 \oplus_v^B s_2 \oplus_v^B \mathbf{1}^B, t_1 \oplus_v t_2) & \mbox{if } \mathbf{0}_v \prec t_1 \wedge t_1 \oplus_v t_2 \prec_v t_2;\\
(s_1 \oplus^B_v s_2, t_1 \oplus_v t_2)& \text{otherwise.}\\
\end{cases}
\]
(Note that $\oplus^C$ is only a partial function, as the case where $s_1 = s_2 = \mathbf{0}^B$ and $t_1 \prec \mathbf{0}_v \wedge t_2 \prec_v t_1 \oplus_v t_2$ is outside of the domain of $\ominus^B$.) It is easy to check that $\oplus^C$ is $\omega$-regular. It follows directly from Lemma \ref{lem:ostcarry} that
\[
T_v((s_1,t_1) \oplus^C_v (s_2,t_2)) = T_v((s_1,t_1)) + T_v((s_2,t_2)).
\]
Thus for each $v \in R$, the function $T_v$ is an isomorphism between $(C_v,\prec^C_v,\oplus^C_v)$ and $([-\alpha(v),\infty),<,+)$. \noindent To finish the proof, it is left to establish the $\omega$-regularity of the following two sets:
\begin{enumerate}
    \item  $\{ (v,s,t) \in C \ : \ T_v(s,t) \in  \N\}$,
\item    $\{ (v,s,t) \in C \ : \ T_v(s,t) \in \alpha(u) \N\}$.
\end{enumerate}

\vspace{0.2cm}

\noindent For (1), observe that the set $T_v^{-1}(\mathbb{N})$ is just the set $\{(s,t) \in C_v : t = \mathbf{0}_v\}$.\newline  

\noindent For (2), consider the following two sets:
\begin{itemize}
    \item $U_1 = \{ (v,s,t)\in C \ : \ s=t\}$,
    \item $U_2 = \{ (v,\mathbf{0}_v,t) \in C \ : \ t \in F_v\}$.
\end{itemize}
Let $\mathbf{1}^C_v$ be $T_v^{-1}(1)$. Set 
\begin{align*}
U := \{ (v,(s_1,t_1)&\oplus_v^c (\mathbf{0}_v,t_2)) \ : \ (v,s_1,t_1) \in U_1, (v,\mathbf{0}_v,t_2) \in U_2, t_2\succeq 0\}\\
\cup & \ \{ (v,(s_1,t_1)\oplus_v^c (\mathbf{0}_v,t_2)\oplus \mathbf{1}^C_v) \ : \ (v,s_1,t_1) \in U_1, (v,\mathbf{0}_v,t_2) \in U_2, t_2\prec 0\}.
\end{align*}
The set $U$ is clearly $\omega$-regular, since both $U_1$ and $U_2$ are $\omega$-regular. We now show that $T_v(U)=\alpha(v)\N.$\newline

\noindent Let $(v,s,s)\in U_1$ and $(v,\mathbf{0}_v,t)\in U_2$.
If $t\succeq \mathbf{0}_v$, then by Lemma \ref{lem:f}
\begin{align*}
T_v((s,s) \oplus_C (\mathbf{0}_v,t)) &=  T_v(s,s) + T_v(\mathbf{0}_v,t) \\
&= \alpha(v) Z_v(s) - O_v(s) + O_v(s) + O_v(t)\\
&= \alpha(v) Z_v(s) + \alpha(v) Z_v(t) = \alpha(v) Z_v(s\oplus_v t).
\end{align*}
If $t\prec \mathbf{0}_v$, then by Lemma \ref{lem:f}
\begin{align*}
T_v((s,s) \oplus^C_v (\mathbf{0}_v,t)\oplus^C_v \mathbf{1}_v^C) &=  T_v(s,s) + T_v(\mathbf{0}_v,t) + 1 \\
&= \alpha(v) Z_v(s) - O_v(s) + O_v(s) + O_v(t) + 1\\
&= \alpha(v) Z_v(s) + \alpha(v) Z_v(t) = \alpha(v) Z_v(s\oplus_v t).
\end{align*}
\noindent Thus $T_v(U) \subseteq \alpha(v)\N$. By Lemma \ref{lem:ostsmz}, $T_v(U)=\alpha(v)\N.$
\qedhere 
	\end{proof}

\section{Decidability results}
\label{section:decidability}

We are now ready to prove the results listed in the introduction. We first recall some notation. Let $\mathcal L_m$ be the signature of the first-order structure $(\R,<,+,\Z)$, and let $\mathcal L_{m,a}$ be the extension of $\mathcal L_m$ by a unary predicate. For $\alpha \in \R_{>0}$, let $\mathcal{R}_\alpha$ denote the $\mathcal L_{m,a}$-structure $(\mathbb{R}, <, +, \mathbb{Z}, \alpha \mathbb{Z})$.
For each $\mathcal L_{m,a}$-sentence $\varphi$, we set
\[
R_{\varphi} := \{ v \in R \ : \ \mathcal{R}_{\alpha(v)}\models \varphi \}.
\]
\begin{thm}\label{thm:rvarphiregular}
Let $\varphi$ be an $\mathcal L_{m,a}$-sentence. Then $R_{\varphi}$ is $\omega$-regular.
\end{thm}
\begin{proof}
By Theorem \ref{b_ralpha_2half} there is a uniform family of $\omega$-regular structures $(\mathcal{D}_v)_{v\in R}$ such that
		 such that 
		$
		\mathcal{D}_v \simeq \mathcal{R}_{\alpha(v)}$ for each $v\in R$.
Then $R_{\varphi} = \{ v \in R \ : \ \mathcal{D}_v \models \varphi\}.$ This set is $\omega$-regular by Fact \ref{fact:uniformregulardef}.
\end{proof}

\noindent Let $\mathcal N=(R;(R_{\varphi})_{\varphi},(X)_{X\subseteq R^n\ \omega\text{-regular}})$ be the relational structure on $R$
with the  relations $R_{\varphi}$ for every $\cal L$-sentences $\varphi$ and $X\subseteq R^n$ $\omega$-regular. Because $\cal N$ is an $\omega$-regular structure, we obtain the following decidability result.
\begin{cor}
The theory $\Th(\mathcal N)$ is decidable.       
\end{cor}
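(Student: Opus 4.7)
The plan is to recognize $\mathcal N$ as an $\omega$-regular structure in an effective sense and then invoke Fact 2.3 (the first-order theory of any $\omega$-regular structure is decidable). The work thus reduces to verifying two things: the domain and each relation is $\omega$-regular, and the corresponding B\"uchi automaton can be computed effectively from the symbol's name in the signature, so that the standard decision procedure can actually be run on an arbitrary input sentence.

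First, I would observe that the domain $R$ is $\omega$-regular, as already noted earlier in the paper (it is given by the explicit regular expression $(\#(0\mid 1)^*1(0\mid 1)^*)^\omega$). Next, the relation symbols of $\mathcal N$ come in two flavors. For the unary predicates $R_\varphi$ indexed by $\mathcal L_{m,a}$-sentences $\varphi$, Theorem \ref{thm:rvarphiregular} gives $\omega$-regularity, and a B\"uchi automaton recognizing $R_\varphi$ can be computed from $\varphi$: apply Theorem \ref{b_ralpha_2half} to build the uniform family $(\mathcal D_v)_{v\in R}$, then use the moreover clause of Fact \ref{fact:uniformregulardef} to effectively produce an automaton for $R_\varphi=\{v\in R:\mathcal D_v\models\varphi\}$. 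For the relation symbols indexed by $\omega$-regular sets $X\subseteq R^n$, the symbol itself is naturally presented by a B\"uchi automaton, so no further work is needed.

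Having verified that $\mathcal N$ is $\omega$-regular, Fact 2.3 immediately yields decidability of $\Th(\mathcal N)$. Concretely, given an input sentence $\psi$ of the signature of $\mathcal N$, one first computes a B\"uchi automaton for each atomic subformula: directly for the $X$-symbols, and via the uniform construction of Theorem \ref{b_ralpha_2half} plus Fact \ref{fact:uniformregulardef} for the $R_\varphi$-symbols. Then, using closure of $\omega$-regular sets under union, intersection, complementation, and projection (Fact \ref{fact:omegaclosure}) together with $\omega$-regularity of the domain $R$, one inductively builds a B\"uchi automaton $\mathcal A_\psi$ whose language is nonempty iff $\mathcal N\models\psi$ (or iff $\psi$ holds, if $\psi$ is a sentence). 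Emptiness of $L(\mathcal A_\psi)$ is decidable by B\"uchi, completing the decision procedure.

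No serious obstacle is anticipated; the essential content has already been established by Theorem \ref{thm:rvarphiregular}. The only subtle point is purely notational: one must be careful that although the signature of $\mathcal N$ is infinite, each sentence uses only finitely many relation symbols, and each of those symbols carries (or effectively determines) the B\"uchi automaton needed for the decision procedure. This is exactly the style of uniformity captured by Fact \ref{fact:uniformregulardef}, and so the corollary follows in a few lines.
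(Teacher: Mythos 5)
Your proposal is correct and follows the same route as the paper, which simply observes that $\mathcal N$ is an $\omega$-regular structure and invokes the decidability of first-order theories of such structures (Fact 2.3). The extra care you take about effectivity — that each relation symbol $R_\varphi$ or $X$ effectively determines its B\"uchi automaton, so the decision procedure can actually be run on any given sentence — is a worthwhile elaboration of what the paper leaves implicit, but it is not a different argument.
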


\noindent We now proceed towards the proof of Theorem C. Recall that $\Irr := (0,1)\setminus \Q.$

\begin{defi} Let $X\subseteq \Irr^n$. Let $X_R$ be defined by
\[
X_R :=\{ (v_1,\dots,v_n) \in R^n \ : \ v_1 \sim_{\#} v_2 \sim_{\#} \dots \sim_{\#} v_n \wedge (\alpha(v_1),\dots,\alpha(v_n)) \in X \}  
\]
We say $X$ is \textbf{recognizable modulo $\sim_{\#}$} if $X_R$ is $\omega$-regular.
\end{defi}

\begin{lem}\label{lem:closednessr}
The collection of sets recognizable modulo $\sim_{\#}$ is closed under Boolean operations and coordinate projections.
\end{lem}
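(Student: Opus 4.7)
The plan is to verify closure under each operation separately. Union, intersection, and complement are essentially book-keeping, while the projection case requires the zero-closure machinery of Lemma \ref{zero_closure}.

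For union and intersection, the identities $(X \cup Y)_R = X_R \cup Y_R$ and $(X \cap Y)_R = X_R \cap Y_R$ are immediate from the definition of the $_R$-operation, so $\omega$-regularity transfers by Fact \ref{fact:omegaclosure}. For complement, I will observe that $(\Irr^n \setminus X)_R = A \setminus X_R$, where $A \subseteq R^n$ is the $\omega$-regular set cut out by the alignment condition $v_1 \sim_\# \cdots \sim_\# v_n$; this is again $\omega$-regular by Fact \ref{fact:omegaclosure}.

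For closure under projection, let $X \subseteq \Irr^{n+1}$ be recognizable modulo $\sim_\#$, let $\pi : \Irr^{n+1} \to \Irr^n$ be projection onto the first $n$ coordinates (the general case is analogous), and let $\pi_R : R^{n+1} \to R^n$ denote the corresponding projection on $\#$-binary codings. Since $X_R$ is $\omega$-regular and aligned, $\pi_R(X_R) \subseteq R^n$ is $\omega$-regular and aligned. My claim is that $\pi(X)_R$ coincides with the (aligned) zero-closure of $\pi_R(X_R)$, which is $\omega$-regular by Lemma \ref{zero_closure}. The inclusion of this zero-closure into $\pi(X)_R$ is immediate: any element lifts through $\pi_R$ to an element of $X_R$ that witnesses $\pi(X)$-membership on the nose.

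The main obstacle is the reverse inclusion. Given aligned $u \in R^n$ with $(\alpha(u_1), \dots, \alpha(u_n)) \in \pi(X)$, I must exhibit $v \in \pi_R(X_R)$ with $\alpha(v) = \alpha(u)$. By hypothesis there is $y \in \Irr$ with $(\alpha(u_1), \dots, \alpha(u_n), y) \in X$, but the $\#$-spacing of $u$ need not accommodate any $\#$-binary coding of $y$, since some continued fraction coefficient of $y$ may require more bits than the corresponding block in $u$ permits. Passing to the zero-closure is precisely what absorbs this obstruction: by padding $\#$-binary codings of $\alpha(u_1), \dots, \alpha(u_n), y$ with trailing zeros inside their $\#$-blocks in a coordinated way, one can synchronize all $\#$-positions and produce an aligned $(v_1, \dots, v_n, w) \in X_R$ with $\alpha(v_i) = \alpha(u_i)$ for $i \leq n$ and $\alpha(w) = y$; then $(v_1, \dots, v_n) = \pi_R(v_1, \dots, v_n, w) \in \pi_R(X_R)$ and $\alpha(u) = \alpha(v_1, \dots, v_n)$, placing $u$ in the zero-closure as required.
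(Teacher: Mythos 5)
Your proposal is correct and follows essentially the same route as the paper: the Boolean cases reduce to the identities $(X\cap Y)_R = X_R\cap Y_R$ and $(X^c)_R = E\cap(R^n\setminus X_R)$ with $E$ the aligned set, and projection is handled by showing $\pi(X)_R$ equals the zero-closure of $\pi_R(X_R)$ via Lemma \ref{zero_closure}. Your explicit padding argument for the reverse inclusion (synchronizing $\#$-positions by adding trailing zeros so that a witness for $y$ fits alongside codings of $\alpha(u_1),\dots,\alpha(u_n)$) is exactly the point the paper makes more briefly, and your parenthetical insistence on the \emph{aligned} zero-closure is a slightly more careful reading of what Lemma \ref{zero_closure} actually constructs.
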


	\begin{proof} 
Let $X, Y \subseteq \Irr$ be recognizable modulo $\sim_\#$. It is clear that $(X \cap Y)_R = X_R \cap Y_R$. Thus $X \cap Y$ is recognizable modulo $\sim_\#$. Let $X^c$ be  $\Irr^n\setminus X$, the complement of $X$. For ease of notation, set $
E := \{ (v_1,\dots,v_n) \in R^n \ : \ v_1 \sim_{\#} v_2 \sim_{\#} \dots \sim_{\#} v_n \}.$
Then
\begin{align*}
(X^c)_R &= \{ (v_1,\dots,v_n) \in R^n \ : \ v_1 \sim_{\#} v_2 \sim_{\#} \dots \sim_{\#} v_n \wedge (\alpha(v_1),\dots,\alpha(v_n)) \notin X \}\\
&= E\cap \{ (v_1,\dots,v_n) \in R^n \ :  (\alpha(v_1),\dots,\alpha(v_n)) \notin X \}\\
& = E \cap \{ (v_1,\dots,v_n) \in R^n \ :  (\alpha(v_1),\dots,\alpha(v_n)) \notin X \vee \neg (v_1 \sim_{\#} v_2 \sim_{\#} \dots \sim_{\#} v_n)\}\\
& = E \cap  (R^n\setminus X_R).
\end{align*}
This set is $\omega$-regular, and hence $X^c$ is recognizable modulo $\sim_\#$.\newline

\noindent For coordinate projections, it is enough to consider projections onto the first $n-1$ coordinates. Let $n > 0$ and let $\pi$ be the coordinate projection onto first $n-1$ coordinates. Observe that 
\[
\pi(X) = \{(\alpha_1, \dots, \alpha_{n-1}) \in \R^{n-1} \ : \ \exists \alpha_n \in \R \ (\alpha_1, \dots, \alpha_{n-1}, \alpha_n) \in X\}.
\]
Thus $\pi(X)_R$ is equal to
\[
\{ (v_1,\dots,v_{n-1}) \in R^{n-1} \ : \ v_1 \sim_{\#} \dots \sim_{\#} v_{n-1} \wedge 
\exists \alpha_n : (\alpha(v_1),\dots,\alpha(v_{n-1}), \alpha_n) \in X \}.
\]
Note that $v \mapsto \alpha(v)$ is a surjection $R \twoheadrightarrow (0, 1) \setminus \Q$. Thus $\pi(X)_R$ is also equal to:
\[\{ (v_1,\dots,v_{n-1}) \in R^{n-1} : v_1 \sim_{\#} \dots \sim_{\#} v_{n-1} \wedge 
\exists v_n : (\alpha(v_1),\dots,\alpha(v_n)) \in X \}.
\]
Unfortunately, this set is not necessarily equal to $\pi(X_R)$. There might be tuples\linebreak $(v_1,\dots,v_{n-1})$ such that no $v_n$ can be found, because it would require more bits in one of its coefficients than $v_1,\dots, v_{n-1}$ have for that coefficient. But $\pi(X_R)$ always contains \textit{some} representation of $\alpha(v_1),\dots, \alpha(v_{n-1})$ with the appropriate number of digits. We need only ensure that removal of trailing zeroes does not affect membership in the language. Thus $\pi(X)_R$ is just the zero-closure of $\pi(X_R)$. Thus $\pi(X)_R$ is $\omega$-regular by Lemma~\ref{zero_closure}.
\end{proof}

\begin{thm}\label{thm:dechastag}
Let $X_1,\dots, X_n$ be recognizable modulo $\sim_{\#}$ by B\"uchi automata $\cal A_1,\dots, \cal A_n$, and let $\cal Q$ be the structure $(\Irr; X_1,\dots, X_n)$. Then the theory of $\cal Q$ is decidable.
\end{thm}
\begin{proof}
By Lemma \ref{lem:closednessr} every set definable in $\cal Q$  is recognizable modulo $\sim_{\#}$. Moreover, for each definable set $Y$ the automaton that recognizes $Y$ modulo $\sim_{\#}$, can be computed from the automata $\cal A_1,\dots,\cal A_n.$
Let $\psi$ be a sentence in the signature of $\cal Q$. Without loss of generality, we can assume that $\psi$ is of the form $\exists x \ \chi(x).$ Set
\[
Z:= \{ a \in \Irr^n \ : \ \mathcal{Q} \models \chi(a) \}.
\]
Observe that $\mathcal{Q} \models \psi$ if and only if $Z$ is non-empty.
Note for every $a \in \Irr^n$ there are $v_1,\dots,v_n \in R$ such that $v_1 \sim_{\#} v_2 \sim_{\#} \dots \sim_{\#} v_n$ and  $(\alpha(v_1),\dots,\alpha(v_n))=a$. Thus $Z$ is non-empty if and only if 
\[
\{ (v_1,\dots,v_n) \in R^n \ : \ v_1 \sim_{\#} v_2 \sim_{\#} \dots \sim_{\#} v_n \wedge (\alpha(v_1),\dots,\alpha(v_n)) \in Z \}  
\]
is non-empty. Thus to decide whether $\cal Q \models \psi$, we first compute the automaton $\cal B$ that recognizes $Z$ modulo $\sim_{\#}$, and then check whether the automaton accepts any word. 
\qedhere 
\end{proof}
		
\noindent We are now ready to prove Theorem C; that is, decidability of the theory of the structure
\[
\mathcal{M} =(\Irr, <, (M_\varphi)_\varphi, (q)_{q \in \Qu}),
\]
where $M_{\varphi}$ is defined for each $\mathcal{L}_{m,a}$-formula as
\[
M_{\varphi} := \{ \alpha \in \Irr \ : \ \mathcal{R}_{\alpha}\models \varphi \}.
\]
	\begin{proof}[Proof of Theorem C]
We just need to check that the relations we are adding are all recognizable modulo $\sim_{\#}$. By Lemma \ref{continued_fraction_ordering_reg} the ordering $<$ is recognizable modulo $\sim_{\#}$. By Lemma \ref{continued_fraction_quadratic_reg}, the singleton $\{ q \}$ is is recognizable modulo $\sim_{\#}$ for every $q\in \Qu$. 
Since $M_{\varphi}=\alpha(R_{\varphi})$, recognizability  of $M_{\varphi}$ modulo $\sim_{\#}$ follows from Theorem \ref{thm:rvarphiregular}.	
	\end{proof}

\noindent We can add to $\cal M$ a predicate for every subset of $\Irr^n$ that is recognizable modulo $\sim_{\#}$, and preserve the decidability of the theory. The reader can check that examples of subsets of $\Irr$ recognizable modulo $\sim_{\#}$ are  the set of all $\alpha\in \Irr$ such that the terms in the continued fraction expansion of $\alpha$ are powers of 2, the set of all $\alpha\in \Irr$ such that the terms in the continued fraction expansion of $\alpha$ are in (or are not in) some fixed finite set, and  the set of all $\alpha\in \Irr$ such that all even (or odd) terms in their continued fraction expansion are 1.
	
\section{Automatically Proving Theorems about Sturmian Words}
We have created an automatic theorem-prover based on the ideas and the decision algorithms outlined above, called Pecan~\cite{pecan-repo}, available at 
\begin{center}
\url{https://github.com/ReedOei/Pecan}
\end{center}
We use Pecan to provide proofs of known and unknown results about characteristic Sturmian words. The Pecan code for the following examples is available at 
\begin{center}
\url{https://github.com/ReedOei/SturmianWords}
\end{center}
We quote some of this code throughout this section. These code snippets should be understandable without further explanation, but interested readers can find more information and explanations in \cite{Pecan}. We recommend downloading the code instead of copying from this paper. In addition to the size of the automata created by Pecan, we sometimes state the runtime of Pecan on a normal laptop to indicate how quickly these statements have been proved.
\label{section:pecan}

\subsection{Classical theorems}
We begin by giving automated proofs for several classical result result about Sturmian words. We refer the reader to \cite{zbMATH01737190} for more information and traditional proofs of these results.\newline

\noindent In the following, we assume that $a$ is irrational and $i,j,k,n,m,p,s$ are $a$-Ostrowski representations. This can be expressed in Pecan as
\begin{pecan}
Let a is bco_standard.
Let i,j,k,n,m,p,s are ostrowski(a).
\end{pecan}
Here \pecaninline{bco_standard} is a data type for real numbers encoded using $\#$-binary coding. Then \pecaninline{ostrowski(a)} determines the Ostrowski numeration system used for the variables \pecaninline{i,j,k,n,m,p} and {s}. Pecan allows the use of Unicode characters such as $\exists$, $\forall$, $\neg$ and $\wedge$, and we will use these here for readability. Of course, Pecan also supports writing \textcolor{red}{exists}, \textcolor{red}{forall}, \textcolor{red}{!} and \textcolor{red}{and} for the same operations.
We write $c_{a,0}(i)$ as \texttt{$\$$C[i]} in Pecan.\newline

\noindent Let $w^R$ denote the reversal of a word $w$. We say a word $w$ is  a \textbf{palindrome} if $w=w^R$.

\begin{thm}\label{thm:sturm_balanced_aper}
Characteristic Sturmian words are balanced and aperiodic.
\end{thm}

\begin{proof}
To show that a characteristic Sturmian word $c_{a,0}$ is balanced, it is sufficient to show that there is no palindrome $w$ in $c_{a,0}$ such that $0w0$ and $1w1$ are in $c_{a,0}$ (see ~\cite[Proposition 2.1.3]{zbMATH01737190}). We encode this in Pecan as follows.
The predicate \pecaninline{palindrome(a,i,n)} is true when $c_{a,0}[i..i+n] = c_{a,0}[i..i+n]^R$.
The predicate \pecaninline{factor_len(a,i,n,j)} is true when $c_{a,0}[i..i+n] = c_{a,0}[j..j+n]$.
Then Pecan takes \texttt{321.73} seconds to prove the following theorem:
\begin{pecan}
Theorem ("Balanced", {
foralla. !(existsi,n. palindrome(a,i,n) &
            (existsj. factor_len(a,i,n,j) & $\$$C[j-1] = 0 & $\$$C[j+n] = 0) &
            (existsk. factor_len(a,i,n,k) & $\$$C[k-1] = 1 & $\$$C[k+n] = 1))
}).
\end{pecan}

\noindent Encoding the property that a word is eventually periodic is straightforward:
\begin{pecan}
eventually_periodic(a, p) := 
    p > 0 & existsn. foralli. if i > n then $\$$C[i] = $\$$C[i+p]
\end{pecan}

\noindent The resulting automaton has $4941$ states and $35776$ edges, and takes \pecaninline{117.78} seconds to build.
We then state the theorem in Pecan, which confirms the theorem is true.
\begin{pecan}
Theorem ("Aperiodic", {
    foralla. forallp. if p > 0 then !eventually_periodic(a, p)
}). $\tikz[remember picture,baseline=(lastlstline.base)]{\node[anchor=base] (lastlstline) {};}$
\end{pecan}%
\tikz[remember picture,overlay]{
  \node[anchor=base east,inner sep=0pt] at (lastlstline.base -| current page text area.east) {\qedhere};
}%
\end{proof}

\noindent A word $w$ is a \textbf{factor} of a word $u$ if there exist words $v_1,v_2$ such that $u=v_1wv_2$. A factor $w$ of a word $u$ \textbf{right special} if both $w0$ and $w1$ are also factors of $u$.

\begin{thm}
    For each natural number $n$, $c_{a,0}$ contains a unique right special factor of length $n$, and this factor is $c_{a,0}[1..n+1]^R$.
\end{thm}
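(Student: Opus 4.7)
The plan is to express the theorem as a Pecan sentence and invoke the decision procedure underlying Theorem~A. I would reuse the predicate \pecaninline{factor_len(a,i,n,j)} from the proof of Theorem~\ref{thm:sturm_balanced_aper} (which asserts $c_{a,0}[i..i+n]=c_{a,0}[j..j+n]$) to define right-specialness of the length-$n$ factor starting at position $i$,
\begin{pecan}
right_special(a,i,n) := existsj,k. factor_len(a,i,n,j)
    & factor_len(a,i,n,k) & $\$$C[j+n] != $\$$C[k+n].
\end{pecan}
In parallel I would define a predicate asserting that $c_{a,0}[i..i+n]$ coincides letterwise with the reversal of $c_{a,0}[1..n+1]$,
\begin{pecan}
reversed_prefix(a,i,n) := forallk. if k < n then $\$$C[i+k] = $\$$C[n-1-k].
\end{pecan}
The subtraction $n-1-k$ and the comparisons used above are expressible in the $a$-Ostrowski numeration system because addition (and hence subtraction) is $\omega$-regular by Theorem~\ref{oplus_fin_regular}.

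The theorem then splits into two Pecan checks. The first is existence of a right-special factor of each length, $\forall a \; \forall n \; \exists i \; \texttt{right\_special}(a,i,n)$. The second is the characterization $\forall a \; \forall i, n \; (\texttt{right\_special}(a,i,n) \Rightarrow \texttt{reversed\_prefix}(a,i,n))$. Together these give the theorem: existence supplies a position $i$ at which the right-special factor occurs, and the characterization forces its letter sequence to be $c_{a,0}[1..n+1]^R$, yielding both identification and uniqueness at once (since any two right-special factors of length $n$ must both equal the reversed prefix). Each sentence is first-order in the signature $\mathcal{L}_c$, so by Theorem~A its truth across $\mathcal{K}_{\rm char}$ reduces to an automaton emptiness check that Pecan executes.

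The anticipated obstacle is automaton size rather than correctness. The predicate \pecaninline{factor_len} itself already has a universal quantifier over letter positions; nesting two existentials inside \pecaninline{right_special} and then another universal inside the characterization produces several quantifier alternations, each of which risks a combinatorial blow-up of the intermediate B\"uchi automata. If this turns out to be prohibitive, I would follow the staged strategy of Theorem~\ref{thm:sturm_balanced_aper}: first prove existence by exhibiting that the position $i$ where the reversed prefix appears is right-special, and then prove the characterization as a separate lemma, so that Pecan can minimize intermediate automata between the two steps before composing them.
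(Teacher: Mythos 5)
Your proposal is correct and follows essentially the same route as the paper: both encode right-specialness and the reversed-prefix characterization as Pecan predicates built from \pecaninline{factor_len} and letter comparisons, then discharge the statement through the decision procedure, with your $\$C[j+n]\neq\$C[k+n]$ formulation being equivalent over the binary alphabet to the paper's explicit ``extends by $0$'' and ``extends by $1$'' clauses. The only additional wrinkle in the paper is exactly the optimization you anticipate at the end---it restricts to the \emph{first} occurrence of each right special factor before comparing it to $c_{a,0}[1..n+1]^R$ (shrinking the automaton from $3375$ states to $112$)---so the one thing to double-check is the indexing in \pecaninline{reversed_prefix}: under the paper's $1$-based convention the reversal of $c_{a,0}[1..n+1]$ should match $c_{a,0}(n-k)$ rather than $c_{a,0}(n-1-k)$, an off-by-one that Pecan would flag immediately.
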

\begin{proof}
    We first define right special factors, as above.
    Recall that \pecaninline{factor_len(a,i,n,j)} checks that $c_{a,0}[i..i+n] = c_{a,0}[j..j + n]$.
\begin{pecan}
right_special_factor(a,i,n) := 
    (existsj. factor_len(a,i,n,j) & $\$$C[j+n] = 0) &
    (existsk. factor_len(a,i,n,k) & $\$$C[k+n] = 1)
\end{pecan}
\noindent We then define the \textbf{first} right special factor, which is the first occurrence (by index) of the right special factor in the word $c_{a,0}$.
    This step is purely to reduce the cost of checking the theorem: the \pecaninline{right_special_factor} automaton has $3375$ states, but \pecaninline{first_right_special_factor} has only $112$.
\begin{pecan}
first_right_special_factor(a,i,n) := special_factor(a,i,n) 
            & forallj. if (j>0 & factor_len(a,j,n,i)) then i<=j
\end{pecan}
We then check that each of these right special factors is equal to $c_{a,0}[1..n+1]^R$, which also proves the uniqueness.
The predicate \pecaninline{reverse_factor(a,i,j,l)} checks that $c_{a,0}[i..j] = c_{a,0}[k+1..l+1]^R$, where $j - i = l - k$. Then Pecan confirms:
\begin{pecan}
Theorem ("The unique special factor of length n is C[1..n+1]^R", {
foralla. foralli,n.
    if i > 0 & first_right_special_factor(a,i,n) then
        reverse_factor(a,i,i+n,n)
}). $\tikz[remember picture,baseline=(lastlstline.base)]{\node[anchor=base] (lastlstline) {};}$
\end{pecan}%
\tikz[remember picture,overlay]{
  \node[anchor=base east,inner sep=0pt] at (lastlstline.base -| current page text area.east) {\qedhere};
}%
\end{proof}

\noindent Another characterization of Sturmian words due to Droubay and Pirillo ~\cite[Theorem 5]{DROUBAY199973} is that a word is Sturmian if and only if it contains exactly one palindrome of length $n$ if $n$ is even, and exactly two palindromes of length $n$ if $n$ is odd.
We prove the forward direction below. 
\begin{thmC}[{\cite[Proposition 6]{DROUBAY199973}}]
    For every $n \in \N$, $c_{a,0}$ contains exactly one palindrome of length $n$ if $n$ is even, and exactly two palindromes of length $n$ if $n$ is odd.
\end{thmC}
\begin{proof}
    We begin by defining a predicate defining the location of the first occurrence of each length $n$ palindrome in $c_{a,0}$.
\begin{pecan}
first_palindrome(a, i, n) := palindrome(a, i, n) &
    forallj. if j > 0 & factor_len(a,j,n,i) then i <= j
\end{pecan}
    The resulting automaton has $247$ states and $1281$ edges.
    The following states the theorem, and Pecan proves it in \texttt{428.85} seconds.
\begin{pecan}
Theorem ("", {
foralla. foralln. ( 
if even(n) & n > 0 then
    existsi. forallk. first_palindrome(a,k,n) iff i = k ) & ( 
if odd(n) then
    existsi,j. i < j & forallk. first_palindrome(a,k,n) iff (i = k | j = k)
)}). $\tikz[remember picture,baseline=(lastlstline.base)]{\node[anchor=base] (lastlstline) {};}$
\end{pecan}%
\tikz[remember picture,overlay]{
  \node[anchor=base east,inner sep=0pt] at (lastlstline.base -| current page text area.east) {\qedhere};
}%
\end{proof}

\subsection{Powers}
Next, we prove the follow results about powers of Sturmian words. A finite nonempty subword $x$ of a (finite or $\omega$) word $w$ is a \textbf{$n$-th power} if $x = y^n$ for some finite word $y$.
We call a $2$nd power a \textbf{square}, and a $3$rd power a \textbf{cube}.\newline

\noindent 
Using Pecan, we construct an automaton recognizing the following property, stating that there is a square of length $n$ starting at $c_{a,0}(i)$:
\begin{pecan}
 $\texttt{square}$(a, i, n) := n > 0 & i > 0 
                 & forallj. i <= j & j < i + n & $\$$C[j]=$\$$C[j+n]. 
\end{pecan}
The resulting automaton has 80 states and 400 edges.
All characteristic Sturmian words contain such a square, as Pecan proves in \pecaninline{0.02} seconds:
\begin{pecan}
Theorem ("", {foralla. existsi,n. $\texttt{square}$(a, i, n)}).
\end{pecan}
\noindent Of course, it is easy to see all binary words of length at least four contain squares.
However, it is still useful to have created an automaton for recognizing squares, because it encodes quite a bit more information than just that squares exist: it also tells us exactly where they are in the Sturmian word. This allows Pecan to prove the following result.
 
\begin{thm}[{Dubickas \cite[Theorem 1]{Dubickas2009}}]
    All characteristic Sturmian words start with arbitrarily long squares.
\end{thm}
\begin{proof}
Using Pecan and the automaton for squares that we constructed earlier, we prove the following theorem, which takes \pecaninline{0.40} seconds.
\begin{pecan}
Theorem ("", { foralla. forallj,n.  existsm. m> n & $\texttt{square}$(a, j, m)
}). $\tikz[remember picture,baseline=(lastlstline.base)]{\node[anchor=base] (lastlstline) {};}$
\end{pecan}%
\tikz[remember picture,overlay]{
  \node[anchor=base east,inner sep=0pt] at (lastlstline.base -| current page text area.east) {\qedhere};
}%

\end{proof}
\noindent Furthermore, we can use an automaton recognizing squares to efficiently build automata recognizing higher-powers. 
Indeed, we ask Pecan to construct an automaton recognizing the following property that there is a cube of length $n$ starting at $c_{a,0}(i)$, as follows:
\begin{pecan}
cube(a, i, n) := $\texttt{square}$(a, i, n) & $\texttt{square}$(a, i + n, n)
\end{pecan}
We can ask Pecan to prove the well-known fact that characteristic Sturmian words contain cubes:
\begin{pecan}
Theorem ("", { foralla. existsi,n.  cube(a, i, m)}).
\end{pecan}
Pecan proves this in \pecaninline{0.25} seconds.\newline

\noindent Similar to squares, we have the following property for cubic prefixes.
\begin{thm}\label{thm:longcubes}
Let $a \in (0,1)$. Then $c_{a,0}$ starts with arbitrarily long cubes if and only if the continued fraction of $a$ is not eventually 1.
\end{thm}
\begin{proof}
First, we manually build an automaton recognizing $a$ such that the continued fraction of $a$ is not eventually one, called $\texttt{eventually\_one}$. Pecan proves the following in \pecaninline{2.37} seconds:
\begin{pecan}
Theorem ("", { 
 foralla. ((!eventually_one(a)) iff (forallm. existsn. n>m cube(a, 1, n)))
 }). $\tikz[remember picture,baseline=(lastlstline.base)]{\node[anchor=base] (lastlstline) {};}$
\end{pecan}%
\tikz[remember picture,overlay]{
  \node[anchor=base east,inner sep=0pt] at (lastlstline.base -| current page text area.east) {\qedhere};
}%
\end{proof}
\noindent The proof of Theorem \ref{thm:longcubes} highlights the ability of our decision algorithm, and hence of Pecan, to not only determine whether statements hold for all irrational numbers, but also whether a statement holds for all elements of a subset that is recognizable modulo $\sim_{\#}$. 
Indeed, we can use Pecan to show that if the continued fraction of $a$ is not eventually 1, then $c_{a,0}$ contains a fourth power. To do so, 
we  construct a predicate that holds whenever there is a fourth power of length $n$ starting at $c_{a,0}(i)$:
\begin{pecan}
fourth_pow(a, i, n) := $\texttt{square}$(a, i, n) & cube(a, i + n, n)
\end{pecan}
Finally, Pecan proves the following in \pecaninline{0.56} seconds.
\begin{pecan}
Theorem ("", {
foralla.  if !eventually_one(a) then existsi,n. forth_pow(a,i,n)
)}).
\end{pecan}

\noindent The converse is not true. Although it is easy to see without Pecan why, we can also ask Pecan for counterexamples using the following commands.

\begin{pecan}
Restrict i, n are ostrowski(a).
has_fourth_pow(a) := existsi,n. n > 0 & fourth_pow(a,i,n)
Example (ostrowskiFormat, { 
    bco_standard(a) & eventually_one(a) & has_fourth_pow(a) 
}).
\end{pecan}

\noindent Pecan responds with:
\begin{pecan}
[(a,[6][3]([1])^$\omega$)]
\end{pecan}
This means that $a = [0,6,3,\overline{1}]$ is a counterexample. Recall that $a \in (0,1)$, so the first digit of the continued fraction is always $0$ and therefore omitted by Pecan. For this choice of $a$, the characteristic Sturmian word $c_{a,0}$ starts with $000001$. Thus there is a fourth power immediately at the beginning of $c_{a,0}$.

\subsection{Antisquares and more}
Let $w\in \{0,1\}^*$. We let  $\overline{w}$ denote the $\{0,1\}$-word obtained by replacing each $1$ in $w$ by $0$ and each $0$ in $w$ by $1$. A word $w\in \{0,1\}^*$ is an \textbf{antisquare} if $w = v\overline{v}$ for some $v\in \{0,1\}^*$.  
    We define $A_O : (0,1) \setminus \Q \to \N\cup \{\infty\}$ to map an irrational $a$ to the maximum order of an antisquare in $c_{a,0}$ if such a maximum exists, and to $\infty$ otherwise. We let $A_L : (0,1)  \setminus \Q\to \N \cup \{\infty\}$ map $a$ to the maximum length of an antisquare in $c_{a,0}$ if such a maximum exists and $\infty$ otherwise. Note that $A_L(a) = 2A_O(a)$.\newline
    
\noindent Recall that $w^R$ denotes the reversal of a word $w$. A word $w\in \{0,1\}^*$ is an \textbf{antipalindrome} if $w = \overline{w^R}$. We set $A_P : (0,1) \setminus \Q \to \N\cup \{\infty\}$ to be the map that takes an irrational $a$ to the maximum length of an antipalindrome in $c_{a,0}$ if such a maximum, and to $\infty$ otherwise.
    We will use Pecan to prove that $A_O(a),A_L(a)$ and $A_P(a)$ are finite for every $a$. While the quantities $A_O(a)$, $A_P(a)$ and $A_L(a)$ can be arbitrarily large, we prove the new results that the length of the Ostrowski representations of these quantities is bounded, independent of $a$.   \newline
    
    \noindent  Let $a\in (0,1)$ be irrational and $N\in \N.$ Let $|N|_{a}$ denote the length of the $a$-Ostrowski representation of $N$, that is the index of the last nonzero digit of $a$-Ostrowski representation of $N$, or 0 otherwise.

\begin{thm}\label{thm:alpha_a_finite}
For every irrational $a \in (0,1)$
    \begin{enumerate}[align=left]
        \item[(i)] $|A_O(a)|_{a} \leq 4$,
        \item[(ii)] $|A_P(a)|_{a} \leq 4$,
        \item[(iii)] $|A_L(a)|_{a} \leq 6$,
        \item[(iv)]\label{thm:bound-antisq-antipalin} $A_O(a) \leq A_P(a) \leq A_L(a) = 2A_O(a)$.
    \end{enumerate}
    There are irrational numbers $a, \beta\in (0,1)$ such that $A_O(a) = A_P(a)$ and $A_P(\beta) = A_L(\beta)$.
\end{thm}
\begin{proof}
    Using Pecan, we create automata which compute $A_O$, $A_P$, and $A_L$:
    \begin{align*}
        A_O(a, n) &:= \texttt{has\_antisquare}(a, n) \land \forall m. \texttt{has\_antisquare}(a, m) \implies m \leq n \\
        A_P(a, n) &:= \texttt{has\_antipalindrome}(a, n) \land \forall m. \texttt{has\_antipalindrome}(a, m) \implies m \leq n \\
        A_L(a, n) &:= \texttt{has\_antisquare\_len}(a, n) \land \forall m. \texttt{has\_antisquare\_len}(a, m) \implies m \leq n
    \end{align*}
       We build automata recognizing $a$-Ostrowski representations of at most $4$ and $6$ nonzero digits, called $\texttt{has\_4\_digits}(n)$ and $\texttt{has\_6\_digits}(n)$.
    Then we use Pecan to prove all the parts of the theorem by checking the following statement.
\begin{pecan}
Theorem ("(i), (ii), (iii), and (iv)", {
foralla. has_4_digits($\texttt{max\_antisquare}$(a)) &
    has_4_digits(max_antipalindrome(a)) &
    has_6_digits($\texttt{max\_antisquare\_len}$(a)) &
    $\texttt{max\_antisquare}$(a) <= max_antipalindrome(a) &
    max_antipalindrome(a) <= $\texttt{max\_antisquare\_len}$(a)
}).
\end{pecan}
    We also use Pecan to find examples of the equality: when $a = [0; 3,3,\overline{1}]$, we have $A_O(a) = A_P(a) = 2$, and when $a = [0; 4,2,\overline{1}]$, we have $A_P(a) = A_L(a) = 2$.
\end{proof}

\begin{thm}\label{thm:antisquares}
    For every irrational $a \in (0,1)$, all antisquares and antipalindromes in $c_{a,0}$ are either of the form $(01)^*$ or of the form $(10)^*$.
\end{thm}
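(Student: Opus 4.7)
The approach is to verify the statement with Pecan in the same automated fashion as Theorem \ref{thm:alpha_a_finite}. By the uniform $\omega$-regular interpretation of the family $(\mathcal N_{a,0})_{a\in \mathrm{Irr}}$ provided by Theorem \ref{b_ralpha_2half} and Fact \ref{fact:uniformregulardef}, the statement is a first-order sentence over a uniformly $\omega$-regular structure and is therefore decidable; Pecan will carry out exactly this decision.

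First I would define a predicate $\mathrm{antisq}(a,i,n)$ expressing that $c_{a,0}[i..i+2n]$ is an antisquare, namely $c_{a,0}(i+j)\neq c_{a,0}(i+n+j)$ for all $0\leq j<n$. Analogously, $\mathrm{antipal}(a,i,n)$ expresses that $c_{a,0}[i..i+n]$ is an antipalindrome, namely $c_{a,0}(i+j)\neq c_{a,0}(i+n-1-j)$ for all $0\leq j<n$; note that this implicitly forces $n$ to be even, since the middle letter would otherwise have to differ from itself. Both predicates use only Ostrowski arithmetic, the order on $\N$, and the unary predicate for $c_{a,0}$, so each defines an $\omega$-regular set by the results of Section \ref{section:addition} together with Lemma \ref{lem:order-reg}.

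Second, I would introduce $\mathrm{alt}(a,i,m)$ stating that $c_{a,0}(i+j)\neq c_{a,0}(i+j+1)$ for all $0\leq j\leq m-2$. A trivial induction shows that this is equivalent to the factor $c_{a,0}[i..i+m]$ being of the form $(01)^*$ or $(10)^*$ as in the statement. The theorem then becomes the single universally quantified Pecan sentence asserting that for every $a$, every $i>0$, and every $n>0$, $\mathrm{antisq}(a,i,n)$ implies $\mathrm{alt}(a,i,2n)$, and $\mathrm{antipal}(a,i,n)$ implies $\mathrm{alt}(a,i,n)$. Pecan decides this sentence by constructing the associated Büchi automaton via the closure operations of Fact \ref{fact:omegaclosure} and checking that its complement is empty.

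The main obstacle I anticipate is practical rather than mathematical: the antisquare and antipalindrome predicates were already the dominant cost in the proof of Theorem \ref{thm:alpha_a_finite}, and composing them with $\mathrm{alt}$ may produce substantial state-space growth. To keep the verification feasible, I would reuse the antisquare and antipalindrome automata already built for Theorem \ref{thm:alpha_a_finite} rather than rebuilding them, and rely on Pecan's intermediate minimization step to compress the intermediate automata. Correctness of the final decision is then guaranteed by Theorem \ref{b_ralpha_2half} and Fact \ref{fact:uniformregulardef}, and the only remaining risk is computational cost rather than soundness.
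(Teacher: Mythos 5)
Your proposal is correct and follows essentially the same route as the paper: the paper likewise defines a predicate \texttt{is\_all\_01}$(a,i,n)$ asserting $c_{a,0}[k]\neq c_{a,0}[k+1]$ on the relevant range (your $\mathrm{alt}$), reuses the antisquare and antipalindrome automata already built for the preceding theorem, and has Pecan verify the two universally quantified implications (in about 76 seconds). The only differences are cosmetic conventions (order versus length of the antisquare).
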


\begin{proof}
    We begin by creating a predicate called \texttt{is\_all\_01} stating that a subword $c_{a,0}[i..i+n]$ is of the form $(01)^*$ or $(10)^*$.
    We do this simply stating that $c_{a,0}[k] \neq c_{a,0}[k+1]$ for all $k$ with $i \leq k < i + n - 1$.
\begin{pecan}
$\texttt{is\_all\_01}$(a,i,n) := 
    forallk. if i <= k & k < i+n-1 then $\$$C[k] !=  $\$$C[k+1]
\end{pecan}

    We can now directly state both parts of the theorem; Pecan proves both in \pecaninline{76.1} seconds.
    
\begin{pecan}
Theorem ("All antisquares are of the form (01)^* or (10)^*", {
foralla. foralli,n. if $\texttt{antisquare}$(a,i,n) then $\texttt{is\_all\_01}$(a,i,n)
}).

Theorem ("All antipalindromes are of the form (01)^* or (10)^*", {
foralla. foralli,n. if $\texttt{antipalindrome}$(a,i,n) then $\texttt{is\_all\_01}$(a,i,n)
}). $\tikz[remember picture,baseline=(lastlstline.base)]{\node[anchor=base] (lastlstline) {};}$
\end{pecan}%
\tikz[remember picture,overlay]{
  \node[anchor=base east,inner sep=0pt] at (lastlstline.base -| current page text area.east) {\qedhere};
}%
\end{proof}

\subsection{Least periods of factors of Sturmian words}
We now use Pecan to give short automatic proofs a result about the least period of factors of characteristic Sturmian words.\newline 

\noindent     The \emph{semiconvergents} $p_{n,\ell}$ and $q_{n,\ell}$ of a continued fraction $[0; a_1, a_2, \ldots]$ are defined so that
    \[
        \frac{p_{n,\ell}}{q_{n,\ell}} = \frac{\ell p_{n-1} + p_{n-2}}{\ell q_{n-2} + q_{n-2}}
    \]
    for $1 \leq \ell < a_n$.

\begin{thm}\label{thm:least-period-semiconvergent}
    Let $p$ be the least period of a factor of $c_{a,0}$. Then $p$ is the denominator of a semiconvergent of $a$; that is $p = q_{n,\ell}$ for some $n$ and $\ell$.
\end{thm}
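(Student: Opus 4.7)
The plan is to follow the pattern used throughout Section \ref{section:pecan} and reduce the statement to a single Pecan sentence, whose decidability is guaranteed by Theorem \ref{b_ralpha_2half} and Fact \ref{fact:uniformregulardef}.

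First I would construct a Pecan predicate \pecaninline{semiconvergent_denom(a, p)} that holds exactly when the $a$-Ostrowski representation of $p$ corresponds to a semiconvergent denominator. Reading off the recurrence $q_{n,\ell} = \ell q_{n-1} + q_{n-2}$ together with the convergents $q_n$ themselves, one sees that $p$ is such a denominator precisely when its $a$-Ostrowski representation, in the notation of Fact \ref{integer_ostrowski}, has the form $0^*\,1\,0^*$ (a convergent $q_k$) or $0^*\,1\,b\,0^*$ with $b\geq 1$ (intermediate case, $b_{n-1}=1,\ b_n=b=\ell$). The constraint $\ell < a_n$ required in the definition is automatic: by Fact \ref{integer_ostrowski}, $b_{n-1}=0$ whenever $b_n=a_n$, so the digit pattern ``$1\,b$'' forces $b < a_n$. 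This shape is easily recognized by a uniform automaton in the $\#$-$v$-Ostrowski encoding.

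Second, I would express ``$p$ is the least period of the factor $c_{a,0}[i..i+n]$'' as a predicate in Pecan by combining a standard period predicate with a universally quantified minimality clause:
\begin{pecan}
factor_period(a,i,n,p) := i>0 & 1<=p & p<=n &
  forallj. if i<=j & j<i+n-p then $C[j] = $C[j+p]

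least_period(a,i,n,p) := factor_period(a,i,n,p) &
  forallq. if factor_period(a,i,n,q) then p<=q
\end{pecan}
The theorem then becomes the Pecan sentence
\begin{pecan}
Theorem ("", {
  foralla. foralli,n,p.
    if least_period(a,i,n,p) then semiconvergent_denom(a,p)
}).
\end{pecan}
which, by uniform $\omega$-regularity of all the ingredients, Pecan can in principle either verify or refute with an explicit counterexample.

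The main obstacle I anticipate is state-space blow-up in the automaton for \pecaninline{least_period}: it carries a nested universal quantifier over all smaller candidate periods $q$, each of which unfolds to its own universally quantified word-equality formula, so a naive construction could become unwieldy. If a direct compile proves too expensive, my fallback plan is to avoid ever materializing the ``least period'' relation as a stand-alone automaton. Instead I would prove the logically equivalent but more automaton-friendly statement: for every triple $(a,i,n,p)$ such that $p$ is a period of $c_{a,0}[i..i+n]$ and no smaller positive integer is, $p$ is a semiconvergent denominator. Splitting the two inner quantifiers across the implication in this way tends to keep Pecan's intermediate automata much smaller while still proving exactly Theorem \ref{thm:least-period-semiconvergent}.
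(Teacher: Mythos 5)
Your proposal is correct and follows essentially the same route as the paper: the paper likewise characterizes semiconvergent denominators by their Ostrowski representations being of the form $[0\cdots01b]_a$ (with $b$ possibly zero), builds a \pecaninline{least_period} predicate as a minimality condition over a period predicate, and has Pecan verify the universally quantified implication. Your observation that the bound $\ell<a_n$ comes for free from the Ostrowski validity condition, and your quantifier-splitting fallback, are sensible implementation remarks but do not change the argument.
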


\begin{proof}
We define when a number $p$ is a least period of a factor of $c_{a,0}$ as an automaton \pecaninline{lp\_occurs}, as follows:
\begin{pecan}
least_period(a,p,i,j) := p = min{n : period(a,n,i,j)}
lp_occurs(a,p) := existsi,j. i>0 & j>0 & least_period(a,p,i,j)
\end{pecan}
It is easy to recognize $a$-Ostrowski representations of denominators of semiconvergents of $a$, because they are simply valid representations of the form $[0\cdots01b]_a$, where $b$ is some valid digit.

\begin{pecan}
Theorem ("", 
    { foralla,p. if lp_occurs(a,p) then semiconvergent_denom(p) }).
\end{pecan}
Pecan proves the theorem in \texttt{5016.77} seconds.
\end{proof}

\noindent A word $w$ is called \textbf{unbordered} if the least period of $w$ is $|w|$. We now are ready to reprove Lemma 8 in Currie and Saari~\cite{Currie2009LeastPO}. This is originally due to de Luca and De Luca \cite{deluca}.
\begin{thm}\label{thm:least_period_unbordered}
    The least period of $c_{a,0}[i..j]$ is the length of the longest unbordered factor of $c_{a,0}[i..j]$.
\end{thm}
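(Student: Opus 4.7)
The plan is to follow the Pecan-based strategy used throughout Section~\ref{section:pecan}. Since the statement is a first-order property of $c_{a,0}$ uniformly in $a$, Theorem~A guarantees decidability, so I only need to assemble the right predicates and let the decision procedure grind. I would reuse the predicates \texttt{period(a,p,i,j)} and \texttt{least\_period(a,p,i,j)} already built in the proof of Theorem~\ref{thm:least-period-semiconvergent}, where \texttt{period(a,p,i,j)} asserts $c_{a,0}[k]=c_{a,0}[k+p]$ for all $i\le k<j-p$, and \texttt{least\_period} is obtained by minimization.

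From these I would introduce two further predicates. First, an unboredness predicate \texttt{unbordered(a,i,j)} asserting that $\texttt{least\_period}(a,j-i,i,j)$ holds, i.e., the least period of $c_{a,0}[i..j]$ equals its length $j-i$. Second, a predicate \texttt{longest\_unbordered(a,i,j,n)} expressing that $n$ is the maximum $j'-i'$ with $i\le i'<j'\le j$ and \texttt{unbordered}$(a,i',j')$. Then Theorem~\ref{thm:least_period_unbordered} becomes the $\mathcal L_c$-sentence
\[
\forall a,i,j,p,n\ \bigl(\texttt{least\_period}(a,p,i,j)\wedge \texttt{longest\_unbordered}(a,i,j,n)\bigr)\rightarrow p=n,
\]
which Pecan can in principle decide directly.

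I would expect the main obstacle to be the size of the intermediate automata. The predicate \texttt{least\_period} already produced a costly computation in Theorem~\ref{thm:least-period-semiconvergent} (over 5000 seconds), and \texttt{longest\_unbordered} layers an additional max over two more index variables on top of that, giving four nested alternations of quantifiers over Ostrowski-encoded integers. If a direct verification blows up, I would split the biconditional into the two inequalities $p\le n$ and $n\le p$: the inequality $n\le p$ (the longest unbordered factor is no longer than the least period) follows by showing that an unbordered factor of length greater than the least period would contradict the definition of period by producing a border, and $p\le n$ reduces to exhibiting an unbordered factor of length $p$ inside $c_{a,0}[i..j]$. Each half can be stated as its own Pecan theorem with fewer nested quantifiers, which typically keeps the automata tractable. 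Pragmatically, I would also try introducing an auxiliary predicate \texttt{has\_unbordered\_of\_len(a,i,j,n)} and proving $\texttt{has\_unbordered\_of\_len}(a,i,j,p)\wedge \forall m>p\ \neg \texttt{has\_unbordered\_of\_len}(a,i,j,m)$ in two passes, to avoid building the maximizing automaton in a single monolithic step.
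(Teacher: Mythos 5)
Your proposal matches the paper's proof: the paper likewise reuses \texttt{least\_period}, defines a predicate \texttt{max\_unbordered\_subfactor\_len(a,i,j,n)} as the maximum $m$ for which some subfactor $c_{a,0}[k..k+m]$ of $c_{a,0}[i..j]$ has least period equal to its length, and then has Pecan verify the sentence asserting this maximum coincides with the least period of $c_{a,0}[i..j]$. Your contingency plans for automaton blow-up are extra engineering detail, but the core decision-procedure argument is the same.
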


\begin{proof}
We have previously defined least periods, so we can easily define unbordered factors. Similarly, it is straightforward to define the longest unbordered subwords of $c_{a,0}$:
\begin{pecan}
max_unbordered_subfactor_len(a,i,j,n) := 
    n = max{m : existsk. i<=k & k+n<=j & least_period(a,n,k,k+n)}
\end{pecan}
\noindent Then the theorem we wish to prove is
\begin{pecan}
Theorem ("",{ foralla,i,j,p. if i>0 & j>i & p>0 then
   least_period(a,p,i,j) iff max_unbordered_subfactor_len(a,i,j,p)
}).
\end{pecan}
\noindent Pecan confirms the theorem is true.
\end{proof}

\subsection{Periods of the length-\texorpdfstring{$n$}{n} prefix} 
In \cite{gabric2020inequality} Gabric, Rampersand and Shallit characterize all periods of the length-$n$ prefix of a characteristic Sturmian word in terms of the lazy Ostrowski representation. We are able implement their argument in Pecan.\newline

\noindent Let $a$ be a real number with continued fraction expansion $[a_0; a_1, a_2, \ldots]$ and convergents $p_k/q_k \in \Q$.
We recall the definition of the \emph{lazy $a$-Ostrowski numeration system}~\cite{epifanio2012sturmian}.
\begin{fact}
    Let $X \in \N$.
    The \emph{lazy $a$-Ostrowski representation of $X$} is the unique word $b_N \cdots b_1$ such that
    \[
        X = \sum_{n = 0}^N b_{n+1} q_n
    \]
    where 
    \begin{enumerate}
        \item $0 \leq b_1 < a_1$;
        \item $0 \leq b_i \leq a_i$ for $i > 1$;
        \item if $b_i = 0$ then $b_{i-1} = a_i$ for all $i > 2$;
        \item if $b_2 = 0$, then $b_1 = a_1 - 1$;
    \end{enumerate}
\end{fact}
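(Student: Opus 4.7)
The plan is to follow the template of Fact~\ref{integer_ostrowski} for the standard Ostrowski system, relying on the identity $q_{n+1} = a_{n+1} q_n + q_{n-1}$ which lets us exchange one copy of $q_{n+1}$ for $a_{n+1}$ copies of $q_n$ together with one copy of $q_{n-1}$.

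For existence, I would start from the standard $a$-Ostrowski representation $c_1 \cdots c_M$ provided by Fact~\ref{integer_ostrowski} and iteratively rewrite it into lazy form. Scanning top-down, whenever one of the conditions (3) or (4) is violated at some position $i$, I would decrement the nearest nonzero digit strictly above position $i$ and redistribute the removed unit via the identity above, pushing mass downward through the word. Each step strictly decreases a positional weight such as $\sum_k k \cdot b_k$, so the rewriting terminates, and a direct case check shows that the terminal word satisfies all four conditions, with the treatment of position~$1$ aligning with the asymmetric bound in condition~(4).

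For uniqueness, I would argue by a sandwich-style comparison. Given two candidate lazy representations $b_N \cdots b_1$ and $b'_{N'} \cdots b'_1$ of the same $X$ (with leading digits nonzero), I would show by induction from the top that the two words must agree in length and leading digit. The key sub-claim is that for fixed $(N, b_N)$ the conditions (1)--(4) pin $\sum_{n=0}^{N-1} b_{n+1} q_n$ into an interval whose position depends monotonically on the pair $(N, b_N)$; these intervals partition $\N$, so $(N, b_N)$ is determined by $X$. Subtracting the leading term and recursing on $X - b_N q_{N-1}$ then forces equality of all remaining digits.

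The main obstacle will be bookkeeping around the asymmetric boundary at positions~$1$ and~$2$. Condition~(4) treats $b_1$ differently from the interior digits (replacing the natural saturation $a_1$ by $a_1 - 1$ because of the strict inequality in condition~(1)), which forces special care both in the termination analysis of the break-down algorithm (verifying that it never generates a forbidden pattern when the cascade reaches the bottom two positions) and in the sandwich argument for uniqueness (computing the exact interval endpoints requires applying condition~(4) at the base rather than condition~(3)). A cleaner alternative, if this boundary analysis becomes unwieldy, would be to construct an explicit bijection between standard and lazy representations by matching maximal ``saturated runs'' in one form with ``zero runs'' in the other, thereby reducing both existence and uniqueness directly to Fact~\ref{integer_ostrowski}.
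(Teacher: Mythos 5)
The paper does not prove this statement at all: it is recalled from Epifanio et al.\ with a citation, so there is no in-paper argument to compare yours against, and your proposal has to stand on its own. As a plan it follows the standard route (rewrite the greedy representation of Fact~\ref{integer_ostrowski} into lazy form via $q_{n+1}=a_{n+1}q_n+q_{n-1}$, then a sandwich argument for uniqueness), and the ``saturated runs versus zero runs'' bijection you mention at the end is indeed the cleanest way to reduce everything to Fact~\ref{integer_ostrowski}. But as written there are two concrete problems. First, your termination measure points the wrong way: replacing one unit of $q_{n+1}$ by $a_{n+1}$ units of $q_n$ plus one unit of $q_{n-1}$ changes $\sum_k k\,b_k$ by $(n+1)a_{n+1}+n-(n+2)\ge 2n-1>0$, so the positional weight strictly \emph{increases} under the downward cascade. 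This is repairable --- the set of representations of $X$ with $0\le b_i\le a_i$ is finite since $q_N\le X$ bounds $N$, so an increasing bounded potential (or the fact that each step makes the word strictly colexicographically smaller) still gives termination --- but the claim as stated is false. Second, the rewriting step adds $a_{n+1}$ to the digit at position $n+1$; if that digit was already nonzero, the result exceeds $a_{n+1}$ and violates condition (2), so the cascade needs a normalization/carry rule (or the rewriting must only be applied where the receiving digit is small enough), which is exactly the bookkeeping your sketch defers.

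For uniqueness, the sketch is missing the one quantitative ingredient that makes the interval argument work: the telescoping identity $\sum_{n=0}^{N} a_{n+1}q_n = q_{N+1}+q_N-1$ (from $a_{n+1}q_n=q_{n+1}-q_{n-1}$), which is what lets you compute the exact minimum and maximum of a lazy-admissible tail $\sum_{n=0}^{N-1}b_{n+1}q_n$ and verify that the resulting intervals, indexed by the leading digit, tile $\N$ without overlap. Without stating and using this identity (and its variant at the bottom two positions, where condition (4) replaces $a_1$ by $a_1-1$), the assertion that the intervals ``partition $\N$'' is exactly the content you are supposed to prove. So the architecture is right, but the proof is not yet there: fix the direction of the potential, handle the digit overflow in the cascade, and carry out the interval computation explicitly.
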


\begin{thmC}[{\cite[Theorem 6]{gabric2020inequality}}]\label{thm:periods_lazy_ostrowski}
Let $a$ be an irrational real number, and define $Y_n$ to be the length $n$ prefix of $c_{a,0}$.
Define $\text{PER}(n)$ to be the set of all periods of $Y_n$.
Then
\begin{enumerate}
    \item The number of periods of $Y_n$ is equal to the sum of the digits in the lazy Ostrowski representation of $n$.
    \item\label{num-period-An} Let the lazy Ostrowski representation of $n$ be $b_1 \cdots b_N$, and define
    \[
        A(n) = \left\{ i q_j + \sum_{j < k < N} b_{k+1} q_k :  1 \leq i \leq b_{j+1}~\text{and}~ 0 \leq j \leq N \right\}
    \]
    Then $\text{PER}(n) = A(n)$.
\end{enumerate}
\end{thmC}

\begin{proof}
    As in~\cite{gabric2020inequality}, we note that it is sufficient to prove only (\ref{num-period-An}).
    We begin by defining the sets, indexed by the slope \pecaninline{a}.
    The set of periods of subwords of $c_{a,0}$ can be defined by the formula $p > 0 \land c_{a,0}[i..j-p] = c_{a,0}[i + p..j]$, allowing us to create an automaton recognizing this set, which we call \pecaninline{period(a,p,i,j)}.
    This automaton is more expressive what what we need for this theorem, so we then simply take the periods of the prefixes of $c_{a,0}$, as follows:
\begin{pecan}
p is $\$$Per(a,n) := existss. s = 1 & period(a,p,s,n+1)
\end{pecan}
To define $A(n)$, we first define several auxiliary automata and notions. Earlier, we defined addition automata for the (greedy) Ostrowski numeration system, but we can also easily handle the lazy Ostrowski numeration system using an automaton recognizing
    \[
        \left\{ (a,x,y) : x,y \in A_a^{\rm fin}, x = \#x_1\#x_2\#\cdots, y = \#y_1\#y_2\#\cdots, \sum_{i=0}^{\infty} x_{i+1} q_i = \sum_{i=0}^{\infty} y_{i+1} q_i \right\}
    \]
    which we call \pecaninline{ost_equiv(a,x,y)}.
    The \pecaninline{lazy_ostrowski(a,n)} automaton checks whether \pecaninline{n} is a valid lazy \pecaninline{a}-Ostrowski representation.
    These automata allow us to convert between the two systems.
    
    To define $A(n)$, we break it up into smaller pieces; first, we wish to recognize the set 
    \[
        B(n) = \{ iq_j : 1 \leq j \leq b_{j+1}~\text{and}~0 \leq j \leq N \}.
    \]
    For each $x \in (\#(0|1)^*)^\omega$, denote by $|x|_{\rm fin}$ the length of the longest prefix $y$ of $x$ such that $x = yz$ where $z \in (\#0^*)^\omega$, or $\infty$ if there is no such prefix.
    We then create the following automata:
    \begin{itemize}
        \item \pecaninline{as_long_as(x,y)} recognizing the set $\{ (x,y) : |x|_{\rm fin} \geq |y|_{\rm fin} \}$.
    
        \item \pecaninline{has_1_digit(x)} recognizing the set $(\#0^*)^* ((\#0^*)|(\#(0|1)^* 1 (0|1)^*)) (\#0^*)^\omega$, i.e., words of the form $\#w_1\#w_2\#\cdots$ such that there is at most one $w_i$ such that $w_i \not\in 0^*$.
        
        \item \pecaninline{bounded_by(x,y)} recognizing the set 
        \[
            \{ (x,y) : \text{$x$ and $y$ are aligned}, x = \#x_1\#x_2\#\cdots, y = \#y_1\#y_2\#\cdots, \forall i. x_i \leq_{\rm lex} y_i\}
        \]
    \end{itemize}
    Then we can recognize the set $B(n)$ from above by 
\begin{pecan}
i > 0 & has_1_digits(i) & as_long_as(n,i) & bounded_by(i,n_l)
\end{pecan}
    where \pecaninline{n_l} is the lazy \pecaninline{a}-Ostrowski representation of \pecaninline{n}.
    
    The last automaton we need to create is \pecaninline{suffix_after(x,y,s)}, recognizing the set $\{ (x,y,s) : s = 0^{|x|_{\rm fin}} \cdot y[|x|_{\rm fin}..] \}$.
    We need this to be able to recognize the set of \pecaninline{a}-Ostrowski representations
    \[
        \{ m : 0 \leq j \leq N, m_l = 0^j n_l[j..N], \text{$m_l$ is the lazy \pecaninline{a}-Ostrowski representation of $Z_a(m)$} \}
    \]
    where $n_l$ is the lazy \pecaninline{a}-Ostrowski representation of $Z_a(n)$.
    
    Finally, we can put everything together and define $A(n)$, again indexed by the slope $a$, as:
\begin{pecan}
p is $\$$A(a,n) :=
    existsn_l,m_l. lazy_ostrowski(a,n_l) & ost_equiv(a,n,n_l) &
    existsm. ost_equiv(a,m_l,m) &
    existsi. i > 0 & has_1_digit(i) & as_long_as(n,i) &
        bounded_by(i,n_l) & suffix_after(i,n_l,m_l) &
        i + m = p
\end{pecan}

Finally, we can state the theorem directly, which Pecan confirms is true.
\begin{pecan}
Theorem ("6 (b)", { foralla. forallp,n. p is $\$$Per(a,n) iff p is $\$$A(a,n) 
}). $\tikz[remember picture,baseline=(lastlstline.base)]{\node[anchor=base] (lastlstline) {};}$
\end{pecan}%
\tikz[remember picture,overlay]{
  \node[anchor=base east,inner sep=0pt] at (lastlstline.base -| current page text area.east) {\qedhere};
}%
\end{proof}

\section{Conclusion and Outlook}

\subsection{Scalar multiplication}
Recall that for $\alpha \in \R_{>0}$ we use $\cal{R}_{\alpha}$ to denote the $\cal{L}_{m,a}$-structure $(\R,<,+,\Z,a\Z)$.  Let $
\lambda_{\alpha} : \R \to \R$ be the function mapping $x$ to $\alpha x$, and let $\mathcal{S}_{\alpha}$ denote the structure $(\R,<,+,\Z,\lambda_\alpha)$. It is clear that every set definable in $\cal{R}_{\alpha}$ is also definable in $\cal{S}_{\alpha}$. The inverse is known to be true for some $\alpha$: 
    By Hieronymi \cite[Theorem D]{H-multiplication}, the function $\lambda_{\alpha}$ is definable in $\cal{R}_{\alpha}$ if $\alpha=\sqrt{d}$ for some $d\in \Q$, and thus in this situation every set definable in $\cal{S}_{\alpha}$ is also definable in $\cal{R}_{\alpha}$.

\begin{prop}\label{prop:existsa}
There is $\alpha \in \R$ such that $\cal{R}_{\alpha}$ does not define $\lambda_{\alpha}$.
\end{prop}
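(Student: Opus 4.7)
The plan is to argue by contradiction via a cardinality argument that exploits the uniform $\omega$-regularity of the family $(\mathcal R_\alpha)_{\alpha\in\Irr}$ established in Theorem~\ref{b_ralpha_2half}. Suppose, toward a contradiction, that $\lambda_\alpha$ is definable in $\mathcal R_\alpha$ for every $\alpha\in\Irr$. The element $\alpha$ itself is always in the definable closure $\mathrm{dcl}^{\mathcal R_\alpha}(\emptyset)$, as it is the least positive element of the definable set $\alpha\Z$; hence the assumption forces $\alpha^2=\lambda_\alpha(\alpha)\in\mathrm{dcl}^{\mathcal R_\alpha}(\emptyset)$ for every irrational $\alpha$.

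For each $\mathcal L_{m,a}$-formula $\chi(y)$, let $f_\chi:\Irr\dashrightarrow\R$ be the partial function sending $\alpha$ to the unique $y$ with $\mathcal R_\alpha\models\chi(y)$ whenever such a $y$ exists. Since $\mathcal L_{m,a}$ has only countably many formulas, the assumption yields, by pigeonhole, some $\chi$ with $S_\chi:=\{\alpha\in\Irr:f_\chi(\alpha)=\alpha^2\}$ uncountable. The heart of the argument is the following rigidity statement, which implies that each $S_\chi$ must in fact be at most countable, producing the desired contradiction.

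\textbf{Key Lemma.} \emph{For every $\mathcal L_{m,a}$-formula $\chi(y)$, the partial function $f_\chi$ is piecewise $\Q$-affine in $\alpha$: there is a partition of $\mathrm{dom}(f_\chi)$ into finitely or countably many pieces, each recognizable modulo $\sim_\#$, on which $f_\chi(\alpha)=p+q\alpha$ for some fixed $p,q\in\Q$.}

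Granted the Key Lemma, on each piece the condition $f_\chi(\alpha)=\alpha^2$ becomes the quadratic $\alpha^2-q\alpha-p=0$, with at most two real roots; so $S_\chi$ is a countable union of finite sets and hence at most countable, contradicting its uncountability. Therefore some $\alpha\in\Irr$ has $\alpha^2\notin\mathrm{dcl}^{\mathcal R_\alpha}(\emptyset)$, and for such an $\alpha$ the function $\lambda_\alpha$ cannot be definable in $\mathcal R_\alpha$.

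To prove the Key Lemma I would use the uniform family $(\mathcal D_v)_{v\in R}$ from Theorem~\ref{b_ralpha_2half}. By Fact~\ref{fact:uniformregulardef}, the relation $\{(v,w):\mathcal D_v\models\chi(w)\}$ is $\omega$-regular; restricting to those $v$ for which the fibre is a singleton yields the graph of $f_\chi$ in encoded form. Unwinding the presentation of $\mathcal R_{\alpha(v)}$ from Lemma~\ref{b_ralpha_half}, an element of $\mathcal D_v$ has the shape $T_v(s,t)=g_v(s)+O_v(t)$, where $g_v(s)=\alpha(v)Z_v(s)-O_v(s)$ is an integer translate and $O_v(t)\in I_{\alpha(v)}$. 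The eventual behaviour of an accepting run of the recognizing Büchi automaton on $w=(s,t)$ determines both a rational translation part and a rational multiple of $\alpha(v)$; since the automaton has finitely many states, only finitely many pairs of rational coefficients $(p,q)$ arise, and each accepting state-pattern gives rise to one piece of the partition. The main obstacle is exactly this step: carrying out the state-by-state analysis of how $\omega$-regular subsets of the Ostrowski encoding correspond to real values, and verifying rigorously that every uniformly definable element is a $\Q$-affine combination of $1$ and $\alpha$ with coefficients read off from finitely many automaton states. Once this Key Lemma is in hand, the cardinality bookkeeping is routine.
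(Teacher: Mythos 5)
Your overall skeleton (contradiction, $\alpha\in\mathrm{dcl}(\emptyset)$, pigeonhole over countably many formulas, counting roots of $\alpha^2=p+q\alpha$) is fine, but the entire weight of the argument rests on the Key Lemma, which you do not prove and which is almost certainly false. The structure $\cal{R}_{\alpha}$ is far more expressive than your lemma allows: as cited in Section 8.2 of this paper, \cite[Theorem D]{H} shows that $\cal{R}_{\alpha}$ defines an isomorphic copy of the standard model of the monadic second-order theory of $(\N,+1)$ for every irrational $\alpha$. In that interpretation, subsets of $\N$ are coded by reals via their Ostrowski digits, and every ultimately periodic subset of $\N$ is an MSO-definable singleton; consequently $\mathrm{dcl}^{\cal{R}_{\alpha}}(\emptyset)$ contains elements such as $x^*=\sum_{k\text{ odd}}\beta_k$ (the element of $I_\alpha$ whose Ostrowski digits are $1$ at even positions and $0$ at odd positions). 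Such infinite digit-pattern elements are limits of the $\Z+\Z\alpha$-combinations you have in mind, and on any uncountable piece recognizable modulo $\sim_{\#}$ (say, fixing a finite prefix of the continued fraction and varying the tail) the map $\alpha\mapsto x^*(\alpha)$ is not of the form $p+q\alpha$ for fixed rationals. So the dichotomy ``piecewise $\Q$-affine'' does not capture $\mathrm{dcl}(\emptyset)$, and your state-by-state analysis of the Büchi automaton cannot yield it: the fibre over $v$ being a singleton does not make the witnessing word ultimately periodic, since it may be an arbitrary transducer-image of $v$. The correct characterization of uniformly definable elements is something like ``reals whose Ostrowski expansion is $\omega$-regularly computable from the code of $\alpha$,'' and showing $\alpha^2$ is not of that form is itself a nontrivial Diophantine problem, so the gap is not easily patched.

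For comparison, the paper's proof sidesteps definable closure entirely. It uses \cite[Theorem A]{H-multiplication}: ${\rm FO}(\cal{S}_{\alpha})$ is undecidable whenever $\alpha$ is not quadratic. Hence it suffices to exhibit a single non-quadratic $\alpha$ with ${\rm FO}(\cal{R}_{\alpha})$ decidable, since a parameter-free definition of $\lambda_\alpha$ would reduce ${\rm FO}(\cal{S}_{\alpha})$ to ${\rm FO}(\cal{R}_{\alpha})$. Such an $\alpha$ is produced explicitly: take the code word $v$ whose continued fraction coefficients are supported on the factorials; by Elgot--Rabin its acceptance problem is decidable, so Fact \ref{fact:parameter} gives decidability of ${\rm FO}(\mathcal{D}_v)\cong{\rm FO}(\cal{R}_{\alpha(v)})$, while the unbounded partial quotients force $\alpha(v)$ to be non-quadratic. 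If you want to keep a cardinality-style argument, you would at minimum need to replace your Key Lemma with a correct description of the uniformly definable elements and then prove $\alpha^2$ escapes it; the undecidability route is much shorter.
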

\begin{proof}
By \cite[Theorem A]{H-multiplication} the theory $\Th(\cal{S}_{\alpha})$ is undecidable when $\alpha$ is not quadratic. Thus is enough to find a non-quadratic $\alpha$ such that the theory $\Th(\cal{R}_{\alpha})$ is decidable. To do so, it suffices by Theorem 5.1 to find some $v\in R$ such that ${\rm FO}(\mathcal{D}_v)$ is decidable but $\alpha(v)$ is non-quadratic. 

\noindent Let $U$ be the set $\{i! : i \in \N\}$. Define $u=u_1u_2\cdots \in \{0,1\}^{\omega}$ such that $u_{i} = 1$ if $i\in U$, and $u_i = 0$ otherwise. 
Let $v=v_1v_2\cdots\in \SigmaH^{\omega}$ be such that
\[
v_i = \begin{cases}
\# & i = 1, \\
\# & i > 2 \text{ and } u_i=1,\\
1 & \, \text{otherwise.}
\end{cases}
\]
 That is, 
 \[
 v = \#1111\#11111111111111111\#1\cdots.
 \]
 By Elgot and Rabin \cite[Proof of Theorem 5]{ER}, the acceptance problem for $u$ is decidable. 
This implies that the acceptance problem for $v$ is decidable as well. Thus the theory ${\rm FO}(\mathcal{D}_v)$ is decidable by Fact \ref{fact:parameter}. However, the coefficients of the continued fraction expansion of $\alpha(v)$ are unbounded. Since quadratic numbers have periodic continued fractions, we conclude that $\alpha(v)$ is not quadratic.
\end{proof}

\noindent As argued in the proof above, it follows from Fact \ref{fact:parameter} that for every $\alpha\in \Irr$ the theory $\Th(\cal{R}_{\alpha})$ is decidable whenever there is $v
\in R$ such that the acceptance problem for $v$ is decidable and $\alpha(v)=\alpha$. We leave it as an open question whether this sufficient condition is also necessary. It would be interesting to know whether there are any natural non-quadratic numbers, like $e$ or $\pi$, for which this condition is satisfied.\newline

\noindent Recall that $\cal{L}_m$ is the signature of $\Th(\R,<,+)$ together with a unary precidate symbol $P$. Let $\cal{L}_{m,\lambda}$ be the extension of $\cal{L}_m$ by a unary functions symbol $\lambda.$ We consider $\cal{S}_\alpha$ now as an $\cal{L}_{m,\lambda}$-structure. Let $\cal{K}_{\lambda}$ be the class of $\cal{L}_{m,\lambda}$-structures $\{\cal{S}_{\alpha}\ : \ \alpha \in \Irr\}$. By Proposition \ref{prop:existsa} there is no hope of using Theorem B to deduce the decidability of the theory $\Th(\cal{K}_{\lambda})$. Indeed, we can show the following.

\begin{prop}
The theory $\Th(\cal{K}_{\lambda})$ is undecidable.
\end{prop}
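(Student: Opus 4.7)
The plan is to prove undecidability by an effective many-one reduction from $\Th(\mathcal{S}_{\alpha_0})$ for a fixed non-quadratic algebraic $\alpha_0 \in \Irr$ whose minimal polynomial can be written as a single $\mathcal{L}_{m,\lambda}$-formula. A convenient choice is
\[
\alpha_0 \;:=\; 2^{-1/3} \in (0,1)\setminus \Q,
\]
the unique positive real root of $2x^3 - 1$. Since $\alpha_0$ has degree $3$, it is not quadratic, so by \cite[Theorem A]{H-multiplication} (already quoted above) the theory $\Th(\mathcal{S}_{\alpha_0})$ is undecidable.

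The main observation is that inside every $\mathcal{S}_\alpha \in \mathcal{K}_\lambda$ the additive identity $0$ and the integer $1$ (the least positive element of $\Z$) are $\mathcal{L}_m$-definable, so the term $\lambda(1)$ picks out the scalar $\alpha$ itself. Writing $\lambda^k(1)$ for the $k$-fold iterate and $2z$ for $z+z$, I will consider the $\mathcal{L}_{m,\lambda}$-sentence
\[
\mu \;:=\; \bigl(2\,\lambda^3(1) = 1\bigr) \wedge \bigl(\lambda(1) > 0\bigr).
\]
Because $2x^3 = 1$ has exactly one real solution, namely $\alpha_0$, I have $\mathcal{S}_\alpha \models \mu$ if and only if $\alpha = \alpha_0$, uniformly across $\alpha \in \Irr$.

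For any $\mathcal{L}_{m,\lambda}$-sentence $\psi$, the implication $\mu \to \psi$ is vacuously satisfied in every $\mathcal{S}_\alpha$ with $\alpha \neq \alpha_0$ (since $\mu$ fails there), whereas $\mathcal{S}_{\alpha_0} \models \mu \to \psi$ iff $\mathcal{S}_{\alpha_0} \models \psi$. Consequently
\[
\mu \to \psi \;\in\; \Th(\mathcal{K}_\lambda) \;\iff\; \psi \;\in\; \Th(\mathcal{S}_{\alpha_0}).
\]
Since $\psi \mapsto \mu \to \psi$ is computable, a decision procedure for $\Th(\mathcal{K}_\lambda)$ would yield one for $\Th(\mathcal{S}_{\alpha_0})$, contradicting Hieronymi's result. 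Hence $\Th(\mathcal{K}_\lambda)$ is undecidable.

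There is essentially no hard step: all real content is outsourced to the quoted undecidability of $\Th(\mathcal{S}_\alpha)$ for non-quadratic $\alpha$, together with the observation that a suitable algebraic $\alpha_0 \in \Irr$ is uniquely pinned down inside $\mathcal{K}_\lambda$ by its minimal polynomial plus a sign condition. The only mild obstacle is the routine uniform $\mathcal{L}_{m,\lambda}$-definability of $0$, $1$, and the scalar $2$, which is immediate from the standard Presburger definitions.
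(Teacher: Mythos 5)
Your proof is correct, but it takes a genuinely different route from the paper's. The paper keeps the whole class in play: it writes down a sentence $\psi$ asserting that $\alpha$ is not quadratic (via the non-existence of a nontrivial integer relation $\lambda(\lambda(x_1))+\lambda(x_2)+x_3=0$), and then invokes the halting-problem encoding of \cite{HNP} to get, for each input $x$ to a universal Turing machine, a sentence $\varphi_x$ with $\cal{S}_\alpha\models\varphi_x$ iff $U$ halts on $x$, \emph{simultaneously for all non-quadratic $\alpha$}; undecidability follows from $\Th(\cal{K}_{\lambda})\models \psi\rightarrow\varphi_x$ iff $U$ halts on $x$. This requires verifying (the paper does so in a footnote) that the sentences $\varphi_x$ produced in \cite{HNP} do not depend on the particular $\alpha$. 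You instead pin down a single non-quadratic $\alpha_0=2^{-1/3}$ by the sentence $\mu$ expressing its minimal polynomial at $\lambda(1)=\alpha$, and reduce $\Th(\cal{S}_{\alpha_0})$ to $\Th(\cal{K}_{\lambda})$ via $\psi\mapsto(\mu\rightarrow\psi)$, using the undecidability of $\Th(\cal{S}_{\alpha_0})$ purely as a black box. The definability points you flag ($0$ as the additive identity, $1$ as the least positive element of the predicate $P$, doubling as $z+z$) are indeed routine, and since every $\alpha\in\Irr$ is positive the sign condition in $\mu$ is even redundant. Your version is more modular, needing only the stated form of \cite[Theorem A]{H-multiplication} rather than an inspection of the construction in \cite{HNP}; the paper's version avoids privileging any particular algebraic number and exhibits the undecidability as living uniformly over the whole non-quadratic part of the class. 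Both are complete proofs.
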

\begin{proof}
Consider the $\cal{L}_{m,\lambda}$-sentence $\psi$ 
\[
\forall x_1  \forall x_2 \forall x_3 \ (\bigwedge_{i=1}^3 P(x_i) \wedge \bigvee_{i=1}^3 x_i \neq 0) \rightarrow (\lambda(\lambda(x_1)) + \lambda(x_2) + x_3 \neq 0).   
\]
Hence
\[
\cal{S}_{\alpha} \models \psi \text{ if and only if $\alpha$ is not quadratic.}
\]
Consider $U = (Q, \Sigma, \sigma_1, \delta, q_1, q_2)$ be the universal 1-tape Turing machine with 8 states and 4 symbols as defined by Neary and Woods \cite{NW}. By the proof of \cite[Theorem 7.1]{HNP}\footnote{In \cite{HNP} it is only stated that for every non-quadratic $\alpha$ we can find such an $\cal{L}_{m,\lambda}$-sentence $\varphi_x$. However, it is clear from the given construction that the sentence does not depend on the particular $\alpha$.}, given an input $x \in \Sigma^*$, there is an $\cal{L}_{m,\lambda}$-sentence $\varphi_x$ such that for every non-quadratic $\alpha$
\[
\cal{S}_{\alpha}\models \varphi_x\text{ if and only if $U$ halts on input $x$.}
\]
Combining this, we have that given an input $x \in \Sigma^*$
\[
\Th(\cal{K}_{\lambda})\models \psi \rightarrow \varphi_{x}\text{ if and only if $U$ halts on input $x$.}
\]
Thus $\Th(\cal{K}_{\lambda})$ is undecidable.
\end{proof}

\noindent Let $\cal{K}_{q}$ be the class of all $\cal{L}_{m,a}$-structures $\cal{R}_{\alpha}$ with $\alpha\in \Irr$ quadratic, and similarly, let $\cal{K}_{\lambda,q}$ be the class of all $\cal{L}_{m,\lambda}$-structures $\cal{S}_{\alpha}$ with $\alpha\in \Irr$ quadratic. We leave it as an open question whether the theories $\Th(\cal{K}_{q})$ and $\Th(\cal{K}_{\lambda,q})$ are decidable. It is unlikely that that decidability of the latter theory could be deduced from the decidability of the theory $\Th(\cal{K}_{q})$,  because the definition of multiplication by $\sqrt{d}$ in the proof of \cite[Theorem D]{H-multiplication} depends on $d$.

\subsection{Computational complexity} By \cite[Theorem D]{H}, the structure $\mathcal{R}_{\alpha}$ defines an isomorphic copy of the standard model of the monadic second-order theory of $(\N,+1)$ whenever $\alpha \in \Irr$. Hence there can not be a decision algorithm for $\Th(\mathcal{K})$ whose computational complexity is in general lower than the complexity of the decision algorithm presented here. See \cite{HNP} for more detailed results for $\Th(\mathcal{S}_{\alpha})$ when $\alpha$ is quadratic.
It would still be interesting to know whether improvements can be obtained for specific fragments of these theories.\newline

\noindent If we are only interested in deciding statements about Sturmian words, we only need decidability of the less expressive theories $\mathcal{K}_{\operatorname{sturmian}}$ and $\mathcal{K}_{\operatorname{char}}$. Here we know very little about lower bounds for the computational complexity of these decision problems. In particular, we do not even know whether an analogue of \cite[Theorem D]{H}, stating the definability of an isomorphic copy of the standard model of the weak monadic second-order theory of $(\N,+1)$, holds for $\mathcal{N}_{\alpha,\rho}$, when $\alpha\in \Irr$.\newline

\noindent Better results are likely obtainable when dropping the order relation. For $
\alpha \in \Irr$, consider $\mathcal{Z}_{\alpha}:=(\Z,+,0,1,f_{\alpha})$, where $f_{\alpha}: \Z \to \Z$ is the function mapping $x$ to $\lfloor \alpha x \rfloor$. Khani and Zarei \cite{KhaniZarei} and Khani, Valizadeh and Zarei \cite{KVZ} prove quantifier-elimination results for such structures that have the potential to produce more efficient decision algorithms (see also G\"{u}nayd\i n and \"{O}zsahakyan \cite{GO}). However, the usual order relation of $\Z$ is unlikely to be definable in such structures, and therefore this setting might not be particularly useful to decide statements about Sturmian words.

\bibliographystyle{alphaurl}
\bibliography{biblio}

\end{document}